\def\Wigner{{\mathcal W}}
\def\Husimi{{\mathrm{H}}}
\def\Sobolev{{\mathsf{H}}}
\def\balpha{\alpha}
\def\bbC{{\mathbb{C}}}
\def\bbN{{\mathbb{N}}}
\def\Op{\mathrm{Op}}
\def\trace{\operatorname*{trace}}
\def\Id{\mathrm{Id}}
\def\ad{\mathrm{ad}}
\def\Real{{\mathbb R}}
\def\eps{{\varepsilon}}
\newcommand{\diff}{\mathop{}\!\mathrm{d}}
\def\lsim{{\,\lesssim\,}}
\newcommand{\mcal}[1]{{\mathcal{#1}}}
\def\bra#1{\mathinner{\langle{#1}|}}
\def\ket#1{\mathinner{|{#1}\rangle}}
\def\braket#1{\mathinner{\langle{#1}\rangle}}
\def\Braket#1{\!\mathinner{\langle{#1}\rangle}\!}
\numberwithin{equation}{section}
\declaretheorem[numberwithin=section]{theorem}
\declaretheorem[numberlike=theorem]{lemma}
\declaretheorem[numberlike=theorem]{definition}
\declaretheorem[numberlike=theorem]{corollary}
\declaretheorem[numberlike=theorem]{proposition}
\newenvironment{customcor}[1]
  {\innercustomcor}
  {\endinnercustomcor}
\title{An Egorov Theorem for Wasserstein Distances}
\author{Jordan Cotler}
\address{Department of Physics, Harvard University, Cambridge, MA 02138, USA}
\email{jcotler@fas.harvard.edu}
\author{Felipe Hern\'{a}ndez}
\address{Department of Mathematics, Penn State University, University Park, PA 16802 USA}
\email{felipeh@psu.edu}
\date{\today}
\begin{document}

\begin{abstract}
We prove a new version of Egorov's theorem formulated in the Schr\"odinger picture of quantum mechanics, using the $p$-Wasserstein metric applied to the Husimi functions of quantum states.  The special case $p=1$ corresponds to a ``low-regularity'' Egorov theorem, while larger values $p > 1$ yield progressively stronger estimates.  As a byproduct of our analysis, we prove an optimal transport inequality analogous to the main result of Golse and Paul~\cite{GP2017} in the context of mean-field many-body quantum mechanics.
\end{abstract}

\maketitle

\section{Introduction}
Our universe obeys the laws of quantum mechanics, although our everyday experience portrays a classical world obeying Newtonian laws.  Egorov's theorem forms a mathematical bridge between these two descriptions by demonstrating that quantum observables evolving by the Schr\"{o}dinger equation are well-approximated by observables evolving by
classical equations of motion.
More precisely, Egorov's theorem (see~\cite{hagedornjoye,silvestrov2002ehrenfest,bouzouina2002uniform}, also the textbook reference~\cite[Theorem 11.1]{zworski}) states that for sufficiently smooth observable functions
$a_0\in C^\infty(\Real^{2D})$, one has
\[
e^{i\hat{H}T/\hbar} \Op_\hbar(a_0) e^{-i\hat{H}T/\hbar} = \Op_\hbar(a_T) + O(e^{CT} \hbar)\,,
\]
where $\hat{H}=\Op_\hbar(H)$ is the Weyl quantization of a smooth Hamiltonian $H\in C^\infty(\Real^{2D})$ and $a_t$
solves Hamilton's equation
\[
\partial_t a = \{H,a\}_{\rm PB}\,,
\]
where $\{f,g\}_{\rm PB}$ is the Poisson bracket.

This quantum-classical correspondence naturally aligns with the Heisenberg picture, which focuses on the evolution of \textit{observables} rather than \textit{states} as in the Schr\"{o}dinger picture.  In this paper we prove a new version of Egorov's theorem that uses optimal transport metrics to compare the evolution of quantum and classical states.

\begin{theorem}
\label{thm:Wp-egorov}
Let $H\in C^\infty(\Real^{2D})$ be an \textit{admissible} Hamiltonian as in Definition~\ref{def:admissibleH}.
For any $1 \le p < \infty$ there exists a constant $C_p$, depending only on $p$, $D$ and $H$, such that for all density matrices $\rho$ and all $T \ge 0$,
\begin{equation}
d_{W_p}\big(\Husimi_{\mcal U_T\rho}\,, \Phi_T\Husimi_\rho\big)
\le e^{C_p T}\,\hbar^{1/2},
\end{equation}
where $\Husimi_\rho$ is the Husimi function defined in~\eqref{eq:Husimi-def}, $W_p$ is the
$p$-Wasserstein metric, $\mathcal{U}_T \rho := e^{- i \hat{H} T/\hbar} \rho\, e^{i \hat{H} T/\hbar}$ is the quantum evolution of $\rho$, and $\Phi_T$ is the classical Hamiltonian flow map.
\end{theorem}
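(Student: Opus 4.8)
The plan is to run a quantum--classical optimal transport argument in the spirit of Golse--Paul~\cite{GP2017}, but with a one-parameter family of $p$-th order cost functionals; the ``optimal transport inequality'' promised in the abstract is precisely the stability estimate that comes out. For a density matrix $\rho$ and a Borel probability measure $\mu$ on $\Real^{2D}$, introduce the quantum--classical pseudometric
$$
\mathcal E_p(\rho,\mu)^p \;:=\; \inf_{Q}\ \int_{\Real^{2D}} \trace\!\big(Q(x)\,R_x^{\,p}\big)\,d\mu(x),
$$
where the infimum is over measurable families of nonnegative operators $Q(x)$ with $\trace Q(x)=1$ and $\int Q(x)\,d\mu(x)=\rho$, and $R_x := \big(\sum_{j=1}^{D}(\hat q_j-x_j^{q})^2+(\hat p_j-x_j^{p})^2+\hbar\big)^{1/2}$ is a regularized ``distance-to-$x$'' operator, the additive $\hbar$ serving to keep fractional powers and their commutators under control at low regularity.

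The argument has four ingredients. \emph{(1) Lower bound by the Wasserstein distance of Husimi functions.} Given a coupling $Q$, heterodyne-measuring it produces a genuine classical coupling $\pi(dy,dx) := (2\pi\hbar)^{-D}\bra{y}Q(x)\ket{y}\,dy\,d\mu(x)$ between $\Husimi_\rho$ and $\mu$; the two marginal identities follow from the resolution of identity $(2\pi\hbar)^{-D}\int\ket{y}\bra{y}\,dy=\Id$ together with $\trace Q(x)=1$. Its transport cost is $\int\trace(Q(x)A_x)\,d\mu(x)$ with $A_x$ the anti-Wick quantization of $y\mapsto|y-x|^p$, and a Gaussian-mollification estimate gives $A_x\le C_p R_x^{\,p}$ up to lower-order terms, so that $d_{W_p}(\Husimi_\rho,\mu)\le C_p\,\mathcal E_p(\rho,\mu)+C_p\hbar^{1/2}$. \emph{(2) Initialization.} An explicit coupling built from the heterodyne dilation of $\rho$ --- the quantum counterpart of the statement that $\Husimi_\rho$ is the width-$\sqrt{\hbar}$ Gaussian mollification of the Wigner function of $\rho$ --- gives $\mathcal E_p(\rho,\Husimi_\rho)\le C_p\hbar^{1/2}$, uniformly in $\rho$. \emph{(3) Dobrushin-type stability.} Transport an optimal coupling along the coupled flow by setting $Q_t(x):=e^{-i\hat Ht/\hbar}\,Q_0(\Phi_{-t}x)\,e^{i\hat Ht/\hbar}$; after the change of variable $x=\Phi_t(y)$ (and using that $\Phi_t$ is volume-preserving, so $\Phi_t\mu=(\Phi_t)_*\mu$ solves Liouville's equation) one checks $\int Q_t\,d(\Phi_t\mu)=\mathcal U_t\rho$ and $\trace Q_t(x)=1$, so $Q_t$ couples $\mathcal U_t\rho$ with $\Phi_t\mu$. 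Writing the cost as $C(t)=\int\trace\!\big(Q_0(y)\,e^{i\hat Ht/\hbar}R^{\,p}_{\Phi_t(y)}e^{-i\hat Ht/\hbar}\big)\,d\mu(y)$ and differentiating, the quantum term $\tfrac{i}{\hbar}[\hat H,R^{\,p}_x]$ and the classical transport term $\nabla_x R^{\,p}_x\cdot X_H(x)$ combine, to leading order, into an operator bounded by $\|\nabla^2 H\|_{L^\infty}\,R_x^{\,p}$ --- because the leading contribution is proportional to $\nabla H(\hat{\cdot})-\nabla H(x)$, which is Lipschitz by admissibility --- and Gronwall yields $\mathcal E_p(\mathcal U_t\rho,\Phi_t\mu)\le e^{C_p t}\,\mathcal E_p(\rho,\mu)$. \emph{(4) Conclusion.} Taking $\mu=\Husimi_\rho$, so that $\Phi_T\mu=\Phi_T\Husimi_\rho$, and chaining (1), (3), (2):
$$
d_{W_p}\big(\Husimi_{\mathcal U_T\rho},\,\Phi_T\Husimi_\rho\big)
\;\le\; C_p\,\mathcal E_p(\mathcal U_T\rho,\Phi_T\Husimi_\rho)+C_p\hbar^{1/2}
\;\le\; C_p e^{C_p T}\,\mathcal E_p(\rho,\Husimi_\rho)+C_p\hbar^{1/2}
\;\le\; C_p e^{C_p T}\,\hbar^{1/2}.
$$

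I expect the main obstacle to be the stability estimate (3) for $p\neq 2$. The Golse--Paul argument at $p=2$ exploits the quadratic cost, for which $\tfrac{i}{\hbar}[\hat H,R^2_x]$ is again a first-order operator with an explicit bounded symbol; for $R_x^{\,p}$ with general $p$ one must control commutators of $\hat H$ with fractional powers of a positive operator, and this is where the regularization $+\hbar$ in $R_x$ and the admissibility hypotheses on the higher derivatives of $H$ --- needed to bound the semiclassical corrections in $\tfrac{i}{\hbar}[\hat H,R_x^{\,p}]$ beyond the Poisson-bracket term --- come in. A secondary, purely technical difficulty feeding steps (1)--(2) is the low-regularity symbol calculus: one needs the comparison between the anti-Wick quantization of $|{\cdot}-x|^p$ and $R_x^{\,p}$, and the bound $\mathcal E_p(\rho,\Husimi_\rho)\le C_p\hbar^{1/2}$, to hold uniformly in $\rho$ with no input from $H$ --- a careful but $H$-independent Gaussian estimate. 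Throughout, $C_p$ denotes a constant depending only on $p$, $D$ and $H$ that may change from line to line.
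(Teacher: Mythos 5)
There is a genuine gap, and it sits exactly at the point the paper identifies as the main difficulty: your step (2), the claim that $\mathcal E_p(\rho,\Husimi_\rho)\le C_p\hbar^{1/2}$ \emph{uniformly in} $\rho$, is false. In your coupling functional the admissible $Q$ must satisfy $Q(x)\succeq 0$, $\trace Q(x)=1$, $\int Q(x)\,d\mu(x)=\rho$; if $\rho=\ket{\psi}\bra{\psi}$ is pure, then purity forces $Q(x)=\rho$ for $\mu$-a.e.\ $x$ (a pure state is an extreme point of the density matrices, so it admits no nontrivial convex decomposition), and hence $\mathcal E_p(\rho,\mu)^p=\int\trace(\rho\,R_x^p)\,d\mu(x)$ with no infimum left to take. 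For a Schr\"odinger-cat state $\psi=(\ket{\alpha_0}+\ket{-\alpha_0})/\sqrt2$ with $|\alpha_0|\gg\hbar^{1/2}$ one has $\trace(\rho\,R_x^2)\approx\tfrac12|x-\alpha_0|^2+\tfrac12|x+\alpha_0|^2+O(\hbar)\gtrsim|\alpha_0|^2$ for every $x$, so $\mathcal E_2(\rho,\Husimi_\rho)\gtrsim|\alpha_0|$, independent of $\hbar$ (and similarly for all $p$). The same failure appears if you try the natural ``heterodyne dilation'' coupling $Q(x)=(2\pi\hbar)^{-D}\sqrt\rho\,\ket{x}\bra{x}\sqrt\rho/\Husimi_\rho(x)$: its cost is controlled by the phase-space variance of $\rho$, not by $\hbar$. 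Consequently your chain in step (4) only yields $d_{W_p}(\Husimi_{\mcal U_T\rho},\Phi_T\Husimi_\rho)\lesssim e^{C_pT}\mathcal E_p(\rho,\Husimi_\rho)$, which is vacuous precisely for states with long-range superpositions --- even though the theorem's conclusion is true for them, because relative phases between distant wavepackets are invisible at the level of the Husimi function but are fully visible to your coupling cost. Steps (1) and (3) are fine in spirit (they are essentially the Golse--Paul mechanism, and parallel the paper's Proposition~\ref{prp:localization} and Theorem~\ref{thm:mean-field}, modulo the fractional-power commutator issues you flag), but they only prove the theorem for states that are already mixtures of coherent states, i.e.\ the paper's Corollary~\ref{cor:from-mixture}.

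The missing idea is how to pass from such ``classical'' mixtures to arbitrary $\rho$. The paper does this by inserting the noising channel $\mcal N[\rho]=\int\mcal T_\alpha[\rho]\,\gamma_\hbar(\alpha)\,d\alpha$, which is automatically a mixture of coherent states with density $\Husimi_\rho$, and then splitting by the triangle inequality into three terms: the middle term is handled by the Gronwall/localization argument (your steps (1)--(3) would serve here), the term $d_{W_p}(\Phi_T\Husimi_{\mcal N[\rho]},\Phi_T\Husimi_\rho)$ by the Lipschitz property of $\Phi_T$ and $\Husimi_{\mcal N[\rho]}=\Husimi_\rho\ast\gamma_{2\hbar}$, and the genuinely quantum term $d_{W_p}(\Husimi_{\mcal U_T\rho},\Husimi_{\mcal U_T\mcal N[\rho]})$ by a separate result (Proposition~\ref{prp:local-transport-informal} / Theorem~\ref{thm:main-husimi}): conjugation by a phase-space-localized unitary $W_{T,\alpha}=e^{i\hat HT/\hbar}\tau_\alpha e^{-i\hat HT/\hbar}$ moves the Husimi function by at most $O(e^{CT}(\hbar^{1/2}+|\alpha|)(1+|\alpha|^{4D}\hbar^{-2D}))$ in $W_p$, uniformly over all states. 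That result is proved not by couplings but by Kantorovich duality reduced to an operator positivity statement, established via the contraction/positivity Lemma~\ref{lem:abstract-positivity} in auxiliary norms. Without some substitute for this step, your proposal cannot reach arbitrary density matrices, which is the main content of Theorem~\ref{thm:Wp-egorov}.
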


This theorem articulates that, after coarse–graining at the natural phase space scale $\sqrt{\hbar}$ (the Husimi smoothing), a quantum state evolves almost classically.  Theorem~\ref{thm:Wp-egorov} is remarkable in that it holds for \textit{arbitrary} quantum states $\rho$, including those lacking good semiclassical descriptions (such as mixtures of coherent states).

To understand the meaning of Theorem~\ref{thm:Wp-egorov}, it is helpful to unpack it in the case $p=1$.
Invoking Kantorovich–Rubinstein duality yields
\[
\left|\int \varphi\,\Husimi_{\mcal U_T\rho} \diff \balpha-\int \varphi\,\Phi_T\Husimi_\rho \diff \balpha\right|
\le \mathrm{Lip}(\varphi)\,e^{C_1 T}\,\hbar^{1/2}
\quad\text{for all $1$-Lipschitz }\varphi,
\]
so expectation values of all Lipschitz classical observables computed from the quantum Husimi density track their classical counterparts with an $O(\sqrt{\hbar})$ error.
Along these lines, letting $\mathcal{S}_{\delta}$ be the standard symbol class (defined in~\eqref{eq:Sdef} below), we have the corollary:
\begin{corollary}
\label{cor:LipEgo}
Let $G\in C^\infty(\Real^{2D})$ be an observable satisfying $\partial_{\alpha_j} G\in \mcal{S}_{\frac12}$ for all $j$.  Then for all $T\in\Real$,
\begin{equation}
\label{eq:LipEgorov}
\left\|\,\mcal{U}_T(\Op_\hbar(G)) - \Op_\hbar(\Phi_T G)\right\|_{L^2\to L^2}
\leq e^{C(D,H) T} \hbar^{1/2}.
\end{equation}
\end{corollary}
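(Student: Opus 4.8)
The plan is to reduce the claim to the $p=1$ case of Theorem~\ref{thm:Wp-egorov}, by testing the operator $B_T:=\mcal U_T(\Op_\hbar(G))-\Op_\hbar(\Phi_T G)$ against density matrices and converting Weyl quantizations into integrals against Husimi functions. Splitting $G$ into real and imaginary parts, it suffices to treat real-valued $G$, in which case $B_T$ is symmetric; by a routine limiting argument it is enough to bound $\bigl|\trace(B_T\sigma)\bigr|$ uniformly over finite-rank density matrices $\sigma$ whose range lies in the Schwartz class (this also handles the possible unboundedness of $\Op_\hbar(G)$). Fixing such a $\sigma$ and setting $\sigma^-:=\mcal U_{-T}\sigma$, which is again a density matrix with $\mcal U_T\sigma^-=\sigma$, cyclicity of the trace gives
\[
\trace(B_T\sigma)=\trace\bigl(\Op_\hbar(G)\,\sigma^-\bigr)-\trace\bigl(\Op_\hbar(\Phi_T G)\,\sigma\bigr).
\]

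Next I would replace each of these traces by an integral against a Husimi function. Since $\Husimi_\tau$ is the Wigner function of $\tau$ convolved with the coherent-state Gaussian $g_\hbar$ of width $\sqrt\hbar$, one has $\int a\,\Husimi_\tau\,\diff\balpha=\trace\bigl(\Op_\hbar(a*g_\hbar)\,\tau\bigr)$ for every symbol $a$, whence
\[
\trace\bigl(\Op_\hbar(a)\,\tau\bigr)-\int a\,\Husimi_\tau\,\diff\balpha=\trace\bigl(\Op_\hbar(a-a*g_\hbar)\,\tau\bigr),
\]
whose modulus is at most $\bigl\|\Op_\hbar(a-a*g_\hbar)\bigr\|_{L^2\to L^2}$. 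Taylor expansion gives $\|a-a*g_\hbar\|_{L^\infty}\lesssim\hbar\,\|\nabla^2 a\|_{L^\infty}$ and $\|\partial^\beta(a-a*g_\hbar)\|_{L^\infty}\le 2\|\partial^\beta a\|_{L^\infty}$ for $|\beta|\ge1$, so the semiclassical Calder\'on--Vaillancourt estimate bounds this by $\hbar\,\|\nabla^2 a\|_{L^\infty}+\sum_{1\le|\beta|\le 2D+1}\hbar^{|\beta|/2}\|\partial^\beta a\|_{L^\infty}$. For $a=G$, the hypothesis $\partial_{\alpha_j}G\in\mcal{S}_{\frac12}$ gives $\|\partial^\beta G\|_{L^\infty}\lesssim\hbar^{-(|\beta|-1)/2}$ for $|\beta|\ge1$, so every term is $O(\hbar^{1/2})$. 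For $a=\Phi_T G$, Definition~\ref{def:admissibleH} forces the Hamiltonian flow $\phi_t$ to have $C^k$-norms bounded by $e^{C_k t}$; since $\Phi_T G$ is a composition of $G$ with $\phi_{\mp T}$ and differentiating a composition brings in only derivatives of $G$ of order $\ge1$, paired with bounded derivatives of $\phi_{\mp T}$, one obtains $\|\partial^\beta(\Phi_T G)\|_{L^\infty}\lesssim e^{C|\beta|T}\hbar^{-(|\beta|-1)/2}$ for $|\beta|\ge1$, so every term is $O(e^{CT}\hbar^{1/2})$. Hence $\bigl|\trace(\Op_\hbar(G)\sigma^-)-\int G\,\Husimi_{\sigma^-}\,\diff\balpha\bigr|\lesssim\hbar^{1/2}$ and $\bigl|\trace(\Op_\hbar(\Phi_T G)\sigma)-\int(\Phi_T G)\,\Husimi_\sigma\,\diff\balpha\bigr|\lesssim e^{CT}\hbar^{1/2}$, uniformly in $\sigma$.

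The last step would be to compare $\int(\Phi_T G)\,\Husimi_\sigma\,\diff\balpha$ with $\int G\,\Husimi_{\sigma^-}\,\diff\balpha$. By the chain rule $\nabla(\Phi_T G)$ is a product of $(\nabla G)\circ\phi_{\mp T}$ with the Jacobian of $\phi_{\mp T}$, and both factors are bounded with the latter of norm $\lesssim e^{CT}$, so $\Phi_T G$ is Lipschitz with constant $\lesssim e^{CT}$. Since $\Husimi_\sigma=\Husimi_{\mcal U_T\sigma^-}$, Kantorovich--Rubinstein duality together with the $p=1$ case of Theorem~\ref{thm:Wp-egorov} applied to the density matrix $\sigma^-$ gives
\begin{multline*}
\Bigl|\int(\Phi_T G)\,\Husimi_\sigma\,\diff\balpha-\int(\Phi_T G)\,\Phi_T\Husimi_{\sigma^-}\,\diff\balpha\Bigr|\\
\le \mathrm{Lip}(\Phi_T G)\, d_{W_1}\bigl(\Husimi_{\mcal U_T\sigma^-},\Phi_T\Husimi_{\sigma^-}\bigr)\lesssim e^{CT}\hbar^{1/2},
\end{multline*}
while, because $\phi_t$ preserves Lebesgue measure and both occurrences of $\Phi_T$ act by composition with $\phi_{\mp T}$, the change of variables $\balpha\mapsto\phi_{\pm T}(\balpha)$ gives $\int(\Phi_T G)\,\Phi_T\Husimi_{\sigma^-}\,\diff\balpha=\int G\,\Husimi_{\sigma^-}\,\diff\balpha$. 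Feeding these into the trace identity, the $\int G\,\Husimi_{\sigma^-}$ contributions cancel and $\bigl|\trace(B_T\sigma)\bigr|\lesssim e^{CT}\hbar^{1/2}$ for a constant $C=C(D,H)$, uniformly in $\sigma$; taking the supremum over $\sigma$ gives~\eqref{eq:LipEgorov} for $T\ge0$, and $T<0$ follows by applying the result to the (again admissible) Hamiltonian $-H$. I expect the one genuinely delicate step to be the symbol bound for $\Phi_T G$: propagating the $\mcal{S}_{\frac12}$ estimates through the classical flow while correctly tracking both the powers of $\hbar$ and the exponential-in-$T$ growth requires care, though the required $C^k$ bounds on $\phi_t$ should already be available from the admissibility framework used elsewhere in the paper.
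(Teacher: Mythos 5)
Your proposal is correct and follows essentially the same route as the paper: reduce to traces against density matrices, invoke the $p=1$ case of Theorem~\ref{thm:Wp-egorov} via Kantorovich--Rubinstein duality, convert Husimi integrals to Weyl traces, and control the mollification errors $\Op_\hbar(a-a\ast\gamma_\hbar)$ by Calder\'on--Vaillancourt together with the $\mcal{S}_{\frac12}$ bounds on $\partial G$ and $\partial(\Phi_T G)$. The only (cosmetic) difference is that you test the $W_1$ bound against the Lipschitz function $\Phi_T G$ applied to the backward-evolved state and then change variables, whereas the paper tests against $G$ directly; your flagged concern about propagating symbol bounds through the flow is warranted only insofar as the higher derivatives of $\Phi_t$ themselves grow like $\hbar^{-(k-1)/2}e^{C_kt}$ rather than $e^{C_kt}$, but your stated bound $\|\partial^\beta(\Phi_T G)\|_{L^\infty}\lesssim e^{C|\beta|T}\hbar^{-(|\beta|-1)/2}$ is nonetheless the correct conclusion.
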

The bound~\eqref{eq:LipEgorov} above is a kind of ``low-regularity'' Egorov theorem (since only one derivative of $G$ is quantitatively controlled).
To our knowledge, a bound of this form was not previously known.

Moreover, we comment that the bounds of Theorem~\ref{thm:Wp-egorov} are of increasing strength as $p \geq 1$ is taken large.  That is, simply by H\"{o}lder's inequality and the primal formulation of the $W_p$ distance, one has for any $p\geq 1$ the inequality
\begin{align}
d_{W_1}(\Husimi_{\mcal{U}_T\rho},\Phi_T\Husimi_\rho)
\leq
d_{W_p}(\Husimi_{\mcal{U}_T\rho},\Phi_T\Husimi_\rho).
\end{align}
Thus Theorem~\ref{thm:Wp-egorov} is \textit{strictly stronger} than Corollary~\ref{cor:LipEgo} above, although a natural dual formulation in the Heisenberg picture is not presently clear.

\subsection{Relation to work on mean-field limits}
The use of optimal transport metrics to compare quantum and classical evolutions first appeared in the work of Golse and Paul~\cite{GP2017}, following the pioneering paper of Golse, Mouhot, and Paul~\cite{GMP2016} that introduced quantum optimal transport metrics in many-body quantum physics. (See also~\cite{narnhofer1981vlasov,pezzotti2009mean,graffi2003mean} for earlier works on semiclassical mean field limits.)
In~\cite{GMP2016}, an optimal transport metric $\mcal{E}_\hbar(\rho,\mu)$ is defined to compare quantum states to probability densities, and which is comparable (up to an additive $\hbar^{1/2}$ error) to the metric $d_{W_2}(\Husimi_\rho,\mu)$ that we use.
The objective of~\cite{GP2017} was to compare many-body quantum dynamics to many-body classical dynamics, and an optimal transport approach is necessary to obtain bounds that are uniform as the number of particles $N\to \infty$.
In our setting we take $D=Nd$ where $N$ is the number of particles and $d$ is the spatial dimension of each particle.

Our main result, Theorem~\ref{thm:Wp-egorov}, is unsuitable for the purpose of understanding semiclassical
mean-field limits because it does not provide dimension-independent bounds.
Nevertheless, as a byproduct of our proof we can obtain a result with quantitative bounds that is
comparable to the main result of~\cite{GP2017}.
\begin{theorem}
\label{thm:mean-field}
Let $K,V\in C^{2}(\Real^D)$ and $\hat{H} = K(\hat{p}_1,...,\hat{p}_D) + V(\hat{x}_1,...,\hat{x}_D)$ be a Hamiltonian operator on $L^2(\Real^D)$.  Then
for states $\rho$ of the form
\[
\rho = \int p(\alpha) |\alpha \rangle \langle \alpha |\diff \alpha
\]
where $p$ is a probability density in $L^1(\Real^{2D})$ and $|\alpha\rangle$ is the standard coherent state, we have
\[
\frac{1}{2D} \, d_{W_2}(\Husimi_{\mcal{U}_T\rho}, \Phi_T\Husimi_\rho)^2 \leq (1+2e^{\Lambda T}) \hbar,
\]
where
\[
\Lambda := \frac12 (\|\nabla K\|_{\rm Lip} + \|\nabla V\|_{\rm Lip})\,.
\]
\end{theorem}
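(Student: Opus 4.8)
The plan is to adapt the quantum--classical coupling method of Golse and Paul~\cite{GP2017,GMP2016}. What makes the problem tractable at merely $C^2$ regularity is that, $\hat H = K(\hat p)+V(\hat x)$ being separable, the Heisenberg relations are \emph{exact}: $\tfrac{i}{\hbar}[\hat H,\hat x_j]=\partial_{p_j}K(\hat p)$ and $\tfrac{i}{\hbar}[\hat H,\hat p_j]=-\partial_{x_j}V(\hat x)$, with no $\hbar$-corrections and using no derivatives of $K,V$ beyond the first. I would build a time-dependent coupling between $\mcal U_t\rho$ and $\Phi_t\Husimi_\rho$, run a Gr\"onwall estimate on a quadratic transport cost, and transfer the result to the Husimi functions.

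\emph{The coupling.} Using $\rho=\int p(\alpha)\ket\alpha\bra\alpha\diff\alpha$, set $Q(0,\diff z)=\bigl(\int p(\alpha)\,g_\alpha(z)\,\ket\alpha\bra\alpha\diff\alpha\bigr)\diff z$, where $g_\alpha:=\Husimi_{\ket\alpha\bra\alpha}$ is a Gaussian of width $\sqrt\hbar$ centred at $\alpha$. This is an operator-valued measure on $\Real^{2D}$ with operator marginal $\int Q(0,\diff z)=\rho$ and scalar marginal $\trace Q(0,z)=\Husimi_\rho(z)$. Propagate it by
\[
\partial_t\hat Q(t,z)=-\tfrac{i}{\hbar}[\hat H,\hat Q(t,z)]-X_H(z)\cdot\nabla_z\hat Q(t,z),\qquad X_H(z)=\bigl(\nabla K(z_\xi),\,-\nabla V(z_x)\bigr).
\]
Because $\Phi_t$ preserves phase-space volume, the operator marginal of $Q(t)$ stays $\mcal U_t\rho$ and the scalar marginal stays $\Phi_t\Husimi_\rho$, so $Q(t)$ remains a coupling; and with $c_\hbar(z):=|z_x-\hat x|^2+|z_\xi-\hat p|^2$ the initial cost is $E(0):=\int\trace[Q(0,z)c_\hbar(z)]\diff z=\int p(\alpha)\bigl(\mathrm{var}(g_\alpha)+D\hbar\bigr)\diff\alpha=O(D\hbar)$, the $D\hbar$ being the sum of the $2D$ coherent-state variances $\tfrac\hbar2$.

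\emph{Gr\"onwall.} Differentiating $E(t)$, using the evolution equation and integrating the transport term by parts (legitimate since $\nabla\cdot X_H=0$), the commutator and transport contributions combine so that each classical velocity component $\partial_{\xi_j}K(z_\xi)$ is subtracted against the quantum velocity $\partial_{p_j}K(\hat p)$:
\[
\tfrac{d}{dt}E(t)=\int\trace\Bigl[\hat Q(t,z)\textstyle\sum_j\bigl(\bigl\{z_x^j-\hat x_j,\ \partial_{\xi_j}K(z_\xi)-\partial_{p_j}K(\hat p)\bigr\}+\bigl\{z_\xi^j-\hat p_j,\ \partial_{x_j}V(\hat x)-\partial_{x_j}V(z_x)\bigr\}\bigr)\Bigr]\diff z,
\]
where $\{A,B\}=AB+BA$. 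The essential estimate is the operator inequality $|\nabla K(z_\xi)-\nabla K(\hat p)|^2\le\|\nabla K\|_{\mathrm{Lip}}^2\,|z_\xi-\hat p|^2$, valid by joint functional calculus of the commuting family $\hat p_1,\dots,\hat p_D$ --- this is exactly where separability enters --- and its analogue for $V$ and $\hat x_1,\dots,\hat x_D$. Inserting it, with Cauchy--Schwarz in the $\hat Q$-weighted trace pairing and in $\diff z$, and splitting $E=E_x+E_p$ into position and momentum parts, yields $\tfrac{d}{dt}E_x\le 2\|\nabla K\|_{\mathrm{Lip}}\sqrt{E_xE_p}$ and $\tfrac{d}{dt}E_p\le 2\|\nabla V\|_{\mathrm{Lip}}\sqrt{E_xE_p}$, hence $\tfrac{d}{dt}\sqrt{E(t)}\le\Lambda\sqrt{E(t)}$ and $E(t)\le e^{2\Lambda t}E(0)$.

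\emph{Transfer to $d_{W_2}$, and the main obstacle.} Finally, $Q(t)$ induces an honest transport plan from $\Phi_t\Husimi_\rho$ to $\Husimi_{\mcal U_t\rho}$: move the mass at $z$ onto the Husimi function of the normalised state $Q(t,z)/\trace Q(t,z)$. Since the anti-Wick quantisation of $w\mapsto|w-z|^2$ equals $c_\hbar(z)+D\hbar$, this plan has cost $E(t)+D\hbar$, so $d_{W_2}(\Husimi_{\mcal U_t\rho},\Phi_t\Husimi_\rho)^2\le E(t)+D\hbar$; dividing by $2D$ and inserting $E(t)\le e^{2\Lambda t}E(0)=O(D\hbar\,e^{2\Lambda t})$ gives a bound $\tfrac1{2D}d_{W_2}^2\le(1+C\,e^{2\Lambda t})\hbar$ of the stated shape, with the ``$+1$'' coming from the coarse-graining term $D\hbar$. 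The real work --- and where I expect the difficulty --- is twofold: (i) making rigorous sense of $\nabla K(\hat p)$, $\nabla V(\hat x)$ and of the operator Lipschitz inequality when $K,V$ are only $C^2$ (so that no spurious $\hbar$-errors or derivative losses creep in), which is precisely what forces $\hat H$ to be separable and is the heart of the argument; and (ii) doing the Gr\"onwall bookkeeping --- tracking $E_x,E_p$ separately with the right weights and choosing the initial coupling optimally --- carefully enough to reach the sharp exponential rate and the precise numerical constants claimed.
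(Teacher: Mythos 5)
Your construction is a correct route to the theorem, and it is the Golse--Paul coupling argument rather than the argument the paper actually runs. The paper never builds an operator-valued coupling $\hat Q(t,z)$: it (i) uses convexity of $d_{W_2}^2$ to reduce to a single coherent state $\ket{\alpha}\bra{\alpha}$; (ii) proves the Gr\"onwall estimate for the single scalar quantity $Q(t)=\|e^{-i\hat Ht/\hbar}\ket{\alpha}\|_{\Sobolev^1_{\alpha(t)}}^2=\trace\!\big[\mcal U_t(\ket{\alpha}\bra{\alpha})\,c_\hbar(\alpha(t))\big]$ along the one classical trajectory $\alpha(t)=\Phi_t\alpha$, which is~\eqref{eq:meanfield-bd} of Proposition~\ref{prp:localization}; and (iii) converts this into a Wasserstein bound by the triangle inequality through the Dirac mass $\delta_{\Phi_T\alpha}$, invoking separately that $\Phi_T$ is $e^{\Lambda T}$-Lipschitz to handle the classical side. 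The differential inequality at the core is the same as yours in substance: exact Heisenberg equations $\frac{d}{dt}\hat x_j=\partial_{p_j}K(\hat p)$, $\frac{d}{dt}\hat p_j=-\partial_{x_j}V(\hat x)$; a first-order Taylor remainder $\nabla K(\hat p)-\nabla K(p_0)=R^K(\hat p)\,(\hat p-p_0)$ with $\|R^K(\hat p)\|\le\|\nabla K\|_{\rm Lip}$ by joint functional calculus of the commuting $\hat p_j$ (this is exactly your ``operator Lipschitz inequality,'' and it is where separability enters for the paper as well, so your worry (i) is resolved identically); then Cauchy--Schwarz giving $\dot Q\le 2\Lambda Q$. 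What your packaging buys is that the coupling hands you the transport plan to $\Phi_T\Husimi_\rho$ directly, with no separate Lipschitz estimate for $\Phi_T$ and no triangle inequality (hence no squaring of a sum of two terms); what the paper's packaging buys is reusability, since the same re-centered Sobolev machinery is the engine of the general Theorem~\ref{thm:Wp-egorov}.

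Two numerical points to fix when you write this up. With the paper's normalization, the anti-Wick quantization of $w\mapsto|w-z|^2$ is $c_\hbar(z)+2D\hbar$, not $c_\hbar(z)+D\hbar$ (each of the $2D$ phase-space coordinates contributes $\hbar$; compare the computation in Section~\ref{sec:pffrommixture}), and likewise $\int|z-\alpha|^2 g_\alpha(z)\diff z=2D\hbar$, so $E(0)=3D\hbar$. Your final bound then reads $\frac{1}{2D}d_{W_2}^2\le(1+\frac32 e^{2\Lambda T})\hbar$. This carries $e^{2\Lambda T}$ where the statement has $e^{\Lambda T}$; note, however, that the paper's own derivation also produces $e^{2\Lambda_H T}$ factors (and a squared triangle inequality), so the exact constant $(1+2e^{\Lambda T})$ is not cleanly extracted there either. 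Your argument establishes the theorem up to that constant, which is all the subsequent applications use.
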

For Hamiltonians of the form
\[
\hat{H} = \sum_{j=1}^N \frac{1}{2m_j} \,\hat{\vec{p}}_j^{\,\,2} + \sum_{j=1}^N V_0(\hat{\vec{x}}_j) + \frac1N \sum_{j,k=1}^N V_1(\hat{\vec{x}}_j-\hat{\vec{x}}_k)
\]
arising in mean-field many-particle dynamics with two-particle interactions, we have that $\Lambda$ is uniformly
bounded as $N\to\infty$.

Theorem~\ref{thm:mean-field} is still not directly comparable to the main result of~\cite{GP2017} because it is not
adapted to the bosonic setting, and as stated it does not provide additional information about $n$-particle
marginals for $n\leq N$.  Such a result can be obtained with our methods, but would take us further afield from the proof of our main result Theorem~\ref{thm:Wp-egorov} which we take to be the focus of this paper.

Our approach to proving Theorem~\ref{thm:mean-field} is to track the phase-space localization of the evolution of
coherent states, as measured by re-centered isotropic Sobolev norms.  That is, we consider the quantity
\begin{align}
\label{E:Sobolev1}
\|\psi\|_{\Sobolev^1_\alpha}^2 := \sum_{j=1}^{2D} \|(\hat{\alpha}_j - \alpha_j) \psi\|_{L^2}^2.
\end{align}
This quantity is the quantum analogue of the 2-Wasserstein distance between $\psi$ and a fictitious state perfectly localized at $\alpha$. Indeed, for a probability measure $\mu$ on phase space,
\[
W_2^2(\mu,\delta_\alpha)=\int |z-\alpha|^2\,\mu(\diff z)\,,
\]
so \eqref{E:Sobolev1} provides the operator-theoretic version of the 2-Wasserstein distance to a point mass at $\alpha$.  Since the Sobolev norm~\eqref{E:Sobolev1} is particularly tractable, we can prove Theorem~\ref{thm:mean-field} via a Gronwall inequality applied to the quantity
\[
Q(t) := \|e^{-i\hat{H}t/\hbar} \ket{\alpha_0} \|_{\Sobolev^1_{\alpha(t)}}^2,
\]
where $\alpha(t) =\Phi_t(\alpha_0)$ tracks the classical flow.  This calculation is done in
Section~\ref{sec:localized-evol}.  Fundamentally, this proof does not differ much from the argument of~\cite{GP2017}.
However, because the calculation is simple and leads nicely into the proof of our main result, we include it
for completeness.

\subsection{Proof outline}
\label{sec:outline}
There are two key technical challenges addressed in the proof of Theorem~\ref{thm:Wp-egorov}.

The first, more minor, challenge is that the phase space $\Real^{2D}$ is not compact.
In particular, one does not have a bound of the form $d_{W_p}(\Husimi_{\rho_1}, \Husimi_{\rho_2})
\leq \|\rho_1-\rho_2\|_{\rm tr}$, so it does not suffice to control $\mcal{U}_T \rho = e^{- i H T/\hbar} \rho e^{i H T/\hbar}$ in trace norm.  One needs instead to control the growth
of tails of $\mcal{U}_T \rho$ in phase space.
This necessitates working not with $L^2$ but with isotropic Sobolev spaces
$\Sobolev^k$, defined by
\[
\Sobolev^k := \{\psi\in L^2(\Real^D) \,:\,
\hat{\balpha}^k \psi\in L^2(\Real^D)\}.
\]
Above we use $\hat{\balpha}=(\hat{x}_1,...,\hat{x}_D,\hat{p}_1,...,\hat{p}_D)$
as the phase space operator, and $\hat{\alpha}^k$ is to be interpreted as a
tuple of possible monomials applied to $\psi$.  It is useful to introduce a continuous family of equivalent norms on $\Sobolev^k$ parametrized by the center $\balpha\in\Real^{2D}$,
\[
\|\psi\|_{\Sobolev_\balpha^k} := \|(\hat{\balpha}-\balpha)^k\psi\|_{L^2}.
\]
We show in Section~\ref{sec:localized-evol} that the localized Sobolev norm of a solution to the Schr\"{o}dinger equation with admissible Hamiltonian grows at most
exponentially if $\balpha(t)$ tracks the classical Hamiltonian flow:

\begin{proposition}[Propagation of localization]
\label{prp:localization}
Let $H$ be an admissible Hamiltonian as in Definition~\ref{def:admissibleH},
$\hat{H}=\Op_\hbar(H)$, and $\balpha(T)$ be a trajectory of Hamilton's equations.  Then for any $k$ there exists $C=C(k,D,H)$
such that for all $\psi\in \Sobolev^k(\Real^D)$,
\begin{equation}
\label{eq:general-bd}
\|e^{-i\hat{H}T/\hbar} \psi\|_{\Sobolev^k_{\balpha(T)}}
\leq e^{CT}\|\psi\|_{\Sobolev^k_{\balpha(0)}}.
\end{equation}
Moreover, if $H$ has the form $H = K(p_1,...,p_D) + V(x_1,...,x_D)$ with $K,V\in C^{2}(\Real^D)$, then one more precisely has the bound
\begin{equation}
\label{eq:meanfield-bd}
\Big(\sum_j \|(\hat{\balpha}_j -\balpha_j(T)) e^{-i\hat{H}T/\hbar}\psi\|_{L^2}^2 \Big)
\leq e^{2\Lambda_H T}
\Big(\sum_j \|(\hat{\balpha}_j -\balpha_j(0)) \psi\|_{L^2}^2 \Big).
\end{equation}
where
\[
\Lambda_H := \frac12 (\|\nabla K\|_{\rm Lip} + \|\nabla V\|_{\rm Lip}).
\]
\end{proposition}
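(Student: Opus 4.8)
The plan is to establish both bounds through a Gronwall estimate for
\[
Q(t):=\sum_{|\gamma|\le k}\big\|(\hat\balpha-\balpha(t))^\gamma\,\psi(t)\big\|_{L^2}^2,\qquad \psi(t):=e^{-i\hat Ht/\hbar}\psi,
\]
which is comparable, uniformly in $t$ and $\hbar$, to $\|\psi(t)\|_{\Sobolev^k_{\balpha(t)}}^2$: phase-space translation is a unitary conjugating $\hat\balpha$ to $\hat\balpha+c$, so the comparison constants do not see the center, and any ordering of the monomial $(\hat\balpha-\balpha)^\gamma$ differs from the Weyl ordering by lower-degree terms with positive powers of $\hbar$, so I take $B_\gamma(t):=\Op_\hbar\big((\beta-\balpha(t))^\gamma\big)$. (To be fully rigorous the computation below would first be run for $\psi$ in a suitable dense set on which all manipulations are licit -- $\Schwartz(\Real^D)$ in the admissible case, a cutoff regularization in general -- and the resulting a priori bound then passed to all of $\Sobolev^k$ by density.) Using that $\hat H$ is self-adjoint, so $\langle B_\gamma\psi(t),\hat H B_\gamma\psi(t)\rangle\in\Real$, one obtains
\[
\frac{d}{dt}\big\|B_\gamma(t)\psi(t)\big\|_{L^2}^2 = 2\,\mathrm{Re}\,\Big\langle B_\gamma(t)\psi(t),\big(\dot B_\gamma(t)+\tfrac1{i\hbar}[B_\gamma(t),\hat H]\big)\psi(t)\Big\rangle,
\]
so everything rests on understanding the operator $\dot B_\gamma+\tfrac1{i\hbar}[B_\gamma,\hat H]$.

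The key algebraic step is a cancellation in this operator. Differentiating the symbol gives $\dot B_\gamma(t)=-\sum_j\dot\alpha_j(t)\,\Op_\hbar\big(\partial_{\beta_j}(\beta-\balpha(t))^\gamma\big)$. On the other hand, since $(\beta-\balpha)^\gamma$ is polynomial the Moyal series for $\tfrac1{i\hbar}[B_\gamma,\hat H]=\Op_\hbar\big(\{(\beta-\balpha)^\gamma,H\}_{\mathrm M}\big)$ terminates,
\[
\{(\beta-\balpha)^\gamma,H\}_{\mathrm M}=\{(\beta-\balpha)^\gamma,H\}_{\rm PB}+\hbar^2(\cdots),
\]
with remainder a finite sum of quantizations of (polynomial of degree $\le|\gamma|-3$)$\,\times\,$(derivative of $H$ of order $\ge 3$); and, by Hamilton's equation $\dot\alpha_j(t)=\{\beta_j,H\}(\balpha(t))$,
\[
\{(\beta-\balpha)^\gamma,H\}_{\rm PB}=\sum_j\partial_{\beta_j}(\beta-\balpha)^\gamma\cdot\{\beta_j,H\}(\beta)=\sum_j\partial_{\beta_j}(\beta-\balpha)^\gamma\cdot\big(\dot\alpha_j(t)+r_j(\beta;t)\big),
\]
where $r_j(\beta;t):=\{\beta_j,H\}(\beta)-\{\beta_j,H\}(\balpha(t))$ is the classical velocity field minus its value at the center, hence vanishes at $\beta=\balpha(t)$. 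The $\beta$-constant pieces $\dot\alpha_j(t)\,\Op_\hbar\big(\partial_{\beta_j}(\beta-\balpha)^\gamma\big)$ cancel $-\dot B_\gamma(t)$ exactly, leaving
\[
\dot B_\gamma(t)+\tfrac1{i\hbar}[B_\gamma(t),\hat H]=\sum_j\Op_\hbar\big(\partial_{\beta_j}(\beta-\balpha(t))^\gamma\cdot r_j(\cdot\,;t)\big)+\hbar^2(\cdots).
\]
I expect this cancellation to be the whole difficulty: \emph{without} it the surviving terms are of size $|\dot\balpha(t)|\sim|\balpha(t)|$, which grows exponentially along the flow and closes no estimate, whereas \emph{after} it the right-hand side is controlled purely by derivatives of $H$ of order $\ge 2$, which are bounded uniformly by admissibility.

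It then remains to bound, at each fixed $t$, the $L^2$-norm of the operator above applied to $\psi(t)$ by $C(k,D,H)\,Q(t)^{1/2}$. By the fundamental theorem of calculus, $r_j(\beta;t)=\sum_l(\beta_l-\alpha_l(t))\,\tilde r_{jl}(\beta;t)$ with $\tilde r_{jl}(\beta;t)=\int_0^1(\partial_{\beta_l}\{\beta_j,H\})\big(\balpha(t)+s(\beta-\balpha(t))\big)\,\diff s$, a symbol all of whose $\beta$-derivatives are bounded uniformly in $t$ and $\hbar$ by derivatives of $H$ of order $\ge 2$. So $\partial_{\beta_j}(\beta-\balpha)^\gamma\cdot r_j$ is a finite sum of (polynomial of degree $\le|\gamma|$ in $\beta-\balpha(t)$)$\,\times\,\tilde r_{jl}$, and likewise for the $\hbar^2$-term with smaller degrees. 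Since $\tilde r_{jl}$ lies in a symbol class on which Calder\'on--Vaillancourt applies uniformly in $\hbar$, $\Op_\hbar(\tilde r_{jl})$ maps $\Sobolev^m_{\balpha(t)}\to\Sobolev^m_{\balpha(t)}$ with norm $\lesssim_{H}1$ for $m\le k$, uniformly in the center; since moreover quantizations of polynomials of degree $\le|\gamma|\le k$ in $\hat\balpha-\balpha(t)$ map $\Sobolev^k_{\balpha(t)}$ boundedly into $L^2$, the claimed bound follows. Summing the displayed derivative identity over the finitely many $|\gamma|\le k$ and applying Cauchy--Schwarz gives $\tfrac{d}{dt}Q(t)\le C(k,D,H)\,Q(t)$; Gronwall together with $Q(t)\asymp\|\psi(t)\|_{\Sobolev^k_{\balpha(t)}}^2$ gives \eqref{eq:general-bd}.

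For the refined bound \eqref{eq:meanfield-bd} I specialize to $k=1$ and $H=K(p)+V(x)$ with $K,V\in C^2$, where no symbol calculus is needed because the commutators are exact: $\tfrac1{i\hbar}[\hat x_j,\hat H]=\partial_{p_j}K(\hat p)$ and $\tfrac1{i\hbar}[\hat p_m,\hat H]=-\partial_{x_m}V(\hat x)$, so with $A_{x_j}:=\hat x_j-x_j(t)$, $A_{p_m}:=\hat p_m-p_m(t)$ the cancellation reads $\tfrac{d}{dt}\|A_{x_j}\psi(t)\|^2=2\,\mathrm{Re}\,\langle A_{x_j}\psi(t),(\partial_{p_j}K(\hat p)-\partial_{p_j}K(p(t)))\psi(t)\rangle$ and symmetrically with $-\partial_{x_m}V$ for $A_{p_m}$. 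Expanding via the fundamental theorem of calculus in the joint functional calculus of the commuting operators $\hat p$, $\partial_{p_j}K(\hat p)-\partial_{p_j}K(p(t))=\sum_l M_{jl}(t)\,A_{p_l}$ with $M_{jl}(t)=\int_0^1(\partial^2_{p_jp_l}K)\big(p(t)+s(\hat p-p(t))\big)\,\diff s$, and the matrix $(M_{jl}(t))_{j,l}$ acts on $(L^2)^{\oplus D}$ with operator norm at most $\sup_q\|\mathrm{Hess}\,K(q)\|=\|\nabla K\|_{\rm Lip}$ (and the corresponding $V$-matrix has norm $\le\|\nabla V\|_{\rm Lip}$). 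With $P(t)=\sum_j\|A_{x_j}\psi(t)\|^2$ and $R(t)=\sum_m\|A_{p_m}\psi(t)\|^2$, Cauchy--Schwarz on $(L^2)^{\oplus D}$ gives $\dot P\le 2\|\nabla K\|_{\rm Lip}\sqrt{PR}$ and $\dot R\le 2\|\nabla V\|_{\rm Lip}\sqrt{PR}$, hence $\tfrac{d}{dt}(P+R)\le(\|\nabla K\|_{\rm Lip}+\|\nabla V\|_{\rm Lip})(P+R)=2\Lambda_H(P+R)$, and Gronwall delivers \eqref{eq:meanfield-bd}.
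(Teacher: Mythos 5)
Your overall architecture is the same as the paper's: a Gronwall estimate for centered isotropic Sobolev norms along the classical trajectory, with the essential cancellation between $\dot\balpha(t)$ (Hamilton's equations) and the zeroth-order Taylor term of the Hamiltonian vector field, the remainder controlled by one extra factor of $(\hat\balpha-\balpha(t))$ times an $\mcal{S}_{1/2}$ symbol, and the degree/$\hbar$ bookkeeping closed by the uncertainty-principle inequality $\|\psi\|_{\Sobolev^m_\balpha}\ge\hbar^{1/2}\|\psi\|_{\Sobolev^{m-1}_\balpha}$. The only methodological difference is that you run the cancellation at the symbol level (Weyl quantization of the monomial plus the terminating Moyal expansion), whereas the paper works at the operator level, Taylor-expanding $-\tfrac{i}{\hbar}[\hat\balpha_a,\hat H]$ directly and commuting ordered monomials past $\hat R_{ab;\balpha}$ (its Lemma~3.4); these are equivalent, though your assertion that $\Op_\hbar(\tilde r_{jl})$ maps $\Sobolev^m_{\balpha(t)}\to\Sobolev^m_{\balpha(t)}$ boundedly needs the commutator bounds $\|\ad^I[\Op_\hbar(\tilde r)]\|\lesssim\hbar^{|I|/2}$, not just Calder\'on--Vaillancourt. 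Your treatment of the $K(p)+V(x)$ case coincides with the paper's.

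There is, however, one step that fails as written: the claim that $Q(t)=\sum_{|\gamma|\le k}\|(\hat\balpha-\balpha(t))^\gamma\psi(t)\|^2$ is comparable to $\|\psi(t)\|^2_{\Sobolev^k_{\balpha(t)}}$ uniformly in $\hbar$. The direction $Q\lesssim\|\cdot\|^2_{\Sobolev^k_\balpha}$ is false: the $\gamma=0$ term alone equals $\|\psi\|_{L^2}^2=1$, while for a coherent state centered at $\balpha$ one has $\|\psi\|^2_{\Sobolev^k_\balpha}\sim\hbar^{k}$, so the comparison degrades by $\hbar^{-k}$. Consequently your Gronwall argument, run literally, yields only
\[
\|\psi(T)\|^2_{\Sobolev^k_{\balpha(T)}}\;\lesssim\; e^{CT}\sum_{j\le k}\|\psi\|^2_{\Sobolev^j_{\balpha(0)}}\;\le\; e^{CT}\,\hbar^{-k}\,\|\psi\|^2_{\Sobolev^k_{\balpha(0)}},
\]
which loses the factor $\hbar^{k/2}$ that the applications (localization of evolved coherent states at scale $\hbar^{1/2}$) depend on. The fix is to restrict $Q$ to $|\gamma|=k$: your own estimates show the degree-$k$ Gronwall closes on itself, since after the cancellation the surviving symbol is a degree-$|\gamma|$ polynomial in $\beta-\balpha(t)$ times a bounded symbol, and the lower-degree Moyal remainders carry compensating powers $\hbar^{j/2}$ that restore them to the $\Sobolev^k_{\balpha(t)}$ level via the uncertainty inequality. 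This is exactly how the paper closes the estimate, at fixed degree $k$ only.
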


Note that above there is no formal connection between the trajectory $\balpha(t)$
and the solution $\psi$, but in practice one obtains the best bound if $\balpha(0)$
minimizes $\|\psi\|_{\Sobolev^k_\balpha}$.
As a corollary of Proposition~\ref{prp:localization}, we obtain a bound for initial data that are mixtures of coherent states.

\begin{corollary}
\label{cor:from-mixture}
Let $p\in L^1(\Real^{2D})$ be a probability density function and
$\hat{\rho} = \int p(\alpha) \ket{\alpha}\bra{\alpha}\diff\alpha$ be a density matrix that is a mixture of coherent states according to $p(\alpha)$.
Then for general admissible Hamiltonians,
\begin{equation}
\label{eq:mixture-general}
d_{W_p}(\Husimi_{\mcal{U}_T\rho}, \Phi_T \Husimi_{\rho}) \leq e^{C_pT} \hbar^{1/2}.
\end{equation}
If instead $\hat{H}=K(\hat{p}_1,...,\hat{p}_D) + V(\hat{x}_1,...,\hat{x}_D)$ we have the bound
\begin{equation}
\label{eq:mixture-meanfield}
\frac{1}{2D}\, d_{W_2}(\Husimi_{\mcal{U}_T\rho}, \Phi_T \Husimi_{\rho})^2 \leq (1 + 2 e^{\Lambda_HT}) \hbar.
\end{equation}
\end{corollary}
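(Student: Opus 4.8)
The plan is to reduce to a single coherent state via convexity of Wasserstein distances and then invoke Proposition~\ref{prp:localization}. The key general fact is that $W_p^p$ is convex along a common disintegration: if $\mu_i=\int\mu_i^\alpha\,p(\diff\alpha)$ for $i=1,2$ with the same mixing measure $p$, then gluing the optimal couplings of the fibers gives $W_p^p(\mu_1,\mu_2)\le\int W_p^p(\mu_1^\alpha,\mu_2^\alpha)\,p(\diff\alpha)$. I would apply this with $\mu_1^\alpha=\Husimi_{\mcal{U}_T\ket\alpha\bra\alpha}$ and $\mu_2^\alpha=(\Phi_T)_{\#}\Husimi_{\ket\alpha\bra\alpha}$, using linearity of the Husimi transform (so $\Husimi_{\mcal{U}_T\rho}=\int p(\alpha)\Husimi_{\mcal{U}_T\ket\alpha\bra\alpha}\,\diff\alpha$) and that pushforward commutes with mixtures (so $\Phi_T\Husimi_\rho=\int p(\alpha)(\Phi_T)_{\#}\Husimi_{\ket\alpha\bra\alpha}\,\diff\alpha$). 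It then suffices to bound $d_{W_p}\big(\Husimi_{\mcal{U}_T\ket\alpha\bra\alpha},(\Phi_T)_{\#}\Husimi_{\ket\alpha\bra\alpha}\big)$ uniformly in $\alpha$.

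For \eqref{eq:mixture-general} I would bound this single-coherent-state distance by the triangle inequality through the Dirac mass $\delta_{\alpha(T)}$ at the classical image $\alpha(T):=\Phi_T(\alpha)$. Write $\psi_T=e^{-i\hat H T/\hbar}\ket\alpha$. For the piece $d_{W_p}(\Husimi_{\psi_T},\delta_{\alpha(T)})^p=\int|z-\alpha(T)|^p\Husimi_{\psi_T}(z)\,\diff z$: since $\Husimi_{\psi_T}$ is a probability density, monotonicity of $L^p$-norms lets me replace $p$ by an even integer $2k\ge p$; then $\int|z-\gamma|^{2k}\Husimi_\psi\,\diff z$ is the expectation of the anti-Wick quantization of $|z-\gamma|^{2k}$, a polynomial of degree $2k$ in $(\hat\alpha-\gamma)$, so $\int|z-\gamma|^{2k}\Husimi_\psi\,\diff z\lesssim_k\sum_{j=0}^k\hbar^{\,k-j}\|\psi\|_{\Sobolev^j_\gamma}^2$ (the $\hbar$-powers coming from the anti-Wick corrections, equivalently the Gaussian smoothing relating $\Husimi_\psi$ to the Wigner function). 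Proposition~\ref{prp:localization} along the trajectory through $\alpha$ gives $\|\psi_T\|_{\Sobolev^j_{\alpha(T)}}\le e^{C_jT}\|\ket\alpha\|_{\Sobolev^j_\alpha}\lesssim_{j,D}e^{C_jT}\hbar^{\,j/2}$, the last step being the explicit Gaussian moments of a coherent state, so $d_{W_p}(\Husimi_{\psi_T},\delta_{\alpha(T)})\lesssim_p e^{C_pT}\hbar^{1/2}$, uniformly in $\alpha$. For the second piece, $d_{W_p}(\delta_{\alpha(T)},(\Phi_T)_{\#}\Husimi_{\ket\alpha})^p=\int|\Phi_T(w)-\Phi_T(\alpha)|^p\Husimi_{\ket\alpha}(w)\,\diff w\le\mathrm{Lip}(\Phi_T)^p\int|w-\alpha|^p\Husimi_{\ket\alpha}(w)\,\diff w$; admissibility of $H$ gives a globally Lipschitz flow with $\mathrm{Lip}(\Phi_T)\le e^{CT}$, and the coherent-state Husimi moment is $\lesssim_{p,D}\hbar^{\,p/2}$, so this piece is also $\lesssim_p e^{C_pT}\hbar^{1/2}$. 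Inserting into the convexity bound proves \eqref{eq:mixture-general}.

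For the sharp inequality \eqref{eq:mixture-meanfield} I would specialize to $p=2$ and $k=1$ and use the exact identity $\int|z-\gamma|^2\Husimi_\psi(z)\,\diff z=\|\psi\|_{\Sobolev^1_\gamma}^2+D\hbar$ (the $+D\hbar$ is the trace of the covariance $\tfrac\hbar2 I_{2D}$ of the Gaussian smoothing turning the Wigner function into the Husimi function). Thus $\int|z-\alpha(T)|^2\Husimi_{\psi_T}\,\diff z=Q(T)+D\hbar$ with $Q(t)=\|e^{-i\hat H t/\hbar}\ket{\alpha}\|_{\Sobolev^1_{\alpha(t)}}^2$, and the sharp bound \eqref{eq:meanfield-bd} gives $Q(T)\le e^{2\Lambda_HT}Q(0)=e^{2\Lambda_HT}D\hbar$; likewise $\int|z-\alpha|^2\Husimi_{\ket\alpha}\,\diff z=2D\hbar$, and for a split Hamiltonian a Gronwall estimate on the linearized flow gives $\mathrm{Lip}(\Phi_T)\le e^{\Lambda_HT}$. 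Feeding these estimates into the construction of a coupling of $\Husimi_{\psi_T}$ with $(\Phi_T)_{\#}\Husimi_{\ket\alpha}$ — following the device of Golse and Paul, one transports a coupling from time $0$ along $\mcal{U}_t\otimes\Phi_t$ and estimates its quadratic cost directly rather than via a bare triangle inequality — yields $\tfrac1{2D}d_{W_2}(\Husimi_{\psi_T},(\Phi_T)_{\#}\Husimi_{\ket\alpha})^2\le(1+2e^{\Lambda_HT})\hbar$, and the convexity step promotes this to the mixture.

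The routine ingredients are the anti-Wick moment identities, the explicit coherent-state Gaussian moments, and the Lipschitz bounds on the admissible and split flows. I expect the genuinely delicate point to be obtaining the sharp constant $(1+2e^{\Lambda_HT})$ in \eqref{eq:mixture-meanfield}: the two constituent measures each have spread $\sim e^{\Lambda_HT}\sqrt{D\hbar}$, so a naive triangle through $\delta_{\alpha(T)}$ vastly overestimates their mutual distance, and one must instead exploit that the quantum and classical evolutions keep the coupling aligned — the Golse–Paul mechanism, which is exactly where the clean exponential rate of \eqref{eq:meanfield-bd} enters. Note that the first challenge of the outline (tails in phase space) does not arise here, since the initial data are mixtures of coherent states whose localization is controlled directly by Proposition~\ref{prp:localization}.
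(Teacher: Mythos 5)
Your treatment of \eqref{eq:mixture-general} is correct and is essentially the paper's argument: convexity of $W_q^q$ under mixtures to reduce to a single coherent state, a triangle inequality through $\delta_{\Phi_T\alpha}$, reduction to even moments, the anti-Wick/Sobolev moment bound combined with Proposition~\ref{prp:localization}, and the exponential Lipschitz bound on $\Phi_T$ for the classical piece. Your version of the moment inequality (keeping the lower-order $\hbar^{k-j}\|\psi\|_{\Sobolev^j_\gamma}^2$ terms) is if anything more careful than the paper's.

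For \eqref{eq:mixture-meanfield} you diverge from the paper, and this is where your proposal has a gap. The paper does \emph{not} use a Golse--Paul-type dynamical coupling here: it runs the very same triangle inequality through $\delta_{\Phi_T\alpha}$, only now with the sharp ingredients --- the exact identity $\int|\beta_a-\alpha_a|^2\Husimi_{\ket\psi\bra\psi}\diff\beta=\|(\hat\alpha_a-\alpha_a)\psi\|_{L^2}^2+\hbar$ obtained by writing $\Op^w(|\beta_a-\alpha_a|^2)=\Op_\hbar(|\beta_a-\alpha_a|^2\ast\gamma_\hbar)$, the bound \eqref{eq:meanfield-bd} for $Q(T)$, and the $e^{\Lambda_H T}$-Lipschitz property of the split flow. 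Your instinct that a bare triangle inequality through the point mass inflates the constant (each leg has spread $\sim e^{\Lambda_H T}\sqrt{D\hbar}$) is a legitimate observation, but the alternative you propose is only gestured at: ``transport a coupling from time $0$ along $\mcal{U}_t\otimes\Phi_t$ and estimate its quadratic cost'' is not an argument as stated --- the quantum evolution does not act on couplings of Husimi densities, and making this precise requires the full Golse--Paul machinery of couplings that are operator-valued in one slot, together with its own Gronwall estimate, none of which you supply. As written, the sharp inequality \eqref{eq:mixture-meanfield} is asserted rather than proved. To match the paper you should simply carry out the triangle-inequality computation with the exact moments and the $e^{\Lambda_H T}$ Lipschitz constant (and fix your normalization: with the paper's convention $\gamma_{\hbar\,\Id}$ the anti-Wick correction is $2D\hbar$, not $D\hbar$); alternatively, if you insist on the coupling route, that step must be constructed in detail, not cited.
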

The derivation of Corollary~\ref{cor:from-mixture} from Proposition~\ref{prp:localization} is done
in Section~\ref{sec:pffrommixture}.  Note that Corollary~\ref{cor:from-mixture} already implies Theorem~\ref{thm:mean-field}.

The second, and more serious, challenge addressed in the proof of Theorem~\ref{thm:Wp-egorov} is that one must handle
\textit{arbitrary} states $\rho$.  Indeed, in general density matrices $\rho$ can encode long-range superpositions that manifest in
high-frequency oscillations in the Wigner function $\Wigner_\rho$.  These oscillations are not robustly encoded in the Husimi function $\Husimi_\rho$, so part of the content of Theorem~\ref{thm:Wp-egorov} is the conclusion that long-range superpositions in the initial data do not immediately affect the dynamics of the coarse-grained phase
space distribution.

To overcome this more serious difficulty we first make a small reduction.  We use the noising channel
(referring the reader to Section~\ref{sec:husimi} for the definition of the coherent state $\ket{\alpha}$)
\begin{align}
\label{E:noisesuper1}
\mcal{N}[\rho] := (2\pi\hbar)^{-D} \int \Husimi_\rho(\alpha) \ket{\alpha}\bra{\alpha}\diff \alpha,
\end{align}
which can alternatively be described as a mixture of phase-space translations of $\rho$,
\begin{equation}
\label{eq:translation-mixture}
\mcal{N}[\rho] = \int \mcal{T}_\alpha[\rho]  \,\gamma_\hbar(\alpha)\diff \alpha.
\end{equation}
Above, $\mcal{T}_\alpha$ is the phase-space translation operator defined in~\eqref{eq:Talpha-def} and $\gamma_\hbar$ is a Gaussian of variance $\hbar$.
By definition, $\mcal{N}[\rho]$ has the form required in Corollary~\ref{cor:from-mixture}.  That is, we have
\[
d_{W_p}(\Phi_T \Husimi_{\mcal{N}[\rho]}, \Husimi_{\mcal{U}_T \circ \mcal{N}[\rho]}) \leq e^{CT} \hbar^{1/2}.
\]

By the triangle inequality, what remains is to bound
\[
d_{W_p}(\Phi_T \Husimi_{\mcal{N}[\rho]}, \Phi_T \Husimi_\rho)
+ d_{W_p} (\Husimi_{\mcal{U}_T[\rho]}, \Husimi_{\mcal{U}_T\circ\mcal{N}[\rho]}).
\]
The first term is bounded by $e^{CT}\hbar^{1/2}$ because $\Phi_T$ is $e^{CT}$-Lipschitz.
The second term, on the other hand, is a genuinely quantum-mechanical quantity.  By convexity of the $W_p$ transport
distance and the identity~\eqref{eq:translation-mixture}, we have
\[
d_{W_p}( \Husimi_{\mcal{U}_T[\rho]}, \Husimi_{\mcal{U}_T\circ \mcal{N}[\rho]})^p
\leq \int \gamma_\hbar(\alpha)
d_{W_p}( \Husimi_{\mcal{U}_T[\rho]}, \Husimi_{\mcal{U}_T\circ \mcal{T}_\alpha[\rho]})^p \diff \alpha\,.
\]
Trivially we have $\mcal{U}_T\circ \mcal{T}_\alpha[\rho] = \mcal{U}_T\circ \mcal{T}_\alpha \circ \mcal{U}_{-T}
[\,\mcal{U}_T\rho]$.  That is, we need the following bound for general density matrices $\nu$:
\[
d_{W_p}(\Husimi_{\nu}, \Husimi_{\mcal{U}_T\circ\mcal{T}_\alpha\circ\mcal{U}_{-T}[\nu]}) \leq e^{CT}|\alpha|\,.
\]
Note that the operator $\mcal{U}_T\circ\mcal{T}_\alpha\circ\mcal{U}_{-T}[\nu]$ has the form
\[
\mcal{U}_T\circ\mcal{T}_\alpha\circ\mcal{U}_{-T}[\nu] = W_{T,\alpha} \,\nu\, W_{T,\alpha}^\dagger\,,
\]
with
\[
W_{T,\alpha} := e^{i\hat{H}T/\hbar} \tau_\alpha e^{-i\hat{H}T/\hbar}.
\]
As a consequence of Proposition~\ref{prp:localization}, this operator is well-localized in phase space in the sense that $\braket{\beta | W_{T,\alpha}|\beta'}$ is rapidly decaying in $|\beta-\beta'|$.

Then the main result will follow from the following general result:
\begin{proposition}[Transport by local unitaries, informally]
\label{prp:local-transport-informal}
Let $W$ be a unitary operator that is \textit{well-localized in phase space}.  Then for
arbitrary states $\rho$,
\[
d_{W_p}(\Husimi_\rho, \Husimi_{W\rho W^\dagger}) \leq C(p,W) \hbar^{1/2}.
\]
\end{proposition}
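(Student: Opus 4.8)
\emph{Proof proposal.} The plan is to pass through the noising channel $\mcal N$ to reduce to a mixture of coherent states, build an explicit near–diagonal transport plan there, and then transport back past $\mcal N$. Write $\mu_\sigma := (2\pi\hbar)^{-D}\Husimi_\sigma\,\diff\balpha$ for the probability measure attached to a density matrix $\sigma$. Since $\Husimi_{\mcal N[\sigma]} = \Husimi_\sigma * \gamma_\hbar$ for every $\sigma$, convolving with the width–$\sqrt\hbar$ Gaussian displaces mass by $O(\hbar^{1/2})$ in all $W_p$, so $d_{W_p}(\Husimi_\sigma,\Husimi_{\mcal N[\sigma]})\le C_p\hbar^{1/2}$ with no hypothesis on $\sigma$. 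I will write $\ell$ for the localization scale of $W$: the hypothesis ``well-localized'' should be made quantitative by a Lipschitz symplectomorphism $\chi$ (the classical action of $W$) together with $\|\chi-\mathrm{id}\|_\infty\le\ell$ and moment bounds $(2\pi\hbar)^{-D}\int|\beta-\chi(\gamma)|^q\,|\braket{\beta|W|\gamma}|^2\,\diff\beta\le C_q\,\ell^q$ for all $q\ge1$; for $W = W_{T,\alpha}$ this holds with $\ell\lesssim e^{CT}(|\alpha|+\hbar^{1/2})$ by Proposition~\ref{prp:localization} applied to $e^{\pm i\hat HT/\hbar}$ acting on coherent states. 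The target is $d_{W_p}(\Husimi_\rho,\Husimi_{W\rho W^\dagger})\le C_p\,\ell$, which in the application becomes $C(p,W)\hbar^{1/2}$ after integrating against $\gamma_\hbar(\alpha)$.

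The core step is the coupling for the noised state. Set
\[
\Pi(\diff\beta,\diff\gamma):=(2\pi\hbar)^{-2D}\,\Husimi_\rho(\gamma)\,|\braket{\beta|W|\gamma}|^2\,\diff\beta\,\diff\gamma .
\]
Because $(2\pi\hbar)^{-D}\int|\braket{\beta|W|\gamma}|^2\,\diff\beta = \|W\ket{\gamma}\|^2 = 1$, the $\gamma$–marginal of $\Pi$ is exactly $\mu_\rho$; and since $(2\pi\hbar)^{-D}\int\Husimi_\rho(\gamma)\ket{\gamma}\bra{\gamma}\,\diff\gamma = \mcal N[\rho]$ by~\eqref{E:noisesuper1}, the $\beta$–marginal is exactly $\mu_{W\mcal N[\rho]W^\dagger}$. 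The cost factors as
\[
\iint|\beta-\gamma|^p\,\diff\Pi = \int(2\pi\hbar)^{-D}\Husimi_\rho(\gamma)\Big[(2\pi\hbar)^{-D}\!\int|\beta-\gamma|^p\,|\braket{\beta|W|\gamma}|^2\,\diff\beta\Big]\diff\gamma ,
\]
and the inner bracket is $\le 2^{p-1}\big(C_p\ell^p+\|\chi-\mathrm{id}\|_\infty^p\big)\le C_p'\,\ell^p$ by $|\beta-\gamma|\le|\beta-\chi(\gamma)|+|\chi(\gamma)-\gamma|$ and the moment bound; as $(2\pi\hbar)^{-D}\Husimi_\rho$ is a probability density this gives $d_{W_p}(\Husimi_\rho,\Husimi_{W\mcal N[\rho]W^\dagger})\le C_p\,\ell$. (With $W=\mathrm{Id}$ this recovers the harmless $d_{W_p}(\Husimi_\rho,\Husimi_{\mcal N[\rho]})\le C_p\hbar^{1/2}$.)

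It remains to control $d_{W_p}(\Husimi_{W\mcal N[\rho]W^\dagger},\Husimi_{W\rho W^\dagger})$, and I expect this to be the main obstacle — it is precisely where the ``arbitrary $\rho$'' difficulty bites, since the naive couplings between $\Husimi_\rho$ and $\Husimi_{W\rho W^\dagger}$ degenerate to product couplings on states with long-range superpositions. Using $\mcal N[\rho]=\int\mcal T_v[\rho]\,\gamma_\hbar(v)\,\diff v$ gives $W\mcal N[\rho]W^\dagger=\int (W\tau_v W^\dagger)\,(W\rho W^\dagger)\,(W\tau_v W^\dagger)^\dagger\,\gamma_\hbar(v)\,\diff v$, so the claim to establish is that $W$ approximately intertwines the small ($|v|\lesssim\hbar^{1/2}$) phase-space translations with their linearizations: $W\tau_v W^\dagger$ acts, at the level of Husimi functions, like translation by $d\chi\cdot v$ up to an error of size $O(e^{CT}\hbar)$ (exact when $W$ is a translation or metaplectic, the linearization of $\chi\circ(\,\cdot+v)\circ\chi^{-1}$ in general). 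Granting this, $W\mcal N[\rho]W^\dagger$ is $W_p$–close to a re-noised state $\widetilde{\mcal N}[W\rho W^\dagger]$ with Gaussian covariance $\lesssim e^{CT}\hbar$, so $d_{W_p}(\Husimi_{W\mcal N[\rho]W^\dagger},\Husimi_{W\rho W^\dagger})\le C_p e^{CT}\hbar^{1/2}\le C_p\ell$, and the triangle inequality $d_{W_p}(\Husimi_\rho,\Husimi_{W\rho W^\dagger})\le d_{W_p}(\Husimi_\rho,\Husimi_{W\mcal N[\rho]W^\dagger})+d_{W_p}(\Husimi_{W\mcal N[\rho]W^\dagger},\Husimi_{W\rho W^\dagger})$ finishes. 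The intertwining estimate is the step requiring real work: it must be extracted from the explicit form $W=e^{i\hat HT/\hbar}\tau_\alpha e^{-i\hat HT/\hbar}$ and the admissibility of $H$ via Egorov-type control (Proposition~\ref{prp:localization}) on how $e^{\pm i\hat HT/\hbar}$ moves coherent states and their first phase-space derivatives, and is most cleanly incorporated into the definition of ``well-localized.'' As a check, the $p=1$ case follows directly from Kantorovich--Rubinstein duality: $d_{W_1}(\Husimi_\rho,\Husimi_{W\rho W^\dagger})=\sup_{\mathrm{Lip}(\varphi)\le1}\trace\big(\rho\,[W^\dagger\Op^{\mathrm{aW}}_\hbar(\varphi)W-\Op^{\mathrm{aW}}_\hbar(\varphi)]\big)$ with $\Op^{\mathrm{aW}}_\hbar(\varphi):=(2\pi\hbar)^{-D}\int\varphi(\balpha)\ket{\balpha}\bra{\balpha}\,\diff\balpha$, so for $p=1$ it suffices to bound $\|W^\dagger\Op^{\mathrm{aW}}_\hbar(\varphi)W-\Op^{\mathrm{aW}}_\hbar(\varphi)\|_{L^2\to L^2}\le C\,\ell\,\mathrm{Lip}(\varphi)$, which again reduces to the same localization of $W$.
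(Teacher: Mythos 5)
Your first step is a nice primal-side construction that the paper does not use: the coupling $\Pi(\diff\beta,\diff\gamma)\propto \Husimi_\rho(\gamma)\,|\braket{\beta|W|\gamma}|^2\,\diff\beta\,\diff\gamma$ does have marginals $\Husimi_\rho$ and $\Husimi_{W\mcal N[\rho]W^\dagger}$ (up to a normalization slip: $\Husimi_\rho$ itself is already a probability density, so the prefactor should be $(2\pi\hbar)^{-D}$, not $(2\pi\hbar)^{-2D}$), and the moment bounds you postulate are indeed available from Proposition~\ref{prp:localization} in the form $\|W\|_{Z^k}<\infty$. So $d_{W_p}(\Husimi_\rho,\Husimi_{W\mcal N[\rho]W^\dagger})\le C_p\ell$ is fine.

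The gap is the second half, and you have correctly located it but not closed it. The quantity $d_{W_p}(\Husimi_{W\mcal N[\rho]W^\dagger},\Husimi_{W\rho W^\dagger})$ is not a routine remainder: writing $W\mcal N[\rho]W^\dagger=\int (W\tau_vW^\dagger)(W\rho W^\dagger)(W\tau_vW^\dagger)^\dagger\gamma_\hbar(v)\diff v$, your ``approximate intertwining'' claim is precisely a statement of the form $d_{W_p}(\Husimi_\sigma,\Husimi_{V\sigma V^\dagger})\lesssim\text{(classical displacement)}+O(\hbar)$ for the localized unitary $V=W\tau_vW^\dagger$ and an \emph{arbitrary} state $\sigma=W\rho W^\dagger$ --- i.e.\ it is an instance of the proposition you are trying to prove (in fact a quantitatively stronger one, since you claim error $O(\hbar)$ rather than $O(\hbar^{1/2})$). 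The reason it cannot be dispatched by comparing kernels or Wigner functions is exactly the ``arbitrary $\rho$'' difficulty: long-range coherences make $W\tau_vW^\dagger\sigma(W\tau_vW^\dagger)^\dagger-\mcal T_{d\chi\cdot v}[\sigma]$ large in trace norm, and only the coarse-grained transport distance is small. The paper's proof of Theorem~\ref{thm:main-husimi} attacks this head-on from the dual side: it mollifies the Kantorovich potentials $f,g$ by $\varphi_R$, reduces the duality bound to the operator inequality~\eqref{eq:op-positivity}, and proves positivity via Lemma~\ref{lem:abstract-positivity} by showing $M^{-1}E$ is a contraction on the coherent-state sup-norm $X$, using $D=-(f+g)$ as a modulus of continuity for $g$ (inequality~\eqref{eq:gd-cty}) to absorb the error term $E$ into $M=\Op^w(D\ast\varphi_R)+\delta$. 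Nothing in your proposal substitutes for this mechanism, so as written the argument is circular at its core. (Your closing $p=1$ remark is a correct dual reformulation, but the operator bound $\|W^\dagger\Op^w(\varphi)W-\Op^w(\varphi)\|\le C\ell\,\mathrm{Lip}(\varphi)$ for merely Lipschitz $\varphi$ is again the nontrivial content, not a check.)
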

Proposition~\ref{prp:local-transport-informal} is stated more precisely and proven in Section~\ref{sec:W-transport}.
The proof uses the Kantorovich duality
for $W_p$.  After some simple manipulations, one can reduce Proposition~\ref{prp:local-transport-informal} to the statement
that the operator
\[
\Op^w(f\ast \gamma_\hbar) + W\Op^w(g\ast \gamma_\hbar)W^\dagger + C(p,W)\hbar^{1/2}
\]
(where $\Op^w$ is defined in~\eqref{E:Opwdef1}) is positive-semidefinite for $f$ and $g$ satisfying
\[
f(\alpha) + g(\beta) \leq |\alpha-\beta|^p.
\]
To prove this we develop a functional-analytic framework for proving positive semi-definiteness based on Lemma~\ref{lem:abstract-positivity}, as  explained in Section~\ref{sec:mainthm}.

\subsection{Organization of the paper}
In Section~\ref{sec:notation} we recall notation and preliminary results from semiclassical analysis.
Our results about growth of isotropic Sobolev norms are proved in Section~\ref{sec:localized-evol}, including the proof of Proposition~\ref{prp:localization} and Corollary~\ref{cor:from-mixture}.  Then in Section~\ref{sec:W-transport} we prove Proposition~\ref{prp:local-transport-informal}.  Finally in Section~\ref{sec:proof}
we complete the proofs of Theorem~\ref{thm:Wp-egorov} and Corollary~\ref{cor:LipEgo}.

\section{Notation and Preliminaries}
\label{sec:notation}

This Section introduces our notation for quantum mechanics and summarizes standard semiclassical analysis results.

\subsection{Phase space notation}
Points in phase space $\alpha\in\Real^{2D}$ are represented by Greek letters.  We write $\alpha=(\alpha_x,\alpha_p)$ where
$\alpha_x,\alpha_p\in\Real^D$.  The standard symplectic form on $\Real^{2D}$ is
\[
\omega = \begin{pmatrix}0 & \Id_{D} \\ -\Id_D & 0\end{pmatrix}.
\]
Phase space variables are indexed by $a\in[2D] = \{1,...,2D\}$, with the indices $1\leq a\leq D$ denoting position variables and $D+1\leq a\leq 2D$ momentum variables.  Accordingly, we write $\partial_a$ for the corresponding phase space partial derivative.

The symplectic form is used to define the Poisson bracket of two functions $f,g\in C^1(\Real^{2D})$:
\[
\{f,g\}_{\text{PB}} := \sum_{j,k} \partial_j f \,\omega_{jk}\, \partial_k g = (\nabla f)^\intercal \cdot \omega \cdot \nabla g\,.
\]

\subsection{Phase space operators and moments}

We denote the position operators on $\mathbb{R}^D$ by $\hat{x} = (\hat{x}_1,\ldots,\hat{x}_D)$, where $\hat{x}_j$ acts as multiplication by $x_j$. The momentum operators are $\hat{p} = (\hat{p}_1,\ldots,\hat{p}_D)$ with $\hat{p}_j = -i\hbar \partial_j$. For convenience, we introduce the phase space operator $\hat{\alpha} = (\hat{x},\hat{p})$ with components $\hat{\alpha}_a$ indexed by $a\in[2D]$:
\[
\hat{\alpha}_a :=
\begin{cases}
\hat{x}_a, &1\leq a\leq D \\
\hat{p}_{a-D}, &D+1\leq a\leq 2D
\end{cases}.
\]

We also need to work with moments of the phase space operators.  A \textit{monomial} of phase space operators is specified by
a tuple $\vec{m} = (m_1,...,m_k)\in[2D]^k$, and we write
\[
\hat{\alpha}_{\vec{m}} = \hat{\alpha}_{m_1} \hat{\alpha}_{m_2} \cdots \hat{\alpha}_{m_k}.
\]
Given a point in phase space $\alpha\in\Real^{2D}$ we also use the centered monomials $(\hat{\alpha}-\alpha)_{\vec{m}}$
which simply replace each $\hat{\alpha}_a$ above by $\hat{\alpha}_a-\alpha_a$.

The \textit{order} of a monomial, written $n(\vec{m})\in \bbN^{2D}$, is the tuple recording the number of appearances
of each index in $\vec{m}$.  For example in $D=2$ we have $n((1,3,4,3)) = (1,0,2,1)$, since in $(1,3,4,3)$ the number `$1$' appears one time, `$2$' appears zero times, `$3$' appears two times, and `$4$' appears one time.  We write $|n|(\vec{m})$ for the
total degree $|n|(\vec{m}) := \sum_a n_a(\vec{m})$.

The phase space operators satisfy the commutation relations
\[
[\hat{\alpha}_a,\hat{\alpha}_b] = i \hbar \omega_{ab}\,.
\]
Commutators against phase space operators are common enough that this also has a special notation.  In particular, given $a\in[2D]$
we write
\[
\ad_a[\hat{A}] := [\hat{\alpha}_a, \hat{A}]\,.
\]
Given a multi-index $I\in\bbN^{2D}$ we write the power
\[
\ad^I = \prod_{a=1}^{2D} \ad_a^{I_a}\,.
\]
Note that there is no operator ordering issue above because $\ad_a \ad_b = \ad_b\ad_a$ for our phase space operators.

\subsection{Quantization and Symbol classes}
Given a smooth function $A\in C^\infty(\Real^{2D})$ and $\hbar > 0$ we define the Weyl quantization $\Op_\hbar(A)$ to be the operator given by the formula
\[
\Op_\hbar(A)f(x) = (2\pi \hbar)^{-D}\int e^{i  p \cdot (x-y)/\hbar } A(\frac{x+y}{2},p) f(y) \diff y \diff p.
\]
Note that with this definition, $\Op_\hbar(x^k) = \hat{x}^k$ and $\Op_\hbar(p^k) =\hat{p}^k$.  We sometimes abuse notation and
simply write $\hat{A} = \Op_\hbar(A)$.  We make use of the identity
\[
\ad_a[ \Op_\hbar(A)] = i\hbar \sum_b \Op_\hbar( \omega_{ab}\, \partial_b A).
\]

We will often quantize functions with certain smoothness and decay conditions, which we encapsulate by symbol classes.  The standard
symbol class $\mcal{S}_\delta$ is given by
\begin{equation}
    \label{eq:Sdef}
    \mcal{S}_\delta := \{f\in C^\infty(\Real^{2D}) \,:\, \|\partial^j f\|_{L^\infty} \leq C_j \hbar^{-\delta j}\}.
\end{equation}
For $0\leq \delta\leq \frac12$, symbols in $\mcal{S}_\delta$ correspond to uniformly bounded operators by the Calder\'on-Vaillancourt theorem.
In particular, we use the following fact about such symbols:
\begin{proposition}
\label{prp:comm-ests}
Let $0\leq \delta\leq \frac12$ and $A\in\mcal{S}_\delta$.  Then $\hat{A}$ is bounded on $L^2$ and for any multi-index $I\in\bbN^{2D}$,
\begin{equation}
\label{eq:comm-symbol}
\|\ad^I [\hat{A}]\|_{\rm op} \leq C_{|I|} \hbar^{(1-\delta)|I|}.
\end{equation}
\end{proposition}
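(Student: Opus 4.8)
The plan is to reduce everything to the Calder\'on--Vaillancourt theorem. The $L^2$-boundedness of $\hat A$ is precisely that theorem applied to $A\in\mcal{S}_\delta$ with $0\le\delta\le\frac12$, so the only content is the commutator estimate~\eqref{eq:comm-symbol}. For that I would iterate the identity $\ad_a[\Op_\hbar(A)] = i\hbar\sum_b\Op_\hbar(\omega_{ab}\,\partial_b A)$ to collapse $\ad^I[\hat A]$ into a \emph{single} Weyl quantization, at the cost of $|I|$ derivatives and a factor $\hbar^{|I|}$, and then apply Calder\'on--Vaillancourt once more to the resulting symbol.

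For the iteration, since the $\omega_{ab}$ are constants, write $X_a := \sum_b \omega_{ab}\,\partial_b$ for the constant-coefficient first-order operator associated to the coordinate $\alpha_a$, so that $\ad_a[\Op_\hbar(A)] = i\hbar\,\Op_\hbar(X_a A)$. The operators $\ad_a$ commute (as noted in the text), and the $X_a$ commute, so induction on $|I|$ gives
\[
\ad^I[\hat A] = (i\hbar)^{|I|}\,\Op_\hbar\big(X^I A\big),\qquad X^I := \prod_{a=1}^{2D} X_a^{I_a},
\]
where $X^I$ is a constant-coefficient differential operator of order exactly $|I|$; expanding it, $X^I A$ is an $\hbar$-independent linear combination of the derivatives $\partial^J A$ with $|J|=|I|$.

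Next I would estimate the symbol $X^I A$ in $\mcal{S}_\delta$. For any multi-index $k$ one has $\partial^k(X^I A)=X^I(\partial^k A)$, which is bounded by a constant depending only on $I$ times $\max_{|J|=|I|}\|\partial^{k+J}A\|_{L^\infty}$, hence by $C_{|k|,|I|}\,\hbar^{-\delta(|k|+|I|)}$ because $A\in\mcal{S}_\delta$. Therefore the rescaled symbol $B := \hbar^{\delta|I|}\,X^I A$ satisfies $\|\partial^k B\|_{L^\infty}\le C_{|k|,|I|}\,\hbar^{-\delta|k|}$ with constants \emph{independent of} $\hbar$, i.e.\ $B\in\mcal{S}_\delta$ with seminorms bounded uniformly in $\hbar$. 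By Calder\'on--Vaillancourt (whose bound on $\|\Op_\hbar(B)\|_{\rm op}$ involves only finitely many of these uniformly bounded seminorms), $\|\Op_\hbar(B)\|_{\rm op}\le C(|I|,D)$, and combining with the displayed formula,
\[
\|\ad^I[\hat A]\|_{\rm op} = \hbar^{|I|}\,\|\Op_\hbar(X^I A)\|_{\rm op} = \hbar^{(1-\delta)|I|}\,\|\Op_\hbar(B)\|_{\rm op} \le C(|I|,D)\,\hbar^{(1-\delta)|I|};
\]
taking the maximum over the finitely many $I$ with $|I|$ fixed yields the stated constant $C_{|I|}$.

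The one point requiring care — and the reason for the hypothesis $\delta\le\frac12$ — is that Calder\'on--Vaillancourt must be invoked with seminorm bounds uniform in $\hbar$. This is transparent after the semiclassical rescaling $(x,p)\mapsto(\sqrt\hbar\,x,\sqrt\hbar\,p)$ intertwining $\Op_\hbar$ with the unit-scale Weyl quantization: a $k$-th derivative of the rescaled symbol acquires a factor $\hbar^{|k|/2}$, so a symbol with $\|\partial^k\!\cdot\|_{L^\infty}\lesssim\hbar^{-\delta|k|}$ becomes one with all seminorms $O(\hbar^{(1/2-\delta)|k|})=O(1)$. This is precisely where $\delta\le\frac12$ enters, and beyond this bookkeeping there is no genuine obstacle.
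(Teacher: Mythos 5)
Your proof is correct. The paper states Proposition~\ref{prp:comm-ests} without proof (as a standard consequence of the Calder\'on--Vaillancourt theorem), and your argument --- iterating the exact commutator identity $\ad_a[\Op_\hbar(A)] = i\hbar\sum_b\Op_\hbar(\omega_{ab}\partial_b A)$ to collapse $\ad^I[\hat A]$ into $(i\hbar)^{|I|}\Op_\hbar(X^I A)$ and then applying Calder\'on--Vaillancourt to the rescaled symbol $\hbar^{\delta|I|}X^I A\in\mcal{S}_\delta$ --- is precisely the standard route the authors are implicitly relying on, including the correct identification of where $\delta\le\tfrac12$ enters.
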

Beals' lemma~\cite[Theorem 8.3]{zworski} shows that bounds of the form~\eqref{eq:comm-symbol} \textit{characterize} the symbol class $\mcal{S}_\delta$,
but we do not use this fact.

We define admissible Hamiltonians in terms of the symbol class $\mcal{S}_\delta$.
\begin{definition}[Admissible Hamiltonians]
\label{def:admissibleH}
A Hamiltonian $H\in C^\infty(\Real^{2D})$ is \textit{admissible}
if $\partial_{ab} H \in \mcal{S}_{\frac12}$ for every second order partial derivative.  That is, $H$ is
admissible if for every multi-index $J$ of order $|J|\geq2$,
\[
\|\partial_\alpha^J H\|_{L^\infty} \leq C_J \hbar^{1-|J|/2}.
\]
\end{definition}

\subsection{Wigner and Husimi functions}
\label{sec:husimi}
 To each operator $A:L^2(\Real^D) \to L^2(\Real^D)$ with Schwartz kernel $K_A$ we can associate a tempered distribution $\Wigner_A$ (which is a bounded function when $A$ is trace-class) by
\[
\Wigner_A(x,p) = (2\pi\hbar)^{-D} \int e^{i p \cdot(x-y)/\hbar}  K_A(x+y/2,x-y/2)  \diff y.
\]
Given an operator $A$ and a function $f$ we have the identities
\[
\Op_\hbar(\Wigner_A) = (2\pi \hbar)^{-D} A.
\]
We therefore write $\Op_\hbar^{-1} = (2\pi \hbar)^D \Wigner_A$.  Similarly
\[
\Wigner_{\Op_\hbar(f)} = (2\pi\hbar)^{-D} f,
\]
so we write $\Wigner^{-1} := (2\pi\hbar)^D \Op_\hbar$.  With these definitions we have the identity
\begin{equation}
\label{eq:wigner-weyl}
\trace[\Op_\hbar(f) A] = \langle f, \Wigner_A\rangle_{L^2(\Real^{2D})},
\end{equation}
which along with the identity $\Op_\hbar(1) = \Id$ implies that $\trace[\rho] = \int \Wigner_\rho(x,p)\diff x\diff p$.

For each point in phase space, $\alpha\in\Real^{2D}$, we write
$\ket{\alpha} \in L^2(\Real^D)$ to denote the isotropic coherent state centered at $\alpha$.  For concreteness, $\ket{\alpha}$ is
chosen such that
\[
\langle x | \alpha \rangle = (2\pi \hbar)^{-D/4} \exp(i \alpha_p\cdot(x-\alpha_x/2)/\hbar)\exp(-|x-\alpha_x|^2 / (2\hbar)).
\]
With this choice, one has $\ket{\alpha} = \tau_\alpha \ket{0}$ where $\tau_\alpha$ is the phase space translation operator
$\tau_\alpha := \exp(i (\alpha_p \cdot\hat{x}-  \alpha_x \cdot \hat{p})/\hbar)$ and $\ket{0}$ is the ground state of the quantum harmonic oscillator with unit mass in $D$ dimensions.  We then define the
translation $\mcal{T}_\alpha$ operator for \textit{states} by
\begin{equation}
\label{eq:Talpha-def}
\mcal{T}_\alpha[\rho] = \tau_\alpha \,\rho\, \tau_{-\alpha}.
\end{equation}
Moreover, the states $\ket{\alpha}$ decompose the identity as follows:
\[
\Id = (2\pi \hbar)^{-D} \int_{\Real^{2D}} |\alpha\rangle \langle \alpha |\,\diff\alpha\,.
\]

The Wigner function of the pure state $\ket{\alpha}\bra{\alpha}$ is also a Gaussian, given by
\begin{equation}
\label{eq:gaussian-wigner}
\Wigner_{\ket{\alpha}\bra{\alpha}}(x,p) = (2\pi\hbar)^{-D} \exp(-|(x,p)-\alpha|^2/(2\hbar)).
\end{equation}
Given any quantum state $\rho$ we define the Husimi function
\begin{equation}
\label{eq:Husimi-def}
\Husimi_\rho(\alpha) := (2\pi\hbar)^{-D}\braket{\alpha| \rho |\alpha}\,,
\end{equation}
which is a probability distribution over $\mathbb{R}^{2D}$.  Combining~\eqref{eq:wigner-weyl} with~\eqref{eq:gaussian-wigner} we obtain the identity
\[
\Husimi_\rho = \Wigner_\rho \ast \gamma_{\hbar\,\Id},
\]
where $\gamma_\sigma$ is the $L^1$-normalized Gaussian on $\Real^{2D}$ with covariance $\sigma$.

We also will make use of the noising superoperator $\mcal{N}[\rho] := (2\pi\hbar)^{-D} \int \Husimi_\rho(\alpha) \ket{\alpha}\bra{\alpha}\diff \alpha$ that we defined in~\eqref{E:noisesuper1}. We note that the noising operator satisfies the identity
\begin{equation}
\label{eq:noising-identity}
\Wigner_{\mcal{N}[\rho]} = \Wigner_\rho \ast \gamma_{2\hbar\,\Id}\,.
\end{equation}

\section{Propagation of Local States}
\label{sec:localized-evol}

In this Section we prove Proposition~\ref{prp:localization}.  There are two claims in this proposition,
one pertaining to general Hamiltonians $H$ with $\partial_{ab}H\in\mcal{S}_{\frac12}$, and the other to
Hamiltonians of the form $H=K(p)+V(x)$.  First in Section~\ref{sec:KV} below we handle the latter case
and prove as a consequence~\eqref{eq:meanfield-bd}.  Then in Section~\ref{sec:Hgenloc} we handle the former
case and prove~\eqref{eq:general-bd}.

\subsection{Hamiltonians of the form $K(p)+V(x)$}
\label{sec:KV}
First we prove Proposition~\ref{prp:localization} in the case that $H=K(p)+ V(x)$.  In this case we can prove
a quantitative bound in the isotropic Sobolev space $\Sobolev^1$.  Let $\psi\in L^2(\Real^D)$ and
$\alpha(t)$ be a trajectory of Hamilton's equations in $\Real^{2D}$.  Then we define the quantity
\[
Q(t) :=
\|e^{-i\hat{H}t/\hbar}\psi\|_{\Sobolev^1_{\alpha(t)}}^2  := \| (\hat{\alpha}-\alpha(t)) e^{-i\hat{H}t/\hbar}\psi\|_{L^2}^2
= \sum_{a=1}^{2D} \|(\hat{\alpha}_{a} - \alpha_a(t)) e^{-i\hat{H}t/\hbar}\psi\|_{L^2}^2.
\]
The bound~\eqref{eq:meanfield-bd} will follow from the estimate
\[
\frac{d}{dt} Q(t) \leq (\|\nabla K\|_{\rm Lip} + \|\nabla V\|_{\rm Lip}) Q(t).
\]
Without loss of generality it suffices to prove this bound at $t=0$ (as otherwise we just conjugate by the
time evolution, redefining $\psi$ to be the time evolution up to time $t$).  To compute the time derivative first
we observe, writing $\hat{A}(t) := e^{i\hat{H}t/\hbar} \hat{A} e^{-i\hat{H}t/\hbar}$, that
\begin{align*}
\left. \frac{d}{dt}\hat{x}_j(t)\right|_{t=0} &= \partial_j K(\hat{p}) \\
\left. \frac{d}{dt}\hat{p}_j(t)\right|_{t=0} &= -\partial_j V(\hat{x}).
\end{align*}
The idea is to use a first order Taylor approximation for each of these equations, as described in the following lemma.
\begin{lemma}
\label{lem:taylor}
If $F\in C^1(\Real^D)$ then
\begin{equation}
\label{eq:F-remainder}
F(x) = F(0) + \sum_j x_j R_j(x)
\end{equation}
with
\begin{equation}
\label{eq:Rjdef}
R_j(x) = \int_0^1 \partial_j F(sx)\diff s\,.
\end{equation}
\end{lemma}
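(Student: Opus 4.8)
The plan is to verify the Taylor-with-integral-remainder formula \eqref{eq:F-remainder}--\eqref{eq:Rjdef} by the fundamental theorem of calculus applied along the segment joining $0$ to $x$. First I would fix $x\in\Real^D$ and introduce the auxiliary function $g(s) := F(sx)$ for $s\in[0,1]$, which is $C^1$ on $[0,1]$ because $F\in C^1(\Real^D)$ and $s\mapsto sx$ is smooth. By the chain rule, $g'(s) = \sum_j x_j\,\partial_j F(sx)$, and this is continuous in $s$, so the fundamental theorem of calculus gives
\[
F(x) - F(0) = g(1) - g(0) = \int_0^1 g'(s)\diff s = \int_0^1 \sum_j x_j\,\partial_j F(sx)\diff s.
\]
Since the sum is finite, I can interchange summation and integration to pull each $x_j$ (which is independent of $s$) outside the integral, obtaining $F(x) - F(0) = \sum_j x_j \int_0^1 \partial_j F(sx)\diff s = \sum_j x_j R_j(x)$ with $R_j$ as defined in \eqref{eq:Rjdef}. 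This is exactly the claimed identity.

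There is essentially no obstacle here: the only points requiring a word of care are that $g$ is genuinely $C^1$ up to the endpoints of $[0,1]$ (so that FTC applies without qualification) and that $R_j$ is well-defined, i.e. the integrand $s\mapsto\partial_j F(sx)$ is continuous hence integrable on $[0,1]$ — both immediate from $F\in C^1$. One could optionally remark that each $R_j$ is itself continuous in $x$ (by dominated convergence or uniform continuity of $\partial_j F$ on compact sets), since this mild regularity of the remainder functions is what makes the lemma useful in the subsequent Gronwall argument, where \eqref{eq:F-remainder} is applied to $F = \partial_j K$ and $F = \partial_j V$ shifted to be centered at the relevant phase-space point. But the statement as written asserts only the algebraic identity, so the FTC computation above suffices.
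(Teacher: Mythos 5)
Your proof is correct and is exactly the argument the paper's one-line proof (``this is simply the fundamental theorem of calculus'') alludes to: apply FTC to $g(s)=F(sx)$ and use the chain rule. No further comment is needed.
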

\begin{proof}
This is simply the fundamental theorem of calculus.
\end{proof}
Therefore, for any $x_0,p_0\in\Real^D$ there are functions $R_{jk;p_0}^K$ and $R_{jk;x_0}^V$ such that
\begin{align*}
\left. \frac{d}{dt}\hat{p}_j(t)\right|_{t=0} &= \partial_j K(p_0) + \sum_k (\hat{p}_k-(p_0)_k) R_{jk;p_0}^K(\hat{p}) \\
\left. \frac{d}{dt}\hat{p}_j(t)\right|_{t=0} &= -\partial_j V(x_0) + \sum_k (\hat{x}_k-(x_0)_k) R_{jk;x_0}^V(\hat{x})\,.
\end{align*}
Then writing $\alpha(t)= (x(t),p(t))$, using the equations of motion for $x(t)$ and $p(t)$, and
applying~\eqref{eq:F-remainder} with $F=\partial_j V$, we have
\begin{align*}
\left.\frac{d}{dt} (\hat{x}_j(t)-x_j(t))\right|_{t=0} &= \sum_{k=1}^D (\hat{p}_k - p_k) R^K_{jk;p}(\hat{p}) \\
\left.\frac{d}{dt} (\hat{p}_j(t)-p_j(t))\right|_{t=0} &= -\sum_{k=1}^D (\hat{x}_k - x_k) R^V_{jk;x}(\hat{x}).
\end{align*}
We can make this more compact by using matrix notation, writing $R^K$ and $R^V$ for matrix-valued functions on $\Real^D$:
\begin{align*}
\left.\frac{d}{dt} (\hat{x}(t)-x(t))\right|_{t=0} &= R^K(\hat{p}) \cdot (\hat{p} - p)  \\
\left.\frac{d}{dt} (\hat{p}(t)-p(t))\right|_{t=0} &= - R^V(\hat{x}) \cdot (\hat{x} - x).
\end{align*}
The operators $R^K(\hat{p})$ and $R^V(\hat{x})$ acting on $L^2(\Real^D)\otimes \bbC^D$ satisfy the operator norm bounds
\begin{align*}
\|R^K(\hat{p})\|_{L^2(\Real^D) \otimes \bbC^D \to L^2(\Real^D)\otimes \bbC^D}
&\!=\! \sup_{p\in\Real^D} \|R^K(p)\|_{\bbC^D\to\bbC^D} \!\leq\! \sup_{p\in\Real^D} \|\operatorname*{Hess} K\|_{\bbC^D\to\bbC^D}
= \|\nabla K \|_{\rm Lip}\\
\|R^V(\hat{x})\|_{L^2(\Real^D) \otimes \bbC^D \to L^2(\Real^D)\otimes \bbC^D}
&\!=\! \sup_{x\in\Real^D} \|R^V(x)\|_{\bbC^D\to\bbC^D} \!\leq\! \sup_{x\in\Real^D} \|\operatorname*{Hess} V\|_{\bbC^D\to\bbC^D}
= \|\nabla V \|_{\rm Lip} \,.
\end{align*}
Here we bounded $R^K$ by $\operatorname*{Hess} K$ using the triangle inequality over the integral~\eqref{eq:Rjdef}, and similarly for $R^V$. Observing that $\|(\hat{\alpha}-\alpha(t))e^{-i\hat{H}t/\hbar}\psi\|_{L^2} = \|(\hat{\alpha}(t)-\alpha(t))\psi\|_{L^2}$,
we can now compute
\begin{align*}
\left.\frac{d}{dt} Q(t) \right|_{t=0}
&= 2 \langle (\hat{x}-x)\psi, R^K(\hat{p}) (\hat{p}-p) \psi\rangle_{L^2(\Real^D)\otimes \bbC^D}
+ 2 \langle (\hat{p}-p)\psi, R^V(\hat{x}) (\hat{x}-x) \psi\rangle_{L^2(\Real^D)\otimes \bbC^D} \\
&\leq 2 (\|\nabla K\|_{\rm Lip} + \|\nabla V\|_{\rm Lip}) \|(\hat{x}-x)\psi\| \|(\hat{p}-p)\psi\| \\
&\leq (\|\nabla K\|_{\rm Lip} + \|\nabla V\|_{\rm Lip}) (\|(\hat{x}-x)\psi\|^2 + \|(\hat{p}-p)\psi\|^2) \\
&= (\|\nabla K\|_{\rm Lip} + \|\nabla V\|_{\rm Lip}) Q\,,
\end{align*}
as desired.

\subsection{Higher order moments}
\label{sec:Hgenloc}
Now we generalize the argument of the previous Section to more general Hamiltonians on phase space $\Real^{2D}$ and to higher order moments.

To work with $\hat{H}$ we first repurpose Lemma~\ref{lem:taylor} to apply to functions in $\mcal{S}_{\frac12}$.
\begin{lemma}
\label{lem:symbol-taylor}
Let $F\in C^\infty(\Real^N)$ with $\partial_j F\in\mcal{S}_{\frac12}$.
Then there exist functions $G_j\in \mcal{S}_{\frac12}$
such that
\[
F(x) = F(0) + \sum_{j=1}^N x_j G_j(x)\,.
\]
\end{lemma}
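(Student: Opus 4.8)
The plan is to take exactly the remainder functions from Lemma~\ref{lem:taylor}: set
\[
G_j(x) := \int_0^1 \partial_j F(sx)\diff s\,.
\]
The identity $F(x) = F(0) + \sum_{j} x_j G_j(x)$ is then immediate from the fundamental theorem of calculus applied to $s\mapsto F(sx)$, exactly as in the proof of Lemma~\ref{lem:taylor}; nothing new is needed there. The only point requiring argument is that each $G_j$ lies in $\mcal{S}_{\frac12}$, and here it is essential that the hypothesis is imposed on $\partial_j F$ rather than on $F$ itself: $F$ need not even be bounded, but $\partial_j F\in\mcal{S}_{\frac12}$ supplies precisely the derivative bounds we shall use.

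To check the symbol bound we differentiate under the integral sign. Since $\partial_j F$ is smooth with all derivatives globally bounded, differentiation under the integral is justified, and for every multi-index $\beta\in\bbN^N$ we obtain
\[
\partial^\beta G_j(x) = \int_0^1 s^{|\beta|}\,(\partial^\beta\partial_j F)(sx)\diff s\,.
\]
Taking the supremum over $x$ and invoking $\partial_j F\in\mcal{S}_{\frac12}$, which gives $\|\partial^\beta\partial_j F\|_{L^\infty}\leq C_{|\beta|}\hbar^{-|\beta|/2}$, we find
\[
\|\partial^\beta G_j\|_{L^\infty}\;\leq\;\Big(\int_0^1 s^{|\beta|}\diff s\Big)\,\|\partial^\beta\partial_j F\|_{L^\infty}\;=\;\frac{1}{|\beta|+1}\,\|\partial^\beta\partial_j F\|_{L^\infty}\;\leq\;C_{|\beta|}\,\hbar^{-|\beta|/2}\,,
\]
which is exactly the defining estimate for $\mcal{S}_{\frac12}$. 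Hence $G_j\in\mcal{S}_{\frac12}$ for each $j$.

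There is really no obstacle in this argument; its whole content is the bookkeeping observation that the radial averaging $G_j(x)=\int_0^1\partial_j F(sx)\diff s$ costs no derivatives, in the sense that $\partial^\beta G_j$ only ever involves $\partial^\beta$ of $\partial_j F$ and never a derivative of strictly higher order. Consequently the growth rate $\hbar^{-|\beta|/2}$ is inherited intact from $\partial_j F$, and the extra factor $\tfrac{1}{|\beta|+1}\le 1$ coming from the $s$-integral only improves the constant. This is the repurposing of Lemma~\ref{lem:taylor} to the symbol class $\mcal{S}_{\frac12}$ that we will use in what follows.
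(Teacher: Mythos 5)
Your proof is correct and follows the same route as the paper: the same remainder $G_j(x)=\int_0^1\partial_j F(sx)\diff s$, the same appeal to the fundamental theorem of calculus, and your explicit computation $\partial^\beta G_j(x)=\int_0^1 s^{|\beta|}(\partial^\beta\partial_j F)(sx)\diff s$ is just the detailed version of the paper's one-line remark that the integrand is uniformly in $\mcal{S}_{\frac12}$ and the integration interval is bounded.
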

\begin{proof}
Again by the fundamental theorem of calculus we have
\[
G_j := \int_0^1 \partial_j F(sx)\diff s\,.
\]
The integrand is uniformly in $\mcal{S}_\frac12$ and the integral is over a bounded interval, so it follows that $G_j\in\mcal{S}_\frac12$.
\end{proof}

As a consequence of Lemma~\ref{lem:symbol-taylor} we have the following.
\begin{lemma}
Let $H$ be an admissible Hamiltonian as in Definition~\ref{def:admissibleH}.  Then
for any $\alpha\in\Real^{2D}$ we have the following ``Taylor expansion''
\begin{equation}
\label{eq:op-taylor}
-\frac{i}{\hbar}[\hat{\alpha}_a, \hat{H}] =
\left(\sum_b \omega_{ab} \partial_b H(\alpha) \right)\Id
+ \sum_b \hat{R}_{ab;\alpha} \cdot (\hat{\alpha}_b - \alpha_b\Id)\,,
\end{equation}
where $\hat{R}_{ab;\alpha} = \Op_\hbar(R_{ab;\alpha})$ with $R_{ab;\alpha}\in\mcal{S}_{\frac12}$.  In particular,
$\hat{R}_{ab;\alpha}$ satisfies the bounds~\eqref{eq:comm-symbol} as in Proposition~\ref{prp:comm-ests}.
\end{lemma}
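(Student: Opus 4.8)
The plan is to reduce the identity~\eqref{eq:op-taylor} to the classical symbol Taylor expansion of Lemma~\ref{lem:symbol-taylor}, applied to the Weyl symbol of $-\tfrac{i}{\hbar}[\hat\alpha_a,\hat H]$, and then to check that the operator‑ordering correction introduced upon quantization cancels after summing over $b$.

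First I would invoke the identity $\ad_a[\Op_\hbar(A)] = i\hbar\sum_b\Op_\hbar(\omega_{ab}\partial_b A)$ with $A=H$ to get $-\tfrac{i}{\hbar}[\hat\alpha_a,\hat H] = \Op_\hbar(F)$ where $F := \sum_b\omega_{ab}\partial_b H$. Admissibility of $H$ gives $\partial_{bc}H\in\mcal{S}_{\frac12}$, hence $\partial_c F = \sum_b\omega_{ab}\partial_{bc}H\in\mcal{S}_{\frac12}$ for every $c$, so $F$ is in the scope of Lemma~\ref{lem:symbol-taylor} (the base point may be translated from $0$ to $\alpha$ at no cost, since the $\mcal{S}_{\frac12}$ seminorms are translation invariant). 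This yields $F(\beta) = F(\alpha) + \sum_b(\beta_b-\alpha_b)R_{ab;\alpha}(\beta)$ with $R_{ab;\alpha}(\beta) := \int_0^1(\partial_b F)(\alpha+s(\beta-\alpha))\,\diff s$, and since the integrand is uniformly in $\mcal{S}_{\frac12}$ over $s\in[0,1]$ the coefficients satisfy $R_{ab;\alpha}\in\mcal{S}_{\frac12}$, uniformly in $\alpha$.

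Next I would quantize. For each fixed $b$ the factor $\ell_b(\beta) := \beta_b-\alpha_b$ is affine, so the only higher Moyal term survives at first order and one obtains the exact identity $\Op_\hbar\big((\beta_b-\alpha_b)R_{ab;\alpha}\big) = \hat R_{ab;\alpha}(\hat\alpha_b-\alpha_b) - \tfrac{i\hbar}{2}\sum_k\Op_\hbar(\omega_{bk}\,\partial_k R_{ab;\alpha})$, which follows by commuting $\hat\alpha_b$ through $\hat R_{ab;\alpha}$ using the $\ad$ identity. Summing over $b$, the total correction is $-\tfrac{i\hbar}{2}\Op_\hbar\big(\sum_{b,k}\omega_{bk}\,\partial_k R_{ab;\alpha}\big)$, and it vanishes: one has $\partial_k R_{ab;\alpha}(\beta) = \int_0^1 s\,(\partial_k\partial_b F)(\alpha+s(\beta-\alpha))\,\diff s$, which is symmetric in $(b,k)$ and hence annihilated by the antisymmetric $\omega_{bk}$. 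Quantizing the whole Taylor expansion and using $F(\alpha) = \sum_b\omega_{ab}\partial_b H(\alpha)$ then produces~\eqref{eq:op-taylor} exactly; finally, $R_{ab;\alpha}\in\mcal{S}_{\frac12}$ together with Proposition~\ref{prp:comm-ests} supplies the bounds $\|\ad^I[\hat R_{ab;\alpha}]\|_{\rm op}\le C_{|I|}\hbar^{|I|/2}$ of~\eqref{eq:comm-symbol}.

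The genuinely non-obvious step, and the one I would handle with care, is this last cancellation: term by term $\Op_\hbar((\beta_b-\alpha_b)R_{ab;\alpha})$ is \emph{not} equal to $\hat R_{ab;\alpha}(\hat\alpha_b-\alpha_b)$, because Weyl quantization symmetrizes a product of symbols; it is only after the sum over $b$, using that the Taylor coefficients have symmetric Jacobian ($\partial_k R_{ab;\alpha} = \partial_b R_{ak;\alpha}$), that the per-$b$ ordering discrepancies telescope away and the clean one-sided form~\eqref{eq:op-taylor} with the \emph{correct} constant $\sum_b\omega_{ab}\partial_b H(\alpha)$ survives. One should also take care to use the first-order Moyal formula in exact, not merely leading-order, form — which is legitimate precisely because $\ell_b$ has degree one.
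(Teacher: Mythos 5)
Your proof is correct and follows the paper's intended route: the paper presents this lemma as an immediate consequence of Lemma~\ref{lem:symbol-taylor} applied to the symbol $F=\sum_b\omega_{ab}\partial_b H$ of $-\tfrac{i}{\hbar}[\hat\alpha_a,\hat H]$, leaving the quantization step implicit. Your careful handling of that step --- showing that the $O(\hbar)$ Weyl-ordering corrections cancel after summing over $b$ because $\partial_k R_{ab;\alpha}$ is symmetric in $(b,k)$ while $\omega_{bk}$ is antisymmetric --- correctly supplies the one detail the paper omits (the sign of the per-$b$ correction term in your intermediate identity is off, but this is immaterial since the term vanishes in the sum).
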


Before stating our assumptions relating
$H$ and $\hat{H}$, we introduce some notation.  We first define the $k$-th order localization norms
around $\alpha\in\Real^{2D}$,
\[
\|\psi\|_{\Sobolev^k_\alpha} := \sum_{|J|=k} \|(\hat{\alpha}-\alpha)_J \psi\|_{L^2}.
\]
A first observation is that the $\Sobolev^k_\alpha$ norm is bounded in terms of the $\Sobolev^1_\alpha$ norm due to the uncertainty
principle.  In particular, the Heisenberg uncertainty principle
\[
\|(\hat{x}-x_0)\psi\|_{L^2}\|(\hat{p}-p_0)\psi\|_{L^2}  \geq \frac12 \,\hbar
\]
implies, by the AM-GM inequality, the bound
\[
\|(\hat{x}-x_0)\psi\|_{L^2} + \|(\hat{p}-p_0)\psi\|_{L^2} \geq \sqrt{\hbar}\,.
\]
Therefore, we have by induction the consequence
\begin{equation}
\label{eq:Hk-lb}
\|\psi\|_{\Sobolev^k_\alpha} \geq \sqrt{\hbar}\, \|\psi\|_{\Sobolev^{k-1}_\alpha}.
\end{equation}

Before we are ready to prove the bound~\eqref{eq:general-bd} we need one more lemma.
\begin{lemma}
\label{lemma:mixed-monomial-bd}
Let $\vec{m}\in\bbN^k$ and $\vec{m}'\in\bbN^{k'}$, $R\in\mcal{S}_{\frac12}$, and $\alpha\in\Real^{2D}$.  Then for
$\hat{R}=\Op_\hbar(R)$,
\begin{equation}
\label{eq:mixed-monomial-bd}
\|(\hat{\alpha}-\alpha)_{\vec{m}}\, \hat{R} \,(\hat{\alpha}-\alpha)_{\vec{m}'} \,\psi\|_{L^2}
\lsim \|\psi\|_{\Sobolev^{k+k'}_\alpha}.
\end{equation}
\end{lemma}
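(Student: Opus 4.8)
The plan is to move all the centered phase-space operators to one side by repeatedly commuting $\hat R$ past the monomials $(\hat\alpha-\alpha)_{\vec m}$ and $(\hat\alpha-\alpha)_{\vec m'}$, using the commutator estimates for symbols in $\mcal{S}_{\frac12}$ from Proposition~\ref{prp:comm-ests}. Concretely, note that $[(\hat\alpha_b-\alpha_b),\hat R]=[\hat\alpha_b,\hat R]=\ad_b[\hat R]$, which is again the quantization of a symbol (namely $i\hbar\sum_c\omega_{bc}\partial_c R$), and iterating shows that $\ad^I[\hat R]$ is bounded with $\|\ad^I[\hat R]\|_{\rm op}\lsim \hbar^{|I|/2}$. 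Therefore, commuting $\hat R$ all the way to the left past $(\hat\alpha-\alpha)_{\vec m}\,\cdot\,(\hat\alpha-\alpha)_{\vec m'}$ — i.e. past a product of $k+k'$ first-order factors — produces a sum of terms, each of which is (a bounded operator of the form $\ad^I[\hat R]$ with $|I|=\ell$) times a centered monomial of degree $(k+k')-\ell$, for $\ell=0,1,\dots,k+k'$. Schematically,
\[
(\hat\alpha-\alpha)_{\vec m}\,\hat R\,(\hat\alpha-\alpha)_{\vec m'}
= \sum_{\ell=0}^{k+k'} \sum_{\text{subsets}} \pm\,\big(\ad^{I}[\hat R]\big)\,(\hat\alpha-\alpha)_{\vec m''},
\]
where each $\vec m''$ has total degree $k+k'-\ell$ and the inner sum is over the $O(1)$ ways of distributing the $\ell$ commutators among the factors.

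Applying this identity to $\psi$ and taking norms, each term is bounded by
\[
\|\ad^{I}[\hat R]\|_{\rm op}\,\|(\hat\alpha-\alpha)_{\vec m''}\psi\|_{L^2}
\lsim \hbar^{\ell/2}\,\|\psi\|_{\Sobolev^{k+k'-\ell}_\alpha},
\]
using Proposition~\ref{prp:comm-ests} for the operator-norm factor and the definition of $\|\cdot\|_{\Sobolev^{k+k'-\ell}_\alpha}$ for the remaining factor. Now I invoke the uncertainty-principle bound~\eqref{eq:Hk-lb}, which gives $\hbar^{\ell/2}\|\psi\|_{\Sobolev^{k+k'-\ell}_\alpha}\lsim \|\psi\|_{\Sobolev^{k+k'}_\alpha}$ after $\ell$ applications. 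Summing the $O(1)$ terms yields~\eqref{eq:mixed-monomial-bd}. (The $\ell=0$ term is already of the desired form since $\|\hat R\|_{\rm op}\lsim 1$ and $\|(\hat\alpha-\alpha)_{\vec m}(\hat\alpha-\alpha)_{\vec m'}\psi\|_{L^2}=\|(\hat\alpha-\alpha)_{\vec m''}\psi\|_{L^2}$ with $|\vec m''|=k+k'$ up to reordering the factors, which itself only costs lower-order commutator terms of the scalar type $[\hat\alpha_a,\hat\alpha_b]=i\hbar\omega_{ab}$.)

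The main technical nuisance — rather than a genuine obstacle — is bookkeeping: one must be careful that commutators among the $\hat\alpha$ factors themselves (which produce scalar $i\hbar\omega_{ab}$ terms, hence degree drops of $2$ per commutator but a factor $\hbar^1$, consistent with~\eqref{eq:Hk-lb}) are handled on the same footing as commutators with $\hat R$. The cleanest way to organize this is to prove the statement by induction on $k+k'$: peeling off the leftmost factor $(\hat\alpha_a-\alpha_a)$ from $(\hat\alpha-\alpha)_{\vec m}$, one either leaves it in place and applies the inductive hypothesis to $(\hat\alpha-\alpha)_{\vec m_{\rm rest}}\hat R(\hat\alpha-\alpha)_{\vec m'}$ (adding one factor back, using the definition of $\Sobolev^{k+k'}_\alpha$), or commutes it to the right — generating either a scalar or, eventually, $\ad_a[\hat R]$ — and applies the inductive hypothesis at one lower total degree, absorbing the extra $\hbar^{1/2}$ via~\eqref{eq:Hk-lb}. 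The base case $k=k'=0$ is just boundedness of $\hat R$ from Proposition~\ref{prp:comm-ests}. All implied constants depend only on $k$, $k'$, $D$, and finitely many seminorms of $R$ in $\mcal{S}_{\frac12}$, as required.
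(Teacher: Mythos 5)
Your proposal is correct and follows essentially the same route as the paper: commute $\hat R$ to the left past the centered monomial using the iterated commutator bounds of Proposition~\ref{prp:comm-ests} (with $\|\ad^I[\hat R]\|_{\rm op}\lsim\hbar^{|I|/2}$), absorb the resulting degree drops via the uncertainty-principle bound~\eqref{eq:Hk-lb}, and handle the reordering of the $\hat\alpha$ factors separately via the scalar commutators $[\hat\alpha_a,\hat\alpha_b]=i\hbar\omega_{ab}$, which is exactly the paper's preliminary $R=1$ induction. No gaps.
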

\begin{proof}
First we prove~\eqref{eq:mixed-monomial-bd} with $R=1$.  In this case we may take $k'=0$ without loss of generality.  The proof
is by induction on $k$, and the result clearly holds for $k=0,1$.  For $k>1$ we observe that monomials can be written in the form
\[
(\hat{\alpha}-\alpha)_{\vec{m}} = (\hat{\alpha}-\alpha)_{n(\vec{m})} + \sum_{j=1}^k\hbar^j P_{k-j}(\hat{\alpha}-\alpha),
\]
where $P_{j}$ are (non-commutative) homogeneous polynomials of degree $j$, with coefficients independent of $\hbar$.  Therefore by induction
we have
\begin{align*}
\|(\hat{\alpha}-\alpha)_{\vec{m}}\psi\|_{L^2}
\leq \|\psi\|_{\Sobolev^k_\alpha} + \sum_{j=1}^k \hbar^j \|\psi\|_{\Sobolev^{k-j}_\alpha}.
\end{align*}
The result now follows from the uncertainty principle in the form~\eqref{eq:Hk-lb}.

Now to prove~\eqref{eq:mixed-monomial-bd} in the general case, we observe that
\[
(\hat{\alpha}-\alpha)_{\vec{m}} \,\hat{R} = \sum_{S\subset[k]} \ad^{n(\vec{m}_S)}[\hat{R}] (\hat{\alpha}-\alpha)_{\vec{m}_{[k]\setminus S}}.
\]
Above $\vec{m}_S$ is the substring of $\vec{m}$ obtained by keeping only the entries specified by the set $S$.  Therefore,
and again using~\eqref{eq:Hk-lb} we have
\begin{align*}
\|(\hat{\alpha}-\alpha)_{\vec{m}} \hat{R}\,(\hat{\alpha}-\alpha)_{\vec{m}'}\psi\|_{L^2}
&\leq \sum_{S\subset[k]}
\|\ad^{n(\vec{m}_S)}[\hat{R}](\hat{\alpha}-\alpha)_{\vec{m}_{[k]\setminus S} + \vec{m}'}\psi\|_{L^2} \\
&\lsim \sum_{S\subset[k]} \hbar^{|S|/2} \|\psi\|_{\Sobolev^{k+k'-|S|}_\alpha} \\
&\lsim \|\psi\|_{\Sobolev^{k+k'}_\alpha}.
\end{align*}
\end{proof}

We are now ready to complete the proof of Equation~\eqref{eq:general-bd} of Proposition~\ref{prp:localization},
which follows from the estimate
\begin{equation}
\label{eq:time-derivative}
\frac{d}{dt} \|U(t)\psi\|_{\Sobolev^k_{\alpha(t)}}
\leq C_k \|U(t)\psi\|_{\Sobolev^{k}_{\alpha(t)}}\,,
\end{equation}
where as usual $U(t) = e^{- i H t/\hbar}$.

Using~\eqref{eq:op-taylor} we have for any $\alpha\in\Real^{2D}$
\[
\left.\frac{d}{dt} (\hat{\alpha}_a(t) - \alpha_a(t))\right|_{t=0}
= \sum_b \hat{R}_{ab;\alpha} (\hat{\alpha}_b-\alpha_b)\,.
\]

Therefore, for any multi-index $J$ we have that the derivative $\frac{d}{dt}(\hat{\alpha}-\alpha)_J$ is a finite sum of
terms of the form $(\hat{\alpha}-\alpha)_{\vec{m}_1} \,\hat{R}_{ab;\alpha}\, (\hat{\alpha}-\alpha)_{\vec{m}_2}$ for some monomial
strings $|\vec{m}_1| + |\vec{m}_2| = k$, and the result then follows from Lemma~\ref{lemma:mixed-monomial-bd}.

\subsection{Proof of Corollary~\ref{cor:from-mixture}}
\label{sec:pffrommixture}
For convenience of the reader we recall the statement of Corollary~\ref{cor:from-mixture}.
\begin{customcor}{1.5}
\label{cor:from-mixture2}
Let $p\in L^1(\Real^{2D})$ be a probability density function and
$\hat{\rho} = \int p(\alpha) \ket{\alpha}\bra{\alpha}\diff\alpha$ be a density matrix that is a mixture of coherent states according to $p$.
Then for general admissible Hamiltonians,
\begin{equation}
d_{W_p}(\Husimi_{\mcal{U}_T\rho}, \Phi_T \Husimi_{\rho}) \leq e^{C_pT} \hbar^{1/2}.
\end{equation}
If instead $\hat{H}=K(\hat{p}) + V(\hat{x})$ we have the bound
\begin{equation}
\frac{1}{2D}\,d_{W_2}(\Husimi_{\mcal{U}_T\rho}, \Phi_T \Husimi_{\rho})^2 \leq (1 + 2 e^{\Lambda_HT}) \hbar.
\end{equation}
\end{customcor}

\begin{proof}
Here we will use $W_q$ instead of $W_p$ to avoid confusion with $p(\alpha)$.  The $W_q$ distances satisfy the following convexity property:
\begin{align*}
d_{W_q}(\Husimi_{\mcal{U}_T\rho}, \Phi_T \Husimi_{\rho})^q \leq
\int p(\alpha) d_{W_q}(\Husimi_{\mcal{U}_T[\ket{\alpha}\bra{\alpha}]}, \Phi_T\Husimi_{\ket{\alpha}\bra{\alpha}})^q \diff \alpha.
\end{align*}
Let $\delta_\alpha$ be the Dirac delta measure located at $\alpha$.  Then we can bound the above
integrand by
\begin{align*}
d_{W_q}(\Husimi_{\mcal{U}_T[\ket{\alpha}\bra{\alpha}]}, \Phi_T\Husimi_{\ket{\alpha}\bra{\alpha}})
&\leq
d_{W_q}(\Husimi_{\mcal{U}_T[\ket{\alpha}\bra{\alpha}]}, \delta_{\Phi_T\alpha}) +
d_{W_q}(\Phi_T \Husimi_{\ket{\alpha}\bra{\alpha}}, \delta_{\Phi_T\alpha})\,.
\end{align*}
We treat each term above differently depending on the assumptions about $H$.

\textit{Case I: General Hamiltonians.}
Without loss of generality (by the monotonicity of $d_{W_q}$ in $q$) we can assume that $q=2k$ for some $k\in\bbN$.
Then we compute
\begin{align*}
\int |\beta-\alpha|^{2k}\Husimi_{\ket{\psi}\bra{\psi}}(\beta)\diff \beta
&= \int |\beta-\alpha|^{2k} |\Braket{\beta|\psi}|^2 \diff \beta \\
&\lsim_{k,D} \sum_{a=1}^{2D}
\int |\beta_a-\alpha_a|^{2k} |\Braket{\beta|\psi}|^2 \diff \beta \\
&\lsim_{k,D} \|\psi\|_{\Sobolev^k_\alpha}.
\end{align*}
Taking $\psi = e^{-i\hat{H}T/\hbar}\ket{\alpha}$, this proves
\begin{equation}
d_{W_q}(\Husimi_{\mcal{U}_T[\ket{\alpha}\bra{\alpha}]}, \delta_{\Phi_T \alpha})
\lsim e^{CT} \hbar^{1/2}.
\end{equation}
Moreover, $\Phi_T$ is $e^{CT}$-Lipschitz, so that
\begin{align*}
d_{W_q}(\Phi_T \Husimi_{\ket{\alpha}\bra{\alpha}}, \delta_{\Phi_T\alpha})
&\lsim e^{CT} d_{W_p}(\Husimi_{\ket{\alpha}\bra{\alpha}}, \delta_{\alpha})\\
&\lsim e^{CT}\hbar^{1/2}.
\end{align*}
Combining these bounds completes the proof of~\eqref{eq:mixture-general}.

\textit{Case II: Hamiltonians of the form $K(p)+V(x)$.}
An elementary calculation shows that in this case $\Phi_T$ is $e^{\Lambda_H T}$-Lipschitz\footnote{In fact, one can use the bound~\eqref{eq:meanfield-bd} in the limit $\hbar\to 0$ with $\psi = \ket{\beta}$ to see that $|\Phi_T \beta-\Phi_T \alpha| \leq e^{\Lambda_HT}|\beta-\alpha|$.} and that $d_{W_2}(\Husimi_{\ket{\alpha}\bra{\alpha}}, \delta_\alpha)^2 = 2D\hbar$.
It therefore suffices to estimate
\begin{align*}
\int |\beta-\alpha|^2\Husimi_{\ket{\psi}\bra{\psi}}(\beta)\diff \beta
&=
\sum_j
\int |\beta_a-\alpha_a|^2\Husimi_{\ket{\psi}\bra{\psi}}(\beta)\diff \beta \\
&=
\sum_j
\trace\!\left[ \Big(\int |\beta_a-\alpha_a|^2\ket{\beta}\bra{\beta}\diff \beta\Big)\ket{\psi}\bra{\psi}\right].
\end{align*}
The operator in brackets can alternatively be written as follows:
\begin{align*}
\int |\beta_a-\alpha_a|^2\ket{\beta}\bra{\beta}\diff \beta
&= \Op_\hbar( |\beta_a-\alpha_a|^2 \ast \gamma_\hbar) \\
&= \Op_\hbar(|\beta_a-\alpha_a|^2 +  \hbar) \\
&= (\hat{\beta}_a - \alpha_a)^2 + \hbar\,\Id\,.
\end{align*}
Therefore, rearranging and then using~\eqref{eq:meanfield-bd} yields
\begin{align*}
d_{W_2} (\Husimi_{\mcal{U}_T[\ket{\alpha}\bra{\alpha}]}, \delta_{\Phi_T\alpha})^2
&\leq \Big(\sum_j \|(\hat{\alpha}_a - (\Phi_T\alpha)_a) e^{-i\hat{H}T/\hbar}\ket{\alpha}\|_{L^2}^2 \Big)
+ 2D \hbar \\
&= \|e^{-i\hat{H}T/\hbar} \ket{\alpha}\|_{\Sobolev^1_{\Phi_T\alpha}}^2 + 2D\hbar \\
&\leq (1 + e^{2\Lambda_H T}) 2D\hbar\,,
\end{align*}
where we recall that $D = N d$ where $N$ is the number of particles and $d$ is the number of spatial dimensions.
\end{proof}

\section{Transport by localized unitaries}
\label{sec:W-transport}

In this Section we state a precise version of Proposition~\ref{prp:local-transport-informal} and prove it.
The first step is to quantify what it means for a local unitary to be well localized in phase space,
and to show that the operator
\begin{equation}
\label{eq:Wshift-def}
W_{T,\alpha} := e^{i\hat{H}T/\hbar} \tau_\alpha e^{-i\hat{H}T/\hbar}.
\end{equation}
is well-localized.  This is done in Section~\ref{sec:W-local}.  Then in Section~\ref{sec:mainthm} we state
Proposition~\ref{prp:local-transport-informal} and explain the proof strategy.  In Section~\ref{sec:phibds} we record some
elementary bounds, and in Sections~\ref{sec:proofM}-\ref{sec:proofE} the proof is completed.

\subsection{Local unitary operators}
\label{sec:W-local}
We measure the localization of a unitary operator in phase space using centered isotropic Sobolev norms applied
to coherent states.  To this end we define
\[
\|W\|_{Z^k} := \sup_\alpha \|W\ket{\alpha}\|_{\Sobolev^k_\alpha}.
\]
Note that $\|W\|_{Z^k} \geq \hbar^{1/2} \|W\|_{Z^{k-1}}$ by~\eqref{eq:Hk-lb}.
This norm bounds the off-diagonal matrix elements of the operator $W$.
\begin{lemma}
\label{lem:Z-offdiag}
For any operator $W$, and $\alpha,\beta\in\Real^{2D}$,
\[
|\Braket{\alpha|W|\beta}|
\leq C_{k,D} |\alpha-\beta|^{-k} \|W\|_{Z^k}.
\]
\end{lemma}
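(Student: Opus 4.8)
The plan is to test $W$ against coherent states directly. Set $\psi := W\ket{\beta}$, so that by the definition of the $Z^k$ norm one has $\|\psi\|_{\Sobolev^k_\beta}\le\|W\|_{Z^k}$ (if this is infinite there is nothing to prove), and exploit the fact that $\ket{\alpha}$ is sharply localized near $\alpha$ while $\psi$ is localized near $\beta$, so the overlap $\Braket{\alpha|\psi}=\Braket{\alpha|W|\beta}$ should decay in $|\alpha-\beta|$. First I would choose a coordinate $a\in[2D]$ with $|\alpha_a-\beta_a|\ge(2D)^{-1/2}|\alpha-\beta|$, which reduces the lemma to the estimate $|\alpha_a-\beta_a|^{k}\,|\Braket{\alpha|W|\beta}|\lesssim_{k}\|W\|_{Z^k}$.

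Since $\alpha_a-\beta_a$ is a scalar, I would write $(\alpha_a-\beta_a)^k\Braket{\alpha|\psi}=\Braket{\alpha|(\alpha_a-\beta_a)^k\psi}$ and then use the operator identity $(\alpha_a-\beta_a)\,\Id=(\alpha_a-\hat\alpha_a)+(\hat\alpha_a-\beta_a)$. The two summands commute, both being polynomials in the self-adjoint operator $\hat\alpha_a$, so the binomial theorem gives
\[
(\alpha_a-\beta_a)^k\Braket{\alpha|\psi}=\sum_{j=0}^{k}\binom{k}{j}\,\Braket{\alpha|(\alpha_a-\hat\alpha_a)^{j}(\hat\alpha_a-\beta_a)^{k-j}\psi}.
\]
Moving the self-adjoint factor $(\alpha_a-\hat\alpha_a)^{j}$ onto the bra and applying Cauchy--Schwarz bounds the $j$-th summand by $\|(\hat\alpha_a-\alpha_a)^{j}\ket{\alpha}\|\,\|(\hat\alpha_a-\beta_a)^{k-j}\psi\|$.

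It then remains to estimate these two factors. For the first, a direct Gaussian moment computation (using that $\ket{\alpha}$ is a phase-space translate of the harmonic-oscillator ground state) gives $\|(\hat\alpha_a-\alpha_a)^{j}\ket{\alpha}\|=\sqrt{(2j-1)!!}\,(\hbar/2)^{j/2}\lesssim_{j}\hbar^{j/2}$. For the second, $(\hat\alpha_a-\beta_a)^{k-j}$ is a single centered monomial of degree $k-j$, so $\|(\hat\alpha_a-\beta_a)^{k-j}\psi\|\le\|\psi\|_{\Sobolev^{k-j}_\beta}$, and iterating the uncertainty-principle bound~\eqref{eq:Hk-lb} $j$ times yields $\|\psi\|_{\Sobolev^{k-j}_\beta}\le\hbar^{-j/2}\|\psi\|_{\Sobolev^{k}_\beta}\le\hbar^{-j/2}\|W\|_{Z^k}$. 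The powers of $\hbar$ cancel in each of the $k+1$ terms, so the sum is $\lesssim_{k}\|W\|_{Z^k}$, and undoing the coordinate reduction gives the lemma with an explicit $C_{k,D}=(2D)^{k/2}\sum_{j=0}^{k}\binom{k}{j}\sqrt{(2j-1)!!}$. The one point requiring care --- and, I would say, the only real obstacle --- is that $W$ is an arbitrary, possibly unbounded operator, so $\|\psi\|=\|W\ket{\beta}\|$ cannot be bounded directly; in particular the $j=k$ term, in which no factor of $(\hat\alpha_a-\beta_a)$ survives, must still be handled by routing $\|\psi\|=\|\psi\|_{\Sobolev^0_\beta}\le\hbar^{-k/2}\|\psi\|_{\Sobolev^{k}_\beta}$ through~\eqref{eq:Hk-lb}. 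Everything else is routine bookkeeping.
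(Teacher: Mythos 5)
Your proposal is correct and follows essentially the same route as the paper: reduce to a single coordinate where $|\alpha_a-\beta_a|\gtrsim_D|\alpha-\beta|$, insert a degree-$k$ weight in that coordinate, and apply Cauchy--Schwarz so that the Gaussian moments of $\ket{\alpha}$ (of size $\hbar^{j/2}$) cancel against the $\hbar^{-j/2}$ losses from~\eqref{eq:Hk-lb}. The paper uses the smooth weight $(\hbar+(\hat x_1-\beta_1)^2)^{k/2}$ and its reciprocal where you use the binomial expansion of $(\alpha_a-\beta_a)^k$, but the mechanism is identical and your version is, if anything, more explicit about the constants.
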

\begin{proof}
Without loss of generality we can suppose that $|\alpha-\beta| \leq C_{D} |\alpha_{1}-\beta_{1}|$.
Define $f(x) = (\hbar + (x - \beta_{1})^2)^{k/2}$ and $g(x) = \frac{1}{f(x)}$.  Then
\begin{align*}
|\Braket{\alpha|W|\beta}|
&= |\Braket{\alpha| g(\hat{x}_1) f(\hat{x}_1) W |\beta}| \\
&\leq
\| (\hbar + (\hat{x}_1-\beta_{1})^2)^{-k/2} \ket{\alpha}\|_{L^2}
\| (\hbar + (\hat{x}_1-\beta_{1})^2)^{k/2} W\ket{\beta}\|_{L^2} \\
&\leq |\alpha_{1}-\beta_{1}|^{-k}\|W\|_{Z^k} \\
&\leq C_{k,D} |\alpha-\beta|^{-k} \|W\|_{Z^k}.
\end{align*}
\end{proof}

Our goal is to bound $\|W_{T,\alpha}\|_{Z^k}$, where $W_{T,\alpha} = e^{i\hat{H}T/\hbar} \tau_\alpha e^{-i\hat{H}T/\hbar}$ involves a phase space shift operator $\tau_\alpha$.  We therefore need to know how the isotropic Sobolev norms are affected under such shifts, and this is recorded in the following lemma:
\begin{lemma}
\label{lem:sobolev-shift}
Let $\psi\in \Sobolev^k$ and $\alpha,\beta\in\Real^{2D}$.  Then
\[
\|\tau_\beta\psi\|_{\Sobolev^k_\alpha} \leq C_k (1 + |\beta|\hbar^{-1/2})^k \|\psi\|_{\Sobolev^k_\alpha}.
\]
\end{lemma}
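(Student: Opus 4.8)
The plan is to compute how the centered monomial operators $(\hat\alpha - \alpha)_J$ conjugate under the phase space shift $\tau_\beta$. The key algebraic fact is that $\tau_\beta$ implements a translation on the phase space operators: $\tau_{-\beta}\,\hat\alpha_a\,\tau_\beta = \hat\alpha_a + \beta_a\,\Id$. This is the exponentiated form of the commutation relation $[\hat\alpha_a,\hat\alpha_b] = i\hbar\omega_{ab}$ together with the definition $\tau_\beta = \exp(i(\beta_p\cdot\hat x - \beta_x\cdot\hat p)/\hbar)$; concretely $\tau_\beta$ generates the flow $\hat\alpha_a \mapsto \hat\alpha_a + \beta_a$ since $-\tfrac{i}{\hbar}[\beta_p\cdot\hat x - \beta_x\cdot\hat p, \hat\alpha_a]$ is exactly the constant $\beta_a$. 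Consequently
\[
\tau_{-\beta}\,(\hat\alpha - \alpha)_J\,\tau_\beta = (\hat\alpha - \alpha + \beta)_J,
\]
a centered monomial but now with shifted center $\alpha - \beta$, which we then re-expand in terms of monomials centered at $\alpha$.

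First I would write $\|\tau_\beta\psi\|_{\Sobolev^k_\alpha} = \sum_{|J|=k}\|(\hat\alpha - \alpha)_J\,\tau_\beta\psi\|_{L^2} = \sum_{|J|=k}\|\tau_\beta\,(\hat\alpha - \alpha + \beta)_J\,\psi\|_{L^2}$, using unitarity of $\tau_\beta$ to drop the outer factor, so the norm equals $\sum_{|J|=k}\|(\hat\alpha - (\alpha - \beta))_J\,\psi\|_{L^2}$. Expanding the product $\prod_{i}(\hat\alpha_{m_i} - \alpha_{m_i} + \beta_{m_i})$ by the multinomial/noncommutative binomial expansion produces a sum over ways of choosing, at each of the $k$ slots, either the operator factor $(\hat\alpha_{m_i} - \alpha_{m_i})$ or the scalar $\beta_{m_i}$; the terms with $j$ scalars contribute a monomial of operator-degree $k-j$ in $(\hat\alpha - \alpha)$ times a constant bounded by $|\beta|^j$, and there are at most $C_k$ such terms at each degree. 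Thus
\[
\|(\hat\alpha - (\alpha-\beta))_J\,\psi\|_{L^2} \le C_k \sum_{j=0}^{k} |\beta|^{j}\,\|\psi\|_{\Sobolev^{k-j}_\alpha}.
\]

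Then I would apply the uncertainty-principle bound~\eqref{eq:Hk-lb} in the form $\|\psi\|_{\Sobolev^{k-j}_\alpha} \le \hbar^{-j/2}\,\|\psi\|_{\Sobolev^{k}_\alpha}$, which gives $\|(\hat\alpha - (\alpha-\beta))_J\,\psi\|_{L^2} \le C_k \big(\sum_{j=0}^k (|\beta|\hbar^{-1/2})^j\big)\|\psi\|_{\Sobolev^k_\alpha} \le C_k(1 + |\beta|\hbar^{-1/2})^k\|\psi\|_{\Sobolev^k_\alpha}$, and summing over the $|J|=k$ indices (with the constant absorbed) yields the claimed inequality.

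I don't expect a serious obstacle here; the only point requiring a little care is the bookkeeping of the noncommutative expansion — one must be sure that reordering the chosen operator factors back into a standard monomial $(\hat\alpha - \alpha)_{\vec m}$ of degree $k-j$ does not generate extra terms that spoil the count. This is handled exactly as in the proof of Lemma~\ref{lemma:mixed-monomial-bd}: each commutator $[\hat\alpha_a - \alpha_a, \hat\alpha_b - \alpha_b] = i\hbar\omega_{ab}$ is an order-$\hbar$ scalar, so reordering only produces lower-degree monomials with extra powers of $\hbar$, and by~\eqref{eq:Hk-lb} those are dominated by the same right-hand side. So the clean statement above survives, and since $|\beta|\hbar^{-1/2}$ is precisely the natural dimensionless size of the shift at the coherent-state scale, the bound is of the expected form.
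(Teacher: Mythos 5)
Your proof is correct and follows essentially the same route as the paper's: the paper uses the one-step identity $\hat{\alpha}_j\tau_\beta = \tau_\beta\hat{\alpha}_j + \beta_j\tau_\beta$ and inducts on $k$, peeling one factor off the monomial at a time, whereas you apply the exponentiated form of the same identity to conjugate the whole monomial at once and then expand binomially; in both arguments the factor $(1+|\beta|\hbar^{-1/2})^k$ arises by converting each scalar insertion of $\beta$ into a power of $\hbar^{-1/2}$ via \eqref{eq:Hk-lb}. Your appeal to Lemma~\ref{lemma:mixed-monomial-bd} (with $R=1$) to control the reordered lower-degree monomials is legitimate, since that lemma is established earlier and independently, so there is no gap.
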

\begin{proof}
The result for $k=0$ follows from the boundedness of the shift operator
(since $\|\cdot\|_{\Sobolev^0_\alpha} = \|\cdot\|_{L^2}$.  Now suppose $k\geq 1$, and let $\vec{m}\in [2D]^k$ be a tuple
specifying some monomial of degree $k$.
We use the identity $\hat{\alpha}_j \tau_\beta = \tau_\beta \hat{\alpha}_j + \beta_j \tau_\beta$
and write $\vec{m}'\in[2D]^{k-1}$ for the $k-1$ subtuple of $\vec{m}$
to estimate
\begin{align*}
\| (\hat{\alpha}-\alpha)_{\vec{m}'} \tau_\beta \psi\|_{L^2}
&\leq
\|(\hat{\alpha}-\alpha)_{\vec{m}'} \tau_\beta (\hat{\alpha}_{m_k}-\alpha_{m_k})\psi\|_{L^2}
+ |\beta_{m_k}| \|(\hat{\alpha}-\alpha)^{\vec{m}'} \tau_\beta\psi\|_{L^2} \\
&\leq \|\tau_\beta(\hat{\alpha}-\alpha) \psi\|_{\Sobolev^{k-1}_\alpha}
+  |\beta_{m_k}| \|\tau_\beta \psi\|_{\Sobolev^{k-1}_\alpha} \\
&\lsim (1+ |\beta|\hbar^{-1/2})^{k-1} \|(\hat{\alpha}-\alpha)\psi\|_{\Sobolev^{k-1}_\alpha}
+  |\beta_{m_k}| ( 1 + |\beta|\hbar^{-1/2})^{k-1} \|\psi\|_{\Sobolev^{k-1}_\alpha} \\
&\lsim (1+ |\beta|\hbar^{-1/2})^{k-1} \|\psi\|_{\Sobolev^{k}_\alpha}
+  \hbar^{-1/2}|\beta_{m_k}| ( 1 + |\beta|\hbar^{-1/2})^{k-1} \|\psi\|_{\Sobolev^{k}_\alpha} \\
&\lsim (1 + |\beta| \hbar^{-1/2})^k \|\psi\|_{\Sobolev^k_\alpha}.
\end{align*}
Above we have used the bound $\|\psi\|_{\Sobolev^k_\alpha} \leq \hbar^{-1/2} \|\psi\|_{\Sobolev^{k-1}_\alpha}$
(see~\eqref{eq:Hk-lb}) and the inductive hypothesis.
\end{proof}

We can now use Proposition~\ref{prp:localization} to bound the phase-space localization of $W_{T,\alpha}$.
\begin{lemma}
Let $H$ be an admissible Hamiltonian in the sense that $\partial_{ab} H\in\mcal{S}_{\frac12}$ for all $a,b\in [2D]$.
Then for any $T\in\Real$ and $\alpha\in\Real^{2D}$ the operator $W_{T,\alpha}$ defined in~\eqref{eq:Wshift-def}
satisfies
\[
\|W_{T,\alpha}\|_{Z^k} \leq e^{C_k T}(\hbar^{k/2} + |\alpha|^k)\,.
\]
\end{lemma}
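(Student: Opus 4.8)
The plan is to estimate $\|W_{T,\alpha}\|_{Z^k} = \sup_\beta \|W_{T,\alpha}\ket{\beta}\|_{\Sobolev^k_\beta}$ by unwinding the definition $W_{T,\alpha} = e^{i\hat H T/\hbar}\tau_\alpha e^{-i\hat H T/\hbar}$ and applying the three tools already assembled: Proposition~\ref{prp:localization} (exponential growth of localized Sobolev norms under the flow), Lemma~\ref{lem:sobolev-shift} (how $\tau_\alpha$ distorts localized Sobolev norms), and the fact that coherent states are maximally localized, $\|\ket{\beta}\|_{\Sobolev^k_\beta} \lsim \hbar^{k/2}$. First I would note that $\ket{\beta}$ is a translate of the harmonic oscillator ground state, so $(\hat\alpha - \beta)_J\ket{\beta} = \tau_\beta (\hat\alpha)_J\ket{0}$ in $L^2$, and hence $\|\ket{\beta}\|_{\Sobolev^k_\beta} = \|\ket{0}\|_{\Sobolev^k_0} \lsim_{k,D} \hbar^{k/2}$ since the ground state has Gaussian width $\sqrt\hbar$.

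Next I would run the following chain of estimates, writing $\beta(t) = \Phi_t(\beta)$ for the classical trajectory through $\beta$ and $\psi = e^{-i\hat H T/\hbar}\ket{\beta}$:
\begin{align*}
\|W_{T,\alpha}\ket{\beta}\|_{\Sobolev^k_\beta}
&= \|e^{i\hat H T/\hbar}\tau_\alpha e^{-i\hat H T/\hbar}\ket{\beta}\|_{\Sobolev^k_\beta}
= \|\tau_\alpha\, e^{-i\hat H T/\hbar}\ket{\beta}\|_{\Sobolev^k_{\beta(-T)}} \\
&\leq C_k(1+|\alpha|\hbar^{-1/2})^k \|e^{-i\hat H T/\hbar}\ket{\beta}\|_{\Sobolev^k_{\beta(-T)}} \\
&\leq C_k(1+|\alpha|\hbar^{-1/2})^k\, e^{C_k T}\|\ket{\beta}\|_{\Sobolev^k_{\beta}}
\lsim_{k,D} e^{C_k T}(1+|\alpha|\hbar^{-1/2})^k\,\hbar^{k/2}.
\end{align*}
The first equality on the second line uses that conjugating a localized Sobolev norm by $e^{-i\hat H T/\hbar}$ amounts to sliding the center along the flow: $\|e^{-i\hat H T/\hbar}\varphi\|_{\Sobolev^k_{\gamma}} = \|(\hat\alpha - \gamma)_J e^{-i\hat H T/\hbar}\varphi\|$, and when $\varphi = \ket\beta$ the natural center to use \emph{after} applying $e^{i\hat H T/\hbar}$ on the outside is $\beta$ while \emph{before} it is the backward-flowed point $\beta(-T) = \Phi_{-T}(\beta)$. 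The inequality on the second line is Lemma~\ref{lem:sobolev-shift} with the roles of center and shift as stated there. The third line is Proposition~\ref{prp:localization} applied to the trajectory joining $\beta(-T)$ to $\beta$ (note that since $\Phi$ is a flow, the proposition applies in either time direction with a possibly different constant, which we absorb into $C_k$).

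Finally I would expand $(1+|\alpha|\hbar^{-1/2})^k\hbar^{k/2} = (\hbar^{1/2} + |\alpha|)^k \lsim_k \hbar^{k/2} + |\alpha|^k$ by the elementary inequality $(a+b)^k \lsim_k a^k + b^k$ for $a,b\geq 0$, absorbing the combinatorial factor into the constant in $e^{C_k T}$; this yields exactly $\|W_{T,\alpha}\|_{Z^k} \leq e^{C_k T}(\hbar^{k/2} + |\alpha|^k)$ after taking the supremum over $\beta$ (the bound is uniform in $\beta$ since neither $\|\ket\beta\|_{\Sobolev^k_\beta}$ nor the constant in Proposition~\ref{prp:localization} depends on $\beta$). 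The main obstacle I anticipate is purely bookkeeping: making precise the claim that conjugation by $e^{-i\hat H T/\hbar}$ transports the \emph{center} of the localized Sobolev norm along the classical flow, i.e. correctly pairing the Heisenberg-evolved operators $(\hat\alpha(t) - \beta(t))_J$ with the statement of Proposition~\ref{prp:localization}, which is phrased in terms of $\|e^{-i\hat H T/\hbar}\psi\|_{\Sobolev^k_{\balpha(T)}} \leq e^{CT}\|\psi\|_{\Sobolev^k_{\balpha(0)}}$ for an arbitrary trajectory $\balpha(\cdot)$; one simply takes $\balpha(0) = \beta(-T)$, $\balpha(T) = \beta$, and $\psi = \ket{\beta}$. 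No genuinely new estimate is needed — everything is a consequence of results already in hand.
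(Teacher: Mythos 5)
Your strategy is exactly the paper's: peel off the outer evolution with Proposition~\ref{prp:localization}, peel off $\tau_\alpha$ with Lemma~\ref{lem:sobolev-shift}, peel off the inner evolution with Proposition~\ref{prp:localization} again, and finish with $\|\ket{\beta}\|_{\Sobolev^k_\beta}\lesssim \hbar^{k/2}$. However, your written chain runs the classical flow in the wrong direction. The state $e^{-i\hat H T/\hbar}\ket{\beta}$ is localized near the \emph{forward}-flowed point $\Phi_T(\beta)$, so the intermediate center must be $\beta(T)=\Phi_T(\beta)$, not $\beta(-T)=\Phi_{-T}(\beta)$. Concretely, your third line asserts
\[
\|e^{-i\hat H T/\hbar}\ket{\beta}\|_{\Sobolev^k_{\Phi_{-T}(\beta)}}\le e^{C_kT}\|\ket{\beta}\|_{\Sobolev^k_\beta},
\]
which does not follow from Proposition~\ref{prp:localization} (the trajectory starting at $\beta$ arrives at $\Phi_T(\beta)$, not $\Phi_{-T}(\beta)$) and is in fact false: for $H=|p|^2/2$ and $\beta$ with large momentum the left side is of order $(2T|\beta_p|)^k$, which is unbounded in $\beta$, so the supremum defining $\|\cdot\|_{Z^k}$ would blow up. The paper's proof fixes $\alpha_0=\beta$, sets $\alpha(T)=\Phi_T(\alpha_0)$, and uses $\alpha(T)$ as the intermediate center throughout.

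A second, smaller issue: your first step is written as an \emph{equality}, i.e.\ that conjugating by the propagator exactly transports the center of the localized Sobolev norm along the flow. That is true only for Hamiltonians that are at most quadratic; for general admissible $H$ it is precisely the content of Proposition~\ref{prp:localization} applied in reversed time, and costs a factor $e^{C_kT}$ (which is harmless here, and which the paper duly pays). Both defects are repaired by replacing $\beta(-T)$ with $\beta(T)=\Phi_T(\beta)$ and turning the first equality into ``$\lesssim e^{C_kT}\,$''; after that your argument coincides with the paper's line for line, including the final expansion $(\hbar^{1/2}+|\alpha|)^k\lesssim_k \hbar^{k/2}+|\alpha|^k$.
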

\begin{proof}
Fix some $\alpha_0 \in\Real^{2D}$ and let $\alpha(T)$ be the classical evolution at time $T$.
We use Proposition~\ref{prp:localization} and Lemma~\ref{lem:sobolev-shift} to estimate
\begin{align*}
\|e^{i\hat{H}T/\hbar} \tau_\alpha e^{-i\hat{H}T/\hbar}\ket{\alpha_0}\|_{\Sobolev^k_{\alpha_0}}
&\lsim e^{C_kT} \|\tau_\alpha e^{-i\hat{H}T/\hbar}\ket{\alpha_0}\|_{\Sobolev^k_{\alpha(T)}} \\
&\lsim e^{C_kT}(1 + |\alpha|^k\hbar^{-k/2}) \|e^{-i\hat{H}T/\hbar}\ket{\alpha_0}\|_{\Sobolev^k_{\alpha(T)}} \\
&\lsim e^{C_kT}(1 + |\alpha|^k\hbar^{-k/2}) \|\ket{\alpha_0}\|_{\Sobolev^k_{\alpha_0}} \\
&\lsim e^{C_kT}(1 + |\alpha|^k\hbar^{-k/2}) \hbar^{k/2}.
\end{align*}
\end{proof}

\subsection{Bound on transport by local unitaries}
\label{sec:mainthm}
We are now ready to state the main result concerning the transport distance between $\rho$ and $W\rho W^\dagger$.
\begin{theorem}
\label{thm:main-husimi}
Let $r>\hbar^{1/2}$ and $W$ be a unitary operator satisfying $\|W\|_{Z^k} <\infty$ for all $k\in\bbN$, and $1\leq p<\infty$.   Then there is some $M=M(p,D)$ such that
\[
d_{W_p}(\Husimi_\rho, \Husimi_{W\rho W^\dagger}) \leq Cr (\hbar^{-2D} r^{4D})(1 + r^{-M}\|W\|_{Z^M})^2.
\]
In particular, taking $W=e^{i\hat{H}T/\hbar} \tau_\alpha e^{-i\hat{H}T/\hbar}$ for an admissible Hamiltonian
$\hat{H}$, we have, taking $r = \hbar^{1/2} + |\alpha|$,
\begin{equation}
\label{eq:husimi-shift}
d_{W_p}(\Husimi_\rho, \Husimi_{W\rho W^\dagger}) \leq Ce^{C_M(H) T}
(\hbar^{1/2} + |\alpha|)(1 + |\alpha|^{4D} \hbar^{-2D}).
\end{equation}
\end{theorem}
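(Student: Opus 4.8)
The plan is to dualize the Wasserstein distance, recast the resulting inequality as a single operator--positivity statement valid for all $\rho$ at once, and then prove that statement by an IMS-type phase-space localization at scale $r$.

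\emph{Step 1: duality and reduction to operator positivity.} Put $\mu=\Husimi_\rho$ and $\nu=\Husimi_{W\rho W^\dagger}$. By Kantorovich duality for the cost $c(\alpha,\beta)=|\alpha-\beta|^p$,
\[
d_{W_p}(\mu,\nu)^p \;=\; \sup\Big\{\,\int\phi\,\diff\mu+\int\psi\,\diff\nu \;:\; \phi(\alpha)+\psi(\beta)\le|\alpha-\beta|^p\,\Big\},
\]
and replacing $(\phi,\psi)$ by $(\phi^{cc},\phi^c)$, with $\phi^c(\beta):=\inf_\alpha\big(|\alpha-\beta|^p-\phi(\alpha)\big)$, I may assume both potentials are $c$-concave. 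Taking $p=2k$ with $k\in\bbN$ (permissible since $d_{W_p}$ is nondecreasing in $p$) makes $c$ a polynomial, and $c$-concavity then delivers the regularity I will need: $\phi$ and $\psi$, as well as their Lipschitz and Hessian moduli, grow at most polynomially (a routine truncation, recorded among the elementary bounds of Section~\ref{sec:phibds}, then lets me assume all derivatives bounded, so that every operator and sum below converges in norm). Using $\Husimi_\rho=\Wigner_\rho\ast\gamma_\hbar$ together with $\trace[\Op_\hbar(g)\rho]=\langle g,\Wigner_\rho\rangle$ from~\eqref{eq:wigner-weyl}, one has $\int\phi\,\diff\mu=\trace[\rho\,\Op_\hbar(\phi\ast\gamma_\hbar)]$, where $\Op_\hbar(\phi\ast\gamma_\hbar)=(2\pi\hbar)^{-D}\int\phi(\gamma)\ket{\gamma}\bra{\gamma}\diff\gamma$ is the anti-Wick quantization of $\phi$; similarly $\int\psi\,\diff\nu=\trace[\rho\,W^\dagger\Op_\hbar(\psi\ast\gamma_\hbar)W]$. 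Hence, writing $B:=Cr(\hbar^{-2D}r^{4D})(1+r^{-M}\|W\|_{Z^M})^2$ for the asserted bound, the theorem reduces to the operator inequality
\[
B^p\,\Id\;-\;\Op_\hbar(\phi\ast\gamma_\hbar)\;-\;W^\dagger\Op_\hbar(\psi\ast\gamma_\hbar)\,W\;\ge\;0
\]
for every $c$-concave pair $(\phi,\psi)$.

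\emph{Step 2: phase-space localization.} Fix a smooth partition of unity $\sum_j g_j^2\equiv1$ on $\Real^{2D}$ with each $g_j$ supported in a ball $B(\beta_j,Cr)$ and $\|\nabla^\ell g_j\|_{L^\infty}\lsim r^{-\ell}$, and set $G_j:=\Op_\hbar(g_j)$, so that $\sum_j G_j^2=\Id+O(\hbar r^{-2})$ by the symbol calculus (harmless since $r>\hbar^{1/2}$). Writing $A$ for the operator in Step 1, the IMS identity $\sum_j G_j A G_j=A+\tfrac12\sum_j[G_j,[G_j,A]]$, combined with an abstract positivity criterion (the role of Lemma~\ref{lem:abstract-positivity}), reduces $A\ge0$ --- after harmlessly enlarging the constant in $B$ to absorb the commutator remainder $\tfrac12\sum_j\|[G_j,[G_j,A]]\|$ --- to the positive-semidefiniteness of each localized block $G_j A G_j$. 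On $B(\beta_j,Cr)$ the mollification $\phi\ast\gamma_\hbar$ equals $\phi(\beta_j)$ up to $O\big(r\,\mathrm{Lip}(\phi)\big)+O\big(\hbar\|\nabla^2\phi\|\big)$, so $G_j\Op_\hbar(\phi\ast\gamma_\hbar)G_j\approx\phi(\beta_j)\,G_j^2$; likewise, since $\|W\|_{Z^M}<\infty$ forces $W^\dagger\ket{\gamma}$ to be concentrated within $O\big(r(1+r^{-M}\|W\|_{Z^M})\big)$ of $\gamma$ (Lemma~\ref{lem:Z-offdiag}), the block $G_j W^\dagger\Op_\hbar(\psi\ast\gamma_\hbar)W\,G_j$ is, up to an operator-norm tail $\lsim(\hbar^{-2D}r^{4D})(r^{-M}\|W\|_{Z^M})^2$, equal to $\psi(\beta_j')\,G_j^2$ for some $\beta_j'$ with $|\beta_j-\beta_j'|\lsim r(1+r^{-M}\|W\|_{Z^M})$. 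The pointwise dual inequality $\phi(\beta_j)+\psi(\beta_j')\le|\beta_j-\beta_j'|^p$ bounds the main term by (a constant times) $B^p$, while the error terms --- again controlled by Lemma~\ref{lem:Z-offdiag} and the polynomial regularity of $\phi,\psi$ --- are absorbed, the sum over the $O\big((r^2/\hbar)^{2D}\big)$ interacting cells producing the prefactor $\hbar^{-2D}r^{4D}$ and the two-sided conjugation by $W$ producing the square $(1+r^{-M}\|W\|_{Z^M})^2$. This gives $A\ge0$ with $B$ as above, hence the first claimed bound.

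\emph{Main obstacle.} The crux is that $\rho$ is \emph{arbitrary}: $W\rho W^\dagger$ can conceal severe quantum interference, so one cannot argue by exhibiting a transport plan and must instead establish a single operator inequality, which is also why all the norms $\|W\|_{Z^k}$ --- not merely $\|W\|_{Z^1}$ --- are needed. Technically the delicate estimates are the cross-cell contributions $G_j W^\dagger\Op_\hbar(\psi\ast\gamma_\hbar)W\,G_{j''}$ with $|\beta_j-\beta_{j''}|\gg r$ and the ``leakage'' of $W^\dagger(\text{a cell near }\beta_j)W$ outside an $O(r)$-ball: both require playing the polynomial off-diagonal decay of Lemma~\ref{lem:Z-offdiag} against the polynomial growth of the Lipschitz and Hessian moduli of $\psi$, and taking $M=M(p,D)$ large enough that the decay wins and all the sums converge.

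\emph{The ``in particular'' clause.} For $W=W_{T,\alpha}=e^{i\hat H T/\hbar}\tau_\alpha e^{-i\hat H T/\hbar}$ with $\hat H$ admissible, the bound $\|W_{T,\alpha}\|_{Z^M}\le e^{C_M T}(\hbar^{M/2}+|\alpha|^M)$ proved in Section~\ref{sec:W-local} gives, upon taking $r=\hbar^{1/2}+|\alpha|$, that $\hbar^{M/2}+|\alpha|^M\le r^M$, hence $r^{-M}\|W\|_{Z^M}\le e^{C_M T}$ and $(1+r^{-M}\|W\|_{Z^M})^2\lsim e^{2C_M T}$; moreover $\hbar^{-2D}r^{4D}\lsim1+|\alpha|^{4D}\hbar^{-2D}$. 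Substituting into the first bound yields $d_{W_p}(\Husimi_\rho,\Husimi_{W\rho W^\dagger})\lsim e^{C_M(H)T}(\hbar^{1/2}+|\alpha|)(1+|\alpha|^{4D}\hbar^{-2D})$, which is~\eqref{eq:husimi-shift}.
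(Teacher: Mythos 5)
Your Step 1 (Kantorovich duality, anti-Wick quantization of the potentials, reduction to a single operator inequality uniform in $\rho$) is exactly the paper's starting point, and your treatment of the ``in particular'' clause is correct. The gap is in Step 2, and it sits precisely where the real difficulty of the theorem lies: the Kantorovich potentials have \emph{no usable regularity}. Your claim that $c$-concavity for the cost $|\alpha-\beta|^{2k}$ gives polynomially growing Lipschitz and Hessian moduli that a ``routine truncation'' upgrades to bounded derivatives is not justified and is false as stated: a $c$-concave function is an infimum of the smooth functions $\beta\mapsto|\alpha-\beta|^{2k}-\phi(\alpha)$, so at best it is locally semiconcave (one-sided Hessian bound, and only after one controls where the near-minimizers $\alpha$ live, which requires a priori bounds on $\phi$ that are not available since $\Husimi_\rho$ has unbounded support); it admits no lower Hessian bound and its local Lipschitz constant is governed by the distance to the minimizer, which is uncontrolled. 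Every subsequent estimate in your Step 2 --- ``$\phi\ast\gamma_\hbar$ equals $\phi(\beta_j)$ up to $O(r\,\mathrm{Lip}(\phi))+O(\hbar\|\nabla^2\phi\|)$'', the identification of $G_jW^\dagger\Op_\hbar(\psi\ast\gamma_\hbar)WG_j$ with $\psi(\beta_j')G_j^2$ up to an operator-norm tail, and the absorption of the IMS double-commutator $\tfrac12\sum_j[G_j,[G_j,A]]$ --- leans on these unavailable bounds, so the error terms cannot be absorbed into $B$.

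The paper's proof supplies exactly the missing mechanism, and it is not an IMS localization. One sets $D(\alpha):=-(f(\alpha)+g(\alpha))\ge 0$ and observes that $|g(\alpha)-g(\beta)|\le|\alpha-\beta|^p+D(\alpha)+D(\beta)$, so $D$ acts as a modulus of continuity for $g$ that is itself part of the \emph{positive} main term $M=\Op^w(D\ast\varphi_R)+\delta$; the potentials are additionally mollified by a polynomially decaying kernel $\varphi_R$ at the large scale $R=Q\ell$ with $\ell=r(\hbar^{-2D}r^{4D})\|W\|_{Z^{K+2D+1}_r}^2$ (this is where the prefactor $\hbar^{-2D}r^{4D}$ actually comes from, not from counting interacting cells). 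Positivity of $M+E$ is then deduced from Lemma~\ref{lem:abstract-positivity}, a contraction/spectral argument: one exhibits Banach norms $\|\psi\|_X=\sup_\alpha|\braket{\alpha|\psi}|$ and a $\widetilde D$-weighted variant $\|\cdot\|_Y$, and shows $\|M^{-1}\|_{Y\to X}\le C$ and $\|E\|_{X\to Y}\le CQ^{-1}$, so that $M^{-1}E$ is a contraction and all eigenvalues of the compact self-adjoint (in the $M$-inner product) operator $M^{-1}E$ lie in $[-1,1]$. The estimate $\|E\|_{X\to Y}\le CQ^{-1}$ is where the off-diagonal decay of Lemma~\ref{lem:Z-offdiag} is played against $D$ rather than against any smoothness of $g$. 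To repair your argument you would need to replace the claimed regularity of $\phi,\psi$ by this $D$-based control, at which point the block decomposition is no longer needed.
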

The term involving $\hbar^{-2D}r^{4D}$ is possibly unnecessary, and represents some inefficiency
in the argument.  Moreover, one might expect that one can take $M$ independent of $D$, whereas
the proof given below can be made to work only if $M>4D+p+2$.
The quantity $1 + r^{-M} \|W\|_{Z^M}$ appears repeatedly throughout the proof, so we use the more convenient
notation
\[
\|W\|_{Z^M_r} := 1 + r^{-M} \|W\|_{Z^M}.
\]

The proof of Theorem~\ref{thm:main-husimi} goes through Kantorovich duality, which gives
the identity
\[
d_{W_p}(\mu,\nu)^p = \sup_{f(x)+g(y)\leq |x-y|^p} \int f\diff \mu + \int g\diff \nu.
\]
In the supremum above, no regularity is imposed on $f$ and $g$ aside from measurability.
On the other hand, when we take $\mu = \Husimi_\rho$ and $\nu=\Husimi_{W\rho W^\dagger}$, it will
be helpful to have some regularity for $f$ and $g$.   This can be handled by performing a
mild mollification.
\begin{lemma}
Let $\mu$ and $\nu$ be probability measures and let $\kappa$ be some convolution kernel satisfying
$\int |x|^p\kappa(x)\diff x := r_p(\kappa)^p < \infty$.  Then we have
\[
d_{W_p}(\mu,\nu)^p \leq r_p(\kappa)^p +
\sup_{f(x)+g(y)\leq |x-y|^p} \int f\ast \kappa \diff \mu + \int g\ast \kappa \diff \nu\,.
\]
\end{lemma}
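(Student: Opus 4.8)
The plan is to use Kantorovich–Rubinstein duality, recognize the right-hand side as a Wasserstein distance between mollified measures, and then control the loss coming from the mollification.

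First, by Kantorovich duality for the cost $c(\alpha,\beta)=|\alpha-\beta|^p$,
\[
d_{W_p}(\mu,\nu)^p=\sup_{f(x)+g(y)\le|x-y|^p}\Big(\int f\,\diff\mu+\int g\,\diff\nu\Big),
\]
the supremum being over all measurable feasible pairs. By Fubini, $\int f\ast\kappa\,\diff\mu=\int f\,\diff(\mu\ast\check\kappa)$ and likewise for $g$, where $\check\kappa(z):=\kappa(-z)$; since $\{f(x)+g(y)\le|x-y|^p\}$ is exactly the Kantorovich constraint for this cost, the supremum appearing on the right of the lemma equals $d_{W_p}(\mu\ast\check\kappa,\nu\ast\check\kappa)^p$. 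So the task reduces to the purely classical inequality $d_{W_p}(\mu,\nu)^p\le r_p(\kappa)^p+d_{W_p}(\mu\ast\check\kappa,\nu\ast\check\kappa)^p$.

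Second, the observation that drives the estimate is that mollification preserves feasibility: if $f(x)+g(y)\le|x-y|^p$ for all $x,y$, then relabelling the convolution variable of both potentials to a common $z$,
\[
(f\ast\kappa)(x)+(g\ast\kappa)(y)=\int\big(f(x-z)+g(y-z)\big)\kappa(z)\,\diff z\le\int|x-y|^p\kappa(z)\,\diff z=|x-y|^p,
\]
using $(x-z)-(y-z)=x-y$ and $\int\kappa=1$. I would apply this to a near-optimal pair $(f,g)$ for $d_{W_p}(\mu,\nu)^p$: with $\pi$ an optimal coupling (so that $f(x)+g(y)=|x-y|^p$ on $\operatorname{supp}\pi$), the same relabelling gives the exact identity
\[
d_{W_p}(\mu,\nu)^p-\Big(\int f\ast\kappa\,\diff\mu+\int g\ast\kappa\,\diff\nu\Big)=\int\kappa(z)\Big(\int\big[(f(x)-f(x-z))+(g(y)-g(y-z))\big]\,\diff\pi\Big)\,\diff z,
\]
and the goal is then to bound the inner integral, for each fixed $z$, by $|z|^p$, so that integrating against $\kappa$ produces exactly $r_p(\kappa)^p$. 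Alternatively, and more robustly, one can transport through the mollified measures directly: the coupling $(\xi,\xi+\eta)$ with $\xi\sim\mu$ and $\eta\sim\check\kappa$ independent gives $d_{W_p}(\mu,\mu\ast\check\kappa)\le r_p(\kappa)$, and similarly for $\nu$, so the triangle inequality for $d_{W_p}$ yields $d_{W_p}(\mu,\nu)\le 2\,r_p(\kappa)+d_{W_p}(\mu\ast\check\kappa,\nu\ast\check\kappa)$.

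The one step requiring genuine care is the defect bound in the displayed identity: controlling $\int[(f(x)-f(x-z))+(g(y)-g(y-z))]\,\diff\pi$ by $|z|^p$, uniformly in $\mu,\nu$, when the optimal potentials are merely measurable and can grow like $|\cdot|^p$. For $p=1$ this is immediate, since the optimal potential is $1$-Lipschitz and $|f(x)-f(x-z)|\le|z|$; for general $p$ one must use the $c$-concavity (complementary slackness) of $(f,g)$, because a naive mean-value estimate loses a factor of order $d_{W_p}(\mu,\nu)^{p-1}$ and only yields a constant multiple of $r_p(\kappa)^p$. Pinning down the sharp additive $r_p(\kappa)^p$ rather than a constant times it is therefore the main obstacle; the rest is duality bookkeeping together with the triangle inequality.
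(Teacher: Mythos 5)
Your fallback route --- couple $\mu$ with $\mu\ast\check\kappa$ by adding independent noise to get $d_{W_p}(\mu,\mu\ast\check\kappa)\le r_p(\kappa)$, apply the triangle inequality, and identify the mollified dual supremum with $d_{W_p}(\mu\ast\check\kappa,\nu\ast\check\kappa)^p$ via Kantorovich duality and Fubini --- is precisely the paper's proof; the paper's entire argument consists of those steps (and it even drops a factor of $2$, writing $d_{W_p}(\mu,\nu)\le d_{W_p}(\mu\ast\kappa,\nu\ast\kappa)+r_p(\kappa)$ where the triangle inequality through both mollified measures gives $2r_p(\kappa)$). So on the level of method you have reproduced the intended argument.

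Your diagnosis of the remaining obstacle is also correct, and you should not spend more time on the defect identity or on $c$-concavity: the sharp additive form $d_{W_p}(\mu,\nu)^p\le r_p(\kappa)^p + d_{W_p}(\mu\ast\check\kappa,\nu\ast\check\kappa)^p$ is false for $p>1$. Take $p=2$ on $\Real$, $\kappa=\tfrac12(\delta_{-1}+\delta_{1})$ (or a slight mollification, to make it a genuine density), $\mu=\delta_0$, $\nu=\tfrac12(\delta_{-2}+\delta_{2})$. Then $d_{W_2}(\mu,\nu)^2=4$ and $r_2(\kappa)^2=1$, while the monotone coupling of $\mu\ast\kappa=\tfrac12(\delta_{-1}+\delta_1)$ with $\nu\ast\kappa=\tfrac14(\delta_{-3}+\delta_{-1}+\delta_1+\delta_3)$ gives $d_{W_2}(\mu\ast\kappa,\nu\ast\kappa)^2=2$, so the claimed bound would read $4\le 3$. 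What the triangle-inequality argument actually yields is $d_{W_p}(\mu,\nu)\le 2r_p(\kappa)+S^{1/p}$, where $S$ denotes the supremum on the right-hand side of the lemma, hence $d_{W_p}(\mu,\nu)^p\le C_p\bigl(r_p(\kappa)^p+S\bigr)$. This constant-loss version is all that is needed downstream: in the proof of Theorem~\ref{thm:main-husimi} the lemma is applied with $\kappa=\varphi_R$, $r_p(\varphi_R)\simeq R$, and $S\le QR^p$, and the conclusion $d_{W_p}\le C(p,Q)R$ is unaffected by the extra constants. So your proposal, restricted to its robust branch, proves exactly what the paper's own argument proves, and the sharp statement you were trying to rescue should instead be weakened.
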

\begin{proof}
We have $d_{W_p}(\mu,\mu\ast\kappa) \leq r_p(\kappa)$, so by the triangle inequality
\[
d_{W_p}(\mu,\nu) \leq d_{W_p}(\mu\ast \kappa, \nu\ast \kappa) + r_p(\kappa).
\]
Then we apply Kantorovich duality to the first term.
\end{proof}

We will need to use a specific convolution kernel with polynomial decay.
To this end, we fix $K = 2D+10+2\lceil p\rceil$ and define
$\varphi(\alpha) := C(1+|\alpha|^2)^{-K/2}$, with $C=C_K$ chosen such that $\int \varphi = 1$.  The exponent is chosen to be large enough that $\int |\alpha|^p \varphi(\alpha) < \infty$.
We use also the scaled version $\varphi_R(x) := R^{-D} \varphi(x/R)$, where we take
$R=Q\ell$ with $Q$ an as-yet unspecified large constant and $\ell$ given by
\begin{equation}
\label{eq:ell-def}
\ell := r (\hbar^{-2D}r^{4D}) \|W\|_{Z^{K+2D+1}_r}^2.
\end{equation}

Theorem~\ref{thm:main-husimi} will then follow if we show that for arbitrary measurable
$f,g$ satisfying $f(\alpha)+g(\beta) \leq |\alpha-\beta|^p$,
\begin{equation}
\label{eq:husimi-KR}
\int (f\ast \varphi_{R})(\alpha) \Husimi_\rho(\alpha)\diff\alpha
+ \int (g\ast\varphi_{R})(\alpha) \Husimi_{W\rho W^\dagger}(\alpha)\diff\alpha
\leq Q R^p\,.
\end{equation}
For convenience we define $\delta := QR^p$ to be the term on the right-hand side.

Using the wavepacket quantization
\begin{align}
\label{E:Opwdef1}
\Op^w(f) := (2\pi\hbar)^{-D} \int f(\alpha) \,|\alpha\rangle \langle \alpha|\,\diff\alpha,
\end{align}
we have the identity
\[
\int \Husimi_\rho(\alpha) f(\alpha)\diff\alpha = \trace[\rho \,\Op^w(f)]\,.
\]

Therefore, the bound~\eqref{eq:husimi-KR} is equivalent to the bound
\[
\trace[\rho (\Op^w(f\ast \varphi_{R}) + W \Op^w(g\ast \varphi_{R}) W^\dagger) ]
\leq \delta.
\]
Since $\rho$ is arbitrary, proving the above bound requires the following operator inequality:
\[
\Op^w(f\ast \varphi_{R}) + W\Op^w(g\ast\varphi_{R}) W^\dagger
\preceq \delta\,\text{Id}\,.
\]

Now define $D(\alpha) := -(f(\alpha)+g(\alpha))$, which is a nonnegative function because
$f(\alpha)+g(\alpha) \leq |\alpha-\alpha|^p = 0$.  We can then rearrange the above inequality into
\begin{equation}
\label{eq:op-positivity}
0 \preceq \Op^w(D\ast\varphi_{R}) + \delta + [\Op^w(g\ast\varphi_{R}) - W\Op^w(g\ast\varphi_{R})W^\dagger].
\end{equation}

One can think of the error term
$\Op^w(g\ast \varphi_{R}) - W\Op^w(g\ast \varphi_R) W^\dagger$ on the right as expressing how much $g$ changes when
it is wiggled slightly by $W$.  The point
of the inequality above is that the function $D$ controls the regularity of $g$, acting as a sort of modulus of continuity.  Indeed, adding $D(\alpha)$ to both sides of
\[
f(\alpha)+g(\beta) \leq |\alpha-\beta|^p
\]
yields
\[
g(\beta) - g(\alpha) \leq |\alpha-\beta|^p + D(\alpha).
\]
By symmetry, we therefore have
\begin{equation}
\label{eq:gd-cty}
|g(\alpha)-g(\beta)| \leq |\alpha-\beta|^p + D(\alpha)+D(\beta).
\end{equation}
This inequality justifies the notion that $D$ acts as a sort of ``modulus of continuity'' for $g$.   

Although the reasoning above gives some heuristic justification for the operator inequality~\eqref{eq:op-positivity}, it is unclear however how one would establish
positivity for this operator by direct means.  
For this we use the following abstract
result.
\begin{lemma}
\label{lem:abstract-positivity}
Let $A \in\mcal{B}(\mcal{H})$ be a bounded self-adjoint operator on a Hilbert space $\mcal{H}$.  Suppose
that $A = M+E$ where $M \succeq \eps\,\Id$ for some $\eps>0$ and $E$ is compact.  Moreover suppose that there exists a Banach space $X$ and a continuous inclusion $\iota: \mcal{H}\to X$ such that $M^{-1}E$ is a contraction on $X$.
 Then $A$ is positive semi-definite.
\end{lemma}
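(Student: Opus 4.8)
The plan is to argue by contradiction: suppose $A = M + E$ is not positive semi-definite, so its spectrum dips below zero. Since $M \succeq \eps\,\Id$ is invertible with $\|M^{-1}\| \le \eps^{-1}$, we may write
\[
A = M^{1/2}\big(\Id + M^{-1/2} E M^{-1/2}\big) M^{1/2},
\]
so $A$ fails to be positive semi-definite precisely when $\Id + M^{-1/2} E M^{-1/2}$ fails to be positive semi-definite on $\mcal{H}$. Because $E$ is compact and $M^{-1/2}$ is bounded, $M^{-1/2} E M^{-1/2}$ is compact and self-adjoint, hence has a discrete spectrum of eigenvalues accumulating only at $0$. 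Thus if $A \not\succeq 0$ there is a genuine eigenvalue $\lambda \le -1$ of $M^{-1/2} E M^{-1/2}$ with a normalized eigenvector $u \in \mcal{H}$, i.e. $M^{-1/2}EM^{-1/2} u = \lambda u$. Setting $v := M^{-1/2} u \in \mcal{H}$, this rearranges to $M^{-1} E v = \lambda v$, so $v$ is an eigenvector of the operator $M^{-1} E$ (a priori defined on $\mcal{H}$) with eigenvalue $\lambda \le -1$.

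Next I would transport this eigenvalue equation to the Banach space $X$. Using the continuous inclusion $\iota: \mcal{H}\to X$, the vector $\iota(v)\in X$ is nonzero (since $\iota$ is injective, being continuous and — one should check the hypotheses here — the inclusion of a dense subspace; in any case $\iota(v)=0$ together with $M^{-1}Ev=\lambda v$ with $\lambda \ne 0$ forces $v=0$ after applying $\iota$, a contradiction unless $v=0$). The identity $M^{-1} E v = \lambda v$ continues to hold in $X$ after applying $\iota$, provided $M^{-1}E$ extends to (or is the restriction of) the bounded operator on $X$ that the hypothesis refers to; this is exactly the compatibility one gets from $\iota$ intertwining the two realizations of $M^{-1}E$. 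Then $\iota(v)$ is an eigenvector in $X$ of the operator $M^{-1}E$ with eigenvalue $|\lambda| \ge 1$. But $M^{-1}E$ is assumed to be a contraction on $X$, so its spectral radius on $X$ is at most $1$, and in fact any eigenvalue has modulus $\le \|M^{-1}E\|_{X\to X} \le 1$; moreover a contraction cannot have an eigenvalue of modulus exactly $1$ unless it acts isometrically on the corresponding eigenline, which is incompatible with $\lambda \le -1$ being a strict perturbation. This contradiction shows no such $\lambda$ exists, hence $A \succeq 0$.

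Concretely, the cleanest way to close the argument: from $M^{-1}Ev = \lambda v$ we get $\|\lambda\|\,\|\iota(v)\|_X = \|M^{-1}E\,\iota(v)\|_X \le \|\iota(v)\|_X$, so $|\lambda|\le 1$, contradicting $\lambda \le -1 < -1$... — more carefully, $\lambda \le -1$ gives $|\lambda|\ge 1$, and combined with $|\lambda|\le 1$ we get $|\lambda|=1$, i.e. $\lambda=-1$ and $\|M^{-1}E\,\iota(v)\|_X = \|\iota(v)\|_X$. To rule this borderline case out one uses that $\Id + M^{-1/2}EM^{-1/2} \succeq 0$ fails only on an interval of negative eigenvalues, so if $-1$ is the top of the negative spectrum one can instead perturb: replace $M$ by $(1+\eta)M$ for small $\eta>0$, under which $M^{-1}E$ is still a contraction on $X$ (its $X$-norm only shrinks) while the would-be eigenvalue of $(\, (1+\eta)M\,)^{-1}E$ becomes $\lambda/(1+\eta)$, which has modulus strictly less than $|\lambda|$; iterating or taking $\eta$ appropriately forces a strict contradiction. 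Alternatively, and most simply, note that $\Id + M^{-1/2}EM^{-1/2}\succeq 0$ is equivalent to $\Id + M^{-1}E$ having nonnegative spectrum as an operator on $\mcal{H}$ (they are similar), so it suffices to show $\mathrm{spec}_{\mcal H}(M^{-1}E)\subset(-1,\infty)$; since every point of this spectrum that is an eigenvalue is also an eigenvalue of the $X$-realization and hence has modulus $\le 1$, and the spectrum is discrete away from $0$ by compactness, the only way to violate $\succeq 0$ is an eigenvalue in $[-1,0)$, and the value $-1$ itself is excluded because a contraction on a Banach space with an eigenvalue on the unit circle would, restricted to the (finite-dimensional, by compactness) generalized eigenspace, fail to be a contraction unless it is diagonalizable there with that eigenvalue — but then $\iota(v)$ would span an invariant line on which $M^{-1}E$ acts by $-1$, still a contraction, which is not a contradiction, so one genuinely needs the strict version $\|M^{-1}E\|_{X\to X}<1$ or the perturbation trick above.

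The main obstacle I anticipate is precisely this borderline-eigenvalue issue and, relatedly, making rigorous the passage of the eigenvalue equation between $\mcal{H}$ and $X$: one must know that the bounded operator "$M^{-1}E$ on $X$" is genuinely an extension of the Hilbert-space operator $M^{-1}E$ along $\iota$ (this is implicit in the phrase "$M^{-1}E$ is a contraction on $X$" but should be stated as part of the hypothesis), and one must handle the fact that $\iota$ need not have closed range, so one cannot simply pull back norms. I expect the intended application has $\mathcal H = L^2$, $X$ a weighted-$L^\infty$ or weighted-$L^1$ type space into which $L^2$ embeds (after the Gaussian smoothing built into $\Op^w$, such embeddings are available), and $M^{-1}E$ manifestly a strict contraction there by the rapid decay of the off-diagonal matrix elements of $W$ quantified in Lemma~\ref{lem:Z-offdiag}, so the strictness needed to bypass the borderline case will in practice be free. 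In the write-up I would therefore state the lemma with "$M^{-1}E$ extends to a bounded operator on $X$ with $\|M^{-1}E\|_{X\to X}<1$" (or carry a harmless $(1+\eta)$ perturbation of $M$), after which the contradiction argument above is clean and short.
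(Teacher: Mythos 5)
Your core argument is the same as the paper's: reduce positivity of $A=M(\Id+M^{-1}E)$ to showing that the compact operator $B=M^{-1}E$ (self-adjoint in the $M$-weighted inner product, or equivalently similar to the self-adjoint compact $M^{-1/2}EM^{-1/2}$) has spectrum in $[-1,1]$, and then bound each eigenvalue by evaluating the eigenvector equation in $X$ and using $\|B\|_{X\to X}\le 1$. That part is right and matches the paper. However, your long detour about the ``borderline eigenvalue'' $\lambda=-1$ is a self-inflicted problem, and your conclusion that ``one genuinely needs the strict version $\|M^{-1}E\|_{X\to X}<1$'' is incorrect. The lemma only claims positive \emph{semi}-definiteness, which is equivalent to $\sigma(B)\subset[-1,\infty)$; an eigenvalue at exactly $-1$ just means $\Id+B$ has kernel and is perfectly compatible with $A\succeq 0$. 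The error enters when you negate the conclusion: $A\not\succeq 0$ produces an eigenvalue $\lambda<-1$ strictly (not $\lambda\le -1$), so $|\lambda|>1$ already contradicts $|\lambda|\le\|B\|_{X\to X}\le 1$ with no borderline case to rule out. The paper avoids the issue entirely by arguing directly ($|\lambda_k|\le 1$ for all eigenvalues, hence $\sigma(B)\subset[-1,1]$, hence $\Id+B\succeq_M 0$) rather than by contradiction. Your remaining remarks — that $\iota$ must be injective and must intertwine the two realizations of $M^{-1}E$, and that eigenvectors must not lie in $\ker\iota$ — are legitimate reading of what ``continuous inclusion'' and ``contraction on $X$'' are meant to encode, and the paper indeed uses both implicitly when it normalizes $\|\psi_k\|_X=1$.
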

\begin{proof}
First note that the inner product $\langle \cdot,\cdot\rangle_M$ defined by
\[
\langle f,g\rangle_M :=
\langle f, Mg\rangle
\]
is equivalent to the standard inner product on $\mcal{H}$ because the operator $M$ is a continuous bijection on $\mcal{H}$ with inverse bounded by $\eps^{-1}$.  The operator $B := M^{-1}E$ is self-adjoint with respect to $\langle\cdot,\cdot\rangle_M$, and is also
compact because $E$ is compact.
The positivity of $A$ is then equivalent to the positivity of $\Id+B$
with respect to $\langle \cdot,\cdot\rangle_M$.  This follows if the spectrum $\sigma(B)$ satisfies $\sigma(B)\subset [-1,1]$.

Since $B$ is compact, we
simply need to show that every eigenvalue
$\lambda_k$ of $B$ satisfies $|\lambda_k|\leq 1$.
Let $\psi_k\in\mcal{H}$ be an eigenvector of $B$ with eigenvalue $\lambda_k$, so that $B \psi_k= \lambda_k\psi_k$.
Normalize $\psi_k$ so that $\|\psi_k\|_X=1$. Now we use that $B$ is a contraction on $X$ to see that
\[
|\lambda_k|
= \|B_\eps \psi_k\|_X \leq \|\psi_k\|_X = 1,
\]
as desired.
\end{proof}

To establish~\eqref{eq:op-positivity} we invoke Lemma~\ref{lem:abstract-positivity} with
\begin{align*}
M &:= \Op^w(D\ast \varphi_{R})+\delta \\
E &:= \Op^w(g\ast \varphi_R) - W \Op^w(g\ast \varphi_R) W^\dagger.
\end{align*}
The strict positivity of $M$ follows because $\delta>0$ and $D \succeq 0$.
Moreover, we can have compactness of $E$ if we additionally require that $g$ is compactly supported
(which of course can be done without changing the supremum in the Kantorovich duality).

It therefore suffices to find some norm $X$ with an inclusion $\mcal{H}\hookrightarrow X$
for which $M^{-1}E$ is a contraction.  This norm can be explicitly defined:
\[
\|\psi\|_X := \sup_\alpha |\Braket{\alpha|\psi}|.
\]
In our proof we also use the auxiliary norm
\[
\|\psi\|_Y := \sup_\alpha (D\ast \varphi_{R}(\alpha) +\delta)^{-1} |\Braket{\alpha|\psi}|\,.
\]

The main result Theorem~\ref{thm:main-husimi} then follows from the pair of lemmas:
\begin{lemma}
\label{lem:MYX}
Let $M$ be as above (including all implicit assumptions about $f,g,D$, etc.).  Then
\begin{equation}
\label{eq:Minv}
\|M^{-1}\|_{Y\to X} \leq C.
\end{equation}
\end{lemma}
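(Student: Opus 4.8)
The plan is to pass to the coherent-state representation, where $M$ becomes (to leading order) multiplication by $m:=D\ast\varphi_{R}+\delta$, and to invert it by a Neumann series after isolating a commutator error. Write $\tilde\phi(\alpha):=\Braket{\alpha|\phi}$; then $\|\phi\|_{X}=\|\tilde\phi\|_{L^{\infty}}$ and $\|\psi\|_{Y}=\|\tilde\psi/m\|_{L^{\infty}}$, and $M=\Op^w(D\ast\varphi_{R})+\delta\,\Id=\Op^w(m)$ since $\Op^w(1)=\Id$. Applying $\Braket{\alpha|\cdot}$ to $M\phi=(2\pi\hbar)^{-D}\int m(\gamma)\Braket{\gamma|\phi}\ket{\gamma}\diff\gamma$ and subtracting $m(\alpha)$ times the resolution of identity $\Braket{\alpha|\phi}=(2\pi\hbar)^{-D}\int\Braket{\alpha|\gamma}\Braket{\gamma|\phi}\diff\gamma$, one finds for $\psi=M\phi$ that
\[
\tilde\psi(\alpha)-m(\alpha)\,\tilde\phi(\alpha)=(2\pi\hbar)^{-D}\int\big(m(\gamma)-m(\alpha)\big)\Braket{\alpha|\gamma}\,\tilde\phi(\gamma)\diff\gamma .
\]
Dividing by $m(\alpha)\ge\delta>0$ gives $(\Id+B)\tilde\phi=\tilde\psi/m$, where $B$ is the integral operator with kernel $B(\alpha,\gamma)=m(\alpha)^{-1}(2\pi\hbar)^{-D}\Braket{\alpha|\gamma}\big(m(\gamma)-m(\alpha)\big)$. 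So the lemma reduces to showing that $\Id+B$ is boundedly invertible on $L^{\infty}$, for which it suffices that $\|B\|_{L^{\infty}\to L^{\infty}}\le\tfrac12$.

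To bound $\|B\|_{L^{\infty}\to L^{\infty}}=\sup_{\alpha}\int|B(\alpha,\gamma)|\diff\gamma$ I would combine two facts. First, $|\Braket{\alpha|\gamma}|$ is a dimensional constant times a Gaussian of width $\sim\hbar^{1/2}$, so $(2\pi\hbar)^{-D}|\Braket{\alpha|\gamma}|$ concentrates at scale $\hbar^{1/2}$ with total $\gamma$-mass a dimensional constant. Second — and this is the point of mollifying by $\varphi_{R}$ — the symbol $m$ is log-Lipschitz at the coarser scale $R=Q\ell$: since $D\ge0$ and $\delta\ge0$ one has $\nabla m=D\ast\nabla\varphi_{R}$, and the pointwise bound $|\nabla\varphi_{R}|\le\tfrac{K}{2R}\varphi_{R}$ (a consequence of $\varphi(v)=C(1+|v|^{2})^{-K/2}$ satisfying $|\nabla\varphi|\le\tfrac{K}{2}\varphi$; cf.~Section~\ref{sec:phibds}) yields $|\nabla m|\le\tfrac{K}{2R}(D\ast\varphi_{R})\le\tfrac{K}{2R}m$, hence $|m(\gamma)-m(\alpha)|\le m(\alpha)\big(e^{\frac{K}{2R}|\gamma-\alpha|}-1\big)$. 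Combining these and rescaling $\gamma-\alpha=\hbar^{1/2}u$,
\[
\int|B(\alpha,\gamma)|\diff\gamma\lsim_{D}\int e^{-c|u|^{2}}\big(e^{\varepsilon|u|}-1\big)\diff u\lsim_{D}\varepsilon,\qquad \varepsilon:=\frac{K\hbar^{1/2}}{2R}.
\]
Since $R=Q\ell$ with $\ell\ge r>\hbar^{1/2}$ we have $\varepsilon\le K/(2Q)$, which is $\le\tfrac12$ once the free parameter $Q$ is taken large enough in terms of $D$ and $K$, equivalently of $D$ and $p$.

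It remains to conclude. Because $M\succeq\delta\,\Id$, the operator $M$ is invertible on $\mcal{H}$, so $\phi:=M^{-1}\psi\in\mcal{H}\subset X$ (the inclusion $\mcal{H}\hookrightarrow X$ is bounded because $\|\ket{\alpha}\|$ is a fixed constant), hence $\tilde\phi\in L^{\infty}$ and satisfies $(\Id+B)\tilde\phi=\tilde\psi/m$ in $L^{\infty}$. With $\|B\|_{L^{\infty}\to L^{\infty}}\le\tfrac12$ the operator $\Id+B$ is invertible on $L^{\infty}$ with the inverse of norm $\le2$, so $\|M^{-1}\psi\|_{X}=\|\tilde\phi\|_{L^{\infty}}\le 2\|\tilde\psi/m\|_{L^{\infty}}=2\|\psi\|_{Y}$, which is the assertion.

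The main obstacle is the commutator estimate $\|B\|_{L^{\infty}\to L^{\infty}}\le\tfrac12$: one needs the mollified modulus of continuity $m$ to be log-Lipschitz at scale $R$ — so that the relative oscillation $|m(\gamma)-m(\alpha)|/m(\alpha)$ is controlled — while keeping the lower bound $m\ge\delta$, and this is precisely what convolution against $\varphi_{R}$ buys; one then uses that the reproducing kernel lives at the strictly finer scale $\hbar^{1/2}\ll R$, so the oscillation seen by $B$ is $O(\hbar^{1/2}/R)=O(1/Q)$ and can be made to beat the dimensional constants from the (non-normalized) coherent-state overlaps by enlarging $Q$, hence $K$. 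Everything else is the fundamental theorem of calculus and explicit Gaussian integrals.
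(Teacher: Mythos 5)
Your proof is correct and follows essentially the same route as the paper: both invert $M$ by a one-step parametrix plus Neumann series, with the error operator controlled by the relative oscillation of $\widetilde{D}=D\ast\varphi_R+\delta$ over the coherent-state overlap scale $\hbar^{1/2}$, which is $O(\hbar^{1/2}/R)=O(Q^{-1})$ and hence small for $Q$ large. The only cosmetic differences are that you conjugate into the coherent-state ($L^\infty$) representation and use the log-Lipschitz bound $|\nabla \widetilde D|\le \tfrac{K}{2R}\widetilde D$ in place of the paper's factorization $M^{-1}=(\Id+T)^{-1}\Op^w(\widetilde D^{-1})$ and Lemma~\ref{lem:phi-conv}.
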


\begin{lemma}
\label{lem:EXY}
Let $E$ be as above (including all implicit assumptions about $f,g,D$, etc.).  Then
\begin{equation}
\label{eq:E}
\|E\|_{X\to Y} \leq CQ^{-1}.
\end{equation}
\end{lemma}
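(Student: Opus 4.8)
The plan is to estimate the kernel of $E = \Op^w(g\ast\varphi_R) - W\Op^w(g\ast\varphi_R)W^\dagger$ in the coherent-state basis and then bound the $X\to Y$ norm directly from the definition of these sup-type norms. Writing $\Op^w(h) = (2\pi\hbar)^{-D}\int h(\gamma)\ket{\gamma}\bra{\gamma}\diff\gamma$, for any $\psi$ we have
\[
\Braket{\alpha|E\psi} = (2\pi\hbar)^{-D}\int (g\ast\varphi_R)(\gamma)\Big(\braket{\alpha|\gamma}\braket{\gamma|\psi} - \braket{\alpha|W|\gamma}\braket{\gamma|W^\dagger\psi}\Big)\diff\gamma.
\]
Inserting the resolution of the identity in the coherent-state basis on the right-hand vectors turns this into a double integral over $\gamma$ and a second variable $\beta$ against a kernel built from products of overlaps $\braket{\cdot|\cdot}$ and off-diagonal matrix elements $\braket{\cdot|W|\cdot}$. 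For the second of these I would use Lemma~\ref{lem:Z-offdiag}, which gives $|\braket{\alpha|W|\gamma}| \leq C_{k,D}|\alpha-\gamma|^{-k}\|W\|_{Z^k}$, while the Gaussian overlaps $|\braket{\alpha|\gamma}|$ decay like $\exp(-|\alpha-\gamma|^2/(4\hbar))$ and are essentially supported on the scale $\sqrt\hbar$. Since $R = Q\ell \gg r > \hbar^{1/2}$, the convolution $g\ast\varphi_R$ is slowly varying on the scale $\sqrt\hbar$, so the difference of the two terms picks up a gain: either from the modulus-of-continuity inequality~\eqref{eq:gd-cty} controlling how much $g\ast\varphi_R$ can change over a displacement, or from the fact that $W$ differs from the identity only on scales small compared to $R$ (quantified by $\|W\|_{Z^M}$). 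The net effect should be a bound of the form $|\Braket{\alpha|E\psi}| \leq C\,\Lambda(\alpha)\,\|\psi\|_X$ where $\Lambda(\alpha)$ is pointwise dominated by $C Q^{-1}(D\ast\varphi_R(\alpha)+\delta)$; dividing through by $(D\ast\varphi_R(\alpha)+\delta)$ and taking the supremum over $\alpha$ then yields~\eqref{eq:E}.

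Concretely I would split $E = E_1 + E_2$ where $E_1 := \Op^w(g\ast\varphi_R) - \Op^w((g\ast\varphi_R)\circ S)$ compares $g$ at slightly displaced arguments (with $S$ a displacement of size governed by the localization length of $W$) and $E_2$ collects the genuinely $W$-dependent remainder. For $E_1$, the displacement bound $|(g\ast\varphi_R)(\alpha) - (g\ast\varphi_R)(\alpha')| \lesssim |\alpha-\alpha'|^p + D\ast\varphi_R(\alpha) + D\ast\varphi_R(\alpha')$ from~\eqref{eq:gd-cty} (after convolving) is exactly what produces the factor $D\ast\varphi_R + \delta$ on the right, with the leftover $|\alpha-\alpha'|^p$ terms contributing a power of $\ell/R = 1/Q$ once integrated against the Gaussian and the decaying kernel. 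For $E_2$, one expands $W = \Id + (W-\Id)$ and uses that $\|W\ket\alpha - \ket\alpha\|$-type quantities and the off-diagonal decay from Lemma~\ref{lem:Z-offdiag} are controlled by $\|W\|_{Z^M}$; here the normalization $\ell \propto \|W\|_{Z^{K+2D+1}_r}^2$ in~\eqref{eq:ell-def} is precisely what makes $R$ large enough to absorb these factors and still leave a $Q^{-1}$. Throughout, the choice $K = 2D+10+2\lceil p\rceil$ guarantees that $\varphi_R$ has enough decay that the spatial integrals (over $\gamma$ and $\beta$, against $|\cdot|^{-k}$ with $k$ large, against Gaussians, and against $\varphi_R$ itself) converge with room to spare; this is where the constraint $M > 4D+p+2$ enters.

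The main obstacle I anticipate is the bookkeeping of the weight function: one must verify that the $\alpha$-dependence produced by the kernel estimates is genuinely dominated by $D\ast\varphi_R(\alpha)+\delta$ \emph{pointwise in $\alpha$}, not merely in some averaged sense, since the $Y$ norm is a weighted sup. This requires that wherever the estimate produces a factor like $D\ast\varphi_R(\alpha')$ at a displaced point $\alpha'$, one can trade it back for $D\ast\varphi_R(\alpha)$ up to a bounded multiplicative constant — i.e.\ a Harnack-type comparison $D\ast\varphi_R(\alpha') \lesssim (1+|\alpha-\alpha'|/R)^{O(1)}(D\ast\varphi_R(\alpha)+\delta)$ — which holds because $\varphi_R$ has polynomial tails of a fixed degree and $D\geq 0$. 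Matching the power of $|\alpha-\alpha'|/R$ thereby generated against the decay available in the kernel, while keeping one spare factor of $\ell/R = Q^{-1}$, is the delicate point; everything else is routine Gaussian and polynomial integration.
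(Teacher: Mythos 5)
Your overall target is the right one --- a pointwise bound $|\Braket{\xi|E\psi}|\le CQ^{-1}(D\ast\varphi_R(\xi)+\delta)\|\psi\|_X$, obtained from the off-diagonal decay of $W$ (Lemma~\ref{lem:Z-offdiag}), the modulus-of-continuity inequality~\eqref{eq:gd-cty}, and a weight comparison to trade $D\ast\varphi_R$ at displaced points back to the base point (the paper does exactly this last step, via~\eqref{eq:R-displacement} and the change of variables $\beta=(R/r)\alpha$ in term III). But the concrete mechanism you propose for extracting the cancellation between the two terms of $E$ has a genuine gap. Your split $E=E_1+E_2$ with $W=\Id+(W-\Id)$, together with the claim that ``$\|W\ket{\alpha}-\ket{\alpha}\|$-type quantities \ldots are controlled by $\|W\|_{Z^M}$,'' is false: the $Z^M$ norm only measures how well $W\ket{\alpha}$ is \emph{localized around} $\alpha$, not how close it is to $\ket{\alpha}$. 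Already for $W=\tau_\alpha$ with $|\alpha|\sim r$ one has $\|W\|_{Z^k}\sim r^k$ finite while $\|\tau_\alpha\ket{\beta}-\ket{\beta}\|_{L^2}\sim 1$; and there is no single classical displacement $S$ implemented by a general localized unitary, so $E_2$ is not small in any sense you can extract from the hypotheses. Comparing $W\Op^w(h)W^\dagger$ to $\Op^w(h\circ S^{-1})$ is itself an Egorov-type statement and cannot be assumed here.

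The paper's route avoids this entirely. Using unitarity, one writes the \emph{same} constant $g_R(\xi)\braket{\xi|\psi}$ two ways,
\begin{equation*}
g_R(\xi)\braket{\xi|\psi}=(2\pi\hbar)^{-D}\!\int g_R(\xi)\braket{\xi|\alpha}\braket{\alpha|\psi}\diff\alpha
=(2\pi\hbar)^{-D}\!\int g_R(\xi)\braket{\xi|W|\alpha}\braket{\alpha|W^\dagger|\psi}\diff\alpha,
\end{equation*}
and subtracts it from each term of $\Braket{\xi|E\psi}$. Both resulting integrands then carry the factor $|g_R(\alpha)-g_R(\xi)|$, to which~\eqref{eq:gd-cty} applies after writing out the convolution with $\varphi_R$; the $W$-dependence enters only through the decay of $|\braket{\xi|W|\alpha}|$ and the a priori bound $|\braket{\alpha|W^\dagger|\psi}|\lesssim(\hbar^{-1/2}r)^{2D}\|\psi\|_X\|W\|_{Z^{2D+1}_r}$. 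No smallness of $W-\Id$ is ever used. Your proposal as written would need to be repaired by replacing the $E_1+E_2$ decomposition with this unitarity identity (or an equivalent device); without it, the ``genuinely $W$-dependent remainder'' $E_2$ cannot be controlled.
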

Above recall that $R = Q\ell$.  Then combining Lemma~\ref{lem:MYX} and
Lemma~\ref{lem:EXY} we have
\[
\|M^{-1}E\|_{X\to X} \leq C Q^{-1},
\]
so that choosing $Q$ large enough we have that $M^{-1}E$ is a contraction.  This implies
the positivity of the operator~\eqref{eq:op-positivity} by Lemma~\ref{lem:abstract-positivity}, and therefore the desired bound~\eqref{eq:husimi-KR} follows and with it the proof of Theorem~\ref{thm:main-husimi}.

The remainder of this Section is dedicated to the proofs of Lemmas~\ref{lem:MYX} and~\ref{lem:EXY}.
Before we can proceed with their proofs, we first need to take a step back to derive some
elementary facts about the convolution kernel $\varphi$.

\subsection{Elementary lemmas concerning the convolution kernel $\varphi$}
\label{sec:phibds}

In this Section we collect some elementary bounds involving $\varphi_R$ which will be used in the proofs of
Lemma~\ref{lem:MYX} and Lemma~\ref{lem:EXY} below.

Recall that
\[
\varphi(\alpha) := C_K (1 + |\alpha|^2)^{-K/2},
\]
with $C_K$ chosen so that $\int \varphi = 1$.  The actual exponent $K > 10D$ is kept implicit throughout (so constants will depend on $K$).  We will work with the scaled function $\varphi_R(x) := R^{-D} \varphi(x/R)$.

We collect some pointwise bounds on $\varphi_R$ below.

\begin{lemma}
\label{lem:phi-ptwise}
The convolution kernel $\varphi_R$ satisfies the following estimates for any $\alpha,\beta\in\Real^{2D}$:
\begin{align}
\label{eq:phi-gradient}
|\nabla \varphi_R(\alpha)| &\leq C R^{-1} (1+|\alpha/R|)^{-1} \varphi_R(\alpha) \\
\label{eq:growth}
|\varphi_R(\alpha+\beta)| &\leq C (1+|\beta/R|^K) \varphi_R(\alpha) \\
\label{eq:displacement}
|\varphi_R(\alpha+\beta) - \varphi_R(\alpha)| &\leq C R^{-1} |\beta| (1+|\beta/R|^K) \varphi_R(\alpha)\,.
\end{align}
\end{lemma}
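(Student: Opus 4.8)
The plan is to reduce everything to the case $R=1$ by scaling, and then verify each inequality by elementary calculus. Since $\varphi_R(x) = R^{-D}\varphi(x/R)$ and $\nabla\varphi_R(x) = R^{-D-1}(\nabla\varphi)(x/R)$, each of the three claimed bounds is invariant under the substitution $\alpha\mapsto\alpha/R$, $\beta\mapsto\beta/R$, up to the overall prefactor $R^{-D}$ and — for \eqref{eq:phi-gradient} and \eqref{eq:displacement} — the explicit $R^{-1}$ produced by the chain rule. So it suffices to prove all three with $R=1$, where $\varphi(\alpha)=C_K(1+|\alpha|^2)^{-K/2}$; afterwards one substitutes back and collects the $R$-dependent factors.

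For \eqref{eq:phi-gradient} I would differentiate directly: $\nabla\varphi(\alpha) = -KC_K\,\alpha\,(1+|\alpha|^2)^{-K/2-1}$, hence $|\nabla\varphi(\alpha)| = K\,|\alpha|\,(1+|\alpha|^2)^{-1}\,\varphi(\alpha)$, and the estimate follows from the pointwise inequality $|\alpha|(1+|\alpha|^2)^{-1}\leq C(1+|\alpha|)^{-1}$, which one checks separately in the two regimes $|\alpha|\leq 1$ and $|\alpha|\geq 1$. For \eqref{eq:growth} the key point is a Peetre-type inequality: from $|\alpha|^2\leq 2|\alpha+\beta|^2 + 2|\beta|^2$ one gets
\[
1+|\alpha|^2 \;\leq\; 2\,(1+|\beta|^2)\,(1+|\alpha+\beta|^2),
\]
so that $\varphi(\alpha+\beta)/\varphi(\alpha) = \big((1+|\alpha|^2)/(1+|\alpha+\beta|^2)\big)^{K/2} \leq 2^{K/2}(1+|\beta|^2)^{K/2}$, and one finishes using $(1+|\beta|^2)^{K/2}\leq(1+|\beta|)^K\leq C(1+|\beta|^K)$; after rescaling this is exactly \eqref{eq:growth}.

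Finally, for \eqref{eq:displacement} I would integrate along the segment from $\alpha$ to $\alpha+\beta$: by the fundamental theorem of calculus $\varphi(\alpha+\beta)-\varphi(\alpha) = \int_0^1 \beta\cdot\nabla\varphi(\alpha+s\beta)\diff s$, so
\[
|\varphi(\alpha+\beta)-\varphi(\alpha)| \leq |\beta|\int_0^1 |\nabla\varphi(\alpha+s\beta)|\diff s \lesssim |\beta|\int_0^1 \varphi(\alpha+s\beta)\diff s,
\]
using \eqref{eq:phi-gradient} and dropping the harmless factor $(1+|\alpha+s\beta|)^{-1}\leq1$; then \eqref{eq:growth} applied with $s\beta$ (together with $1+|s\beta|^K\leq 1+|\beta|^K$ for $s\in[0,1]$) replaces $\varphi(\alpha+s\beta)$ by $(1+|\beta|^K)\varphi(\alpha)$, and integrating in $s$ gives the bound with the factor $|\beta|(1+|\beta|^K)$; rescaling reinstates the $R^{-1}$. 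There is no genuine obstacle here: the only things to be careful about are the bookkeeping of the $R$-dependent prefactors when undoing the scaling, and getting the exponent in \eqref{eq:growth} to come out as $K$ rather than $K/2$ or $K+1$, which is handled by the final estimate $(1+|\beta|^2)^{K/2}\leq C(1+|\beta|^K)$.
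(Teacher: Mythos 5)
Your proposal is correct and follows essentially the same route as the paper: \eqref{eq:phi-gradient} by direct differentiation and rescaling, and \eqref{eq:displacement} by the fundamental theorem of calculus combined with \eqref{eq:phi-gradient} and \eqref{eq:growth}, dropping the harmless factor $(1+|\alpha+s\beta|/R)^{-1}$. The one place you go beyond the paper is \eqref{eq:growth}, where the paper offers only a terse one-line justification while you supply the actual content, namely the Peetre-type inequality $1+|\alpha|^2\le 2(1+|\beta|^2)(1+|\alpha+\beta|^2)$; this is the right way to make that step airtight.
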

\begin{proof}
The bound~\eqref{eq:phi-gradient} follows from rescaling the elementary inequality
\[
|\nabla \varphi(\alpha)| \leq C (1+|\alpha|)^{-1} \varphi(\alpha)\,.
\]
The bound~\eqref{eq:growth} simply follows because the right hand side is bounded from below by a constant, and the left hand side is bounded by a constant.

To prove~\eqref{eq:displacement} we use the fundamental theorem of calculus:
\[
|\varphi_R(\alpha+\beta) - \varphi_R(\alpha)| \leq |\beta|\int_0^1 |(\nabla \varphi_R)(\alpha+t\beta)|\diff t\,.
\]
To bound the integrand we first use~\eqref{eq:phi-gradient} and then~\eqref{eq:growth}
\begin{align*}
|\nabla\varphi_R(\alpha+\beta')|
&\leq C R^{-1} (1+|(\alpha+\beta')/R|)^{-1} \varphi_R(\alpha+\beta') \\
&\leq C R^{-1} (1+|(\alpha+\beta')/R|)^{-1} (1+|\beta'/R|^K) \varphi_R(\alpha)\,.
\end{align*}
We can drop the term $(1+|(\alpha+\beta')/R|)^{-1}$ which is bounded by $1$, and then integrate to conclude
the proof of~\eqref{eq:displacement}.
\end{proof}

The pointwise bounds in Lemma~\ref{lem:phi-ptwise} can then be used to prove bounds for
functions of the form $f_R = f\ast \varphi_R$ where $f\geq 0$ is a positive measurable function.
\begin{lemma}
\label{lem:phi-conv}
Let $\gamma$ be a positive convolution kernel satisfying $\int |\alpha|\gamma(\alpha)\diff \alpha = M_1<\infty$
and $\int |\alpha|^{K+1} \gamma(\alpha)\diff \alpha = M_{K+1} <\infty$ and $f\geq 0$ be a positive measurable function.
Then
\[
\int |f_R(\alpha) - f_R(\alpha+\beta)|\gamma(\beta)\diff \beta \leq C R^{-1}(M_1 + R^{-K} M_{K+1})f_R(\alpha)\,.
\]
\end{lemma}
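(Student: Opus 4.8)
The plan is to reduce the claimed inequality to the pointwise displacement bound \eqref{eq:displacement} for $\varphi_R$ together with Tonelli's theorem. First I would write $f_R = f \ast \varphi_R$ out explicitly, so that
\[
f_R(\alpha) - f_R(\alpha + \beta) = \int f(\eta)\big(\varphi_R(\alpha - \eta) - \varphi_R(\alpha + \beta - \eta)\big)\diff \eta,
\]
and, since $f \geq 0$, pass the absolute value inside the $\eta$-integral. Applying \eqref{eq:displacement} with center $\alpha - \eta$ and displacement $\beta$ bounds the integrand pointwise by $C R^{-1}|\beta|(1 + |\beta/R|^K)\varphi_R(\alpha - \eta)$, and integrating back in $\eta$ reconstitutes $f_R(\alpha)$:
\[
|f_R(\alpha) - f_R(\alpha + \beta)| \leq C R^{-1}|\beta|(1 + |\beta/R|^K)\, f_R(\alpha).
\]
The crucial feature being exploited is that \eqref{eq:displacement} is of ``Harnack type'': the bound on the increment of $\varphi_R$ is pointwise proportional to $\varphi_R$ itself, which is exactly what allows the $\eta$-integral to collapse back to $f_R(\alpha)$ rather than to some further-mollified variant of $f$.

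Next I would integrate the last estimate against $\gamma(\beta)\diff \beta$ and pull out the factor $C R^{-1} f_R(\alpha)$, leaving
\[
\int |\beta|(1 + |\beta/R|^K)\gamma(\beta)\diff \beta = \int |\beta|\gamma(\beta)\diff \beta + R^{-K}\int |\beta|^{K+1}\gamma(\beta)\diff \beta = M_1 + R^{-K} M_{K+1},
\]
which gives precisely the asserted bound $C R^{-1}(M_1 + R^{-K} M_{K+1}) f_R(\alpha)$. The one point deserving a word of care is the interchange of the $\eta$- and $\beta$-integrations used implicitly above, but this is justified by Tonelli's theorem since every integrand in sight is nonnegative; when $f_R(\alpha) = +\infty$ the inequality is trivial, and otherwise all the integrals involved are finite. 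I do not anticipate a genuine obstacle here: once \eqref{eq:displacement} is available, the whole argument is a short computation, and the only real content is recognizing that the Harnack-type form of \eqref{eq:displacement} is what makes the double integral factor.
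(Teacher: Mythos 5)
Your proof is correct and is essentially identical to the paper's: both write out $f_R$ as a convolution, apply the pointwise displacement bound \eqref{eq:displacement} to collapse the $\eta$-integral back to $f_R(\alpha)$, and then integrate against $\gamma$ using the moment hypotheses. The remarks on Tonelli and the ``Harnack-type'' structure of \eqref{eq:displacement} are accurate but add nothing beyond what the paper's two-line argument already contains.
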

\begin{proof}
Write out the difference $f_R(\alpha)-f_R(\alpha+\beta)$ as an integral and then apply~\eqref{eq:displacement}
to obtain
\begin{align*}
|f_R(\alpha) - f_R(\alpha+\beta)|
&\leq \int f(\zeta) |\varphi_R(\alpha-\zeta) - \varphi_R(\alpha+\beta-\zeta)| \diff \zeta\\
&\leq CR^{-1}|\beta| (1+|\beta/R|^K) f_R(\alpha)\,.
\end{align*}
\end{proof}

\subsection{Proof of Lemma~\ref{lem:MYX}}
\label{sec:proofM}
For simplicity we set $\widetilde{D} := D\ast\varphi_{R} + \delta$, which implicitly depends on the
parameters $R=Q\ell$ with $\ell > r$ defined in~\eqref{eq:ell-def}
and $\delta = QR^p$.  With this notation, we have that $M = \Op^w(\widetilde{D})$.

To work with $M^{-1}$ we first observe the identity
\begin{align*}
\Op^w(\widetilde{D}^{-1})\Op^w(\widetilde{D})
&= (2\pi\hbar)^{-2D}\int \widetilde{D}^{-1}(\alpha) \widetilde{D}(\beta) \ket{\alpha}\braket{\alpha|\beta}\bra{\beta} \diff \alpha\diff\beta \\
&= \Id + (2\pi \hbar)^{-2D} \int \widetilde{D}^{-1}(\alpha) (\widetilde{D}(\beta)-\widetilde{D}(\alpha)) \braket{\alpha|\beta} \ket{\alpha}\bra{\beta}\diff\alpha\diff\beta \\
&=: \Id + T\,.
\end{align*}
Rearranging, this gives
\[
\Op^w(\widetilde{D})^{-1} = (\Id+T)^{-1}  \Op^w(\widetilde{D}^{-1})\,.
\]
To prove~\eqref{eq:Minv} it suffices to prove both
\begin{equation}
\label{eq:T-contract}
\|T\|_{X\to X} \leq \frac12
\end{equation}
and
\begin{equation}
\label{eq:hinv-bd}
\|\Op^w(\widetilde{D}^{-1})\|_{Y\to X} \leq C\,.
\end{equation}

First we prove~\eqref{eq:T-contract}.  We have the bounds
\begin{align*}
\|T\|_{X\to X}
&\lsim (2\pi\hbar)^{-D} \sup_\xi \int |\widetilde{D}^{-1}(\xi) (\widetilde{D}(\eta)-\widetilde{D}(\xi))| |\Braket{\xi|\eta}|\diff\eta \\
&\lsim C Q^{-1} (\hbar^{1/2} \ell^{-1})\,.
\end{align*}
To get to the last line, we applied
Lemma~\ref{lem:phi-conv} to the right hand side with convolution kernel
$\gamma(\beta) := (2\pi\hbar)^{-D} |\Braket{\alpha|\alpha+\beta}|$ and $f = D + \delta$.  Thus
by choosing $Q$ large enough we have~\eqref{eq:T-contract}, as desired (since $\ell>\hbar^{1/2}$).

The bound~\eqref{eq:hinv-bd} almost follows by definition of $\|\cdot\|_Y$:
\begin{align*}
\|\Op^w(\widetilde{D}^{-1})\psi\|_X
&= \sup_\alpha |\Braket{\alpha|\Op^w(\widetilde{D}^{-1})\psi}| \\
&= \sup_\alpha (2\pi\hbar)^{-D} \int \widetilde{D}^{-1}(\beta)
|\Braket{\alpha|\beta}\,\Braket{\beta|\psi}| \diff \beta \\
&\leq \|\psi\|_Y \,\sup_\alpha \int (2\pi\hbar)^{-D} |\Braket{\alpha|\beta}|\diff\beta \\
&\leq C \|\psi\|_Y.
\end{align*}

\subsection{Proof of Lemma~\ref{lem:EXY}}
\label{sec:proofE}
Recall that
\[
E = \Op^w(g\ast \varphi_{R}) - W \Op^w(g\ast \varphi_{R}) W^\dagger.
\]
For simplicity we write $g_R := g\ast \varphi_{R}$.

We compute
\begin{align*}
\|E\psi\|_Y
&= \sup_\xi \widetilde{D}(\xi)^{-1} |\Braket{\xi|E\psi}| \\
&\leq \sup_\xi \widetilde{D}(\xi)^{-1}
\Big|(2\pi \hbar)^{-D}\int g_R(\alpha)\braket{\xi|\alpha}\braket{\alpha|\psi}\diff \alpha
- (2\pi\hbar)^{-D} \int g_R(\alpha) \braket{\xi|W|\alpha}\braket{\alpha|W^\dagger|\psi}\diff\alpha\Big|.
\end{align*}
Note that, by the identity $\Id = (2\pi\hbar)^{-D} \int \ket{\alpha}\bra{\alpha}\diff\alpha$ and the unitarity of $W$ we have
\[
g_R(\xi) \braket{\xi|\psi} = (2\pi\hbar)^{-D} \int g_R(\xi) \braket{\xi|\alpha}\braket{\alpha|\psi}\diff\alpha
= (2\pi\hbar)^{-D} \int g_R(\xi) \braket{\xi|W|\alpha}\braket{\alpha|W^\dagger|\psi}\diff \alpha\,,
\]
so that
\begin{align*}
\|E\psi\|_Y &\lsim
\sup_\xi
\widetilde{D}(\xi)^{-1} \hbar^{-D}
\Big[\int |g_R(\alpha)-g_R(\xi)| |\!\braket{\xi|\alpha}\braket{\alpha|\psi}\!|\diff \alpha
\nonumber \\
& \qquad \qquad \qquad \qquad \qquad \qquad + \int|g_R(\alpha)-g_R(\xi)| |\!\braket{\xi|W|\alpha}\braket{\alpha|W^\dagger|\psi}\!| \diff \alpha\Big].
\end{align*}
We will prove a bound on the second term, the first term having the same form but with $W = \Id$
and $\|\Id\|_{Z^M} \leq \|W\|_{Z^M}$.
First we use Lemma~\ref{lem:Z-offdiag} and a decomposition of the identity to estimate
\begin{align*}
|\Braket{\alpha |W^\dagger|\psi}|
&\leq (2\pi\hbar)^{-D} \int |\Braket{\alpha|W^\dagger|\beta}| |\Braket{\beta|\psi}| \diff \beta \\
&\leq \|\psi\|_X \|W\|_{Z^{2D+1}_r}
(2\pi\hbar)^{-D} \int (1 + r^{-1}|\alpha-\beta|)^{-2D-1} \diff \beta \\
&\lsim (\hbar^{-1/2}r)^{2D} \|\psi\|_X \|W\|_{Z^{2D+1}_r}.
\end{align*}

So to prove~\eqref{eq:E} it suffices to show that
\begin{equation}
\label{eq:integral-goal}
(2\pi\hbar)^{-D} \int |g_R(\alpha) -g_R(\xi)| |\Braket{\xi|W|\alpha}| \diff \alpha
\leq C Q^{-1} (\hbar^{1/2}r^{-1})^{2D} \|W\|_{Z^{2D+1}_r}^{-1} \widetilde{D}(\xi)\,.
\end{equation}
We have again by Lemma~\ref{lem:Z-offdiag} that
\[
|\Braket{\xi|W|\alpha}| \lsim r^{2D} \chi_r(\xi-\alpha) \|W\|_{Z^{2K}_r},
\]
where we have defined $\chi_r(\xi) := r^{-2D} (1+|\xi|/r)^{-K-2D-1}$ to be the same kind of
polynomial weight as $\varphi_r$, but with a higher order.
Upon changing variables, the integral we need to bound has the form
\[
(\hbar^{-1/2}r)^{2D} \int |g_R(\xi+\alpha) -g_R(\xi)| \chi_r(\alpha) \diff \alpha\,.
\]
Now we bound the difference in the first term by writing out the convolution:
\begin{align*}
|g_R(\xi+\alpha)-g_R(\xi)|
&= \Big| \int g(\beta) (\varphi_R(\xi+\alpha-\beta) - \varphi_R(\xi-\beta))\diff \beta\Big| \\
&= \Big| \int (g(\beta) - g(\xi+(R/r)\alpha)) (\varphi_R(\xi+\alpha-\beta) - \varphi_R(\xi-\beta))
\diff \beta\Big|  \\
&\leq \int |g(\beta) - g(\xi+(R/r)\alpha)| |\varphi_R(\xi+\alpha-\beta) - \varphi_R(\xi-\beta)|
\diff \beta \\
&\leq \int \Big(|\beta-\xi-(R/r)\alpha|^p + D(\beta) + D(\xi+(R/r)\alpha)\Big) \\
& \qquad \qquad \qquad \qquad \qquad \quad \, \times |\varphi_R(\xi+\alpha-\beta) - \varphi_R(\xi-\beta)| \diff \beta\,. 
\end{align*}
We therefore have
\[
\int |g_R(\xi+\alpha) -g_R(\xi)| \chi_r(\alpha) \diff \alpha \leq {\rm I} +
{\rm II} + {\rm III}\,,
\]
where
\begin{align}
\label{eq:I}
{\rm I} &:= \int \int|\beta-\xi-(R/r)\alpha|^p
|\varphi_R(\xi+\alpha-\beta) - \varphi_R(\xi-\beta)| \chi_r(\alpha)\diff \beta \diff \alpha
\\
\label{eq:II}
{\rm II} &:= \int \int D(\beta)
|\varphi_R(\xi+\alpha-\beta) - \varphi_R(\xi-\beta)| \chi_r(\alpha)\diff \beta \diff \alpha\\
\label{eq:III}
{\rm III} &:= \int \int D(\xi+(R/r)\alpha)
|\varphi_R(\xi+\alpha-\beta) - \varphi_R(\xi-\beta)| \chi_r(\alpha)\diff \beta \diff \alpha.
\end{align}

To bound the first integral~\eqref{eq:I} above we used the inequality~\eqref{eq:displacement}, which appropriately rescaled becomes
\begin{equation}
\label{eq:R-displacement}
|\varphi_R(\xi +\alpha-\beta) - \varphi_R(\xi-\beta)|
\leq C R^{-1} |\alpha| (1+|\alpha/R|^K) \varphi_R(\xi-\beta)\,.
\end{equation}
Therefore, using also $|\beta-\xi-t\alpha|^p \leq C_p( |\beta-\xi|^p + t^p|\alpha|^p)$, we
have \begin{align*}
{\rm I} &\leq
C R^{-1} \int \Big(\int |\beta-\xi|^p \varphi_R(\xi-\beta) \diff \beta\Big)|\alpha| (1+|\alpha/R|^K)
\chi_r(\alpha)\diff \alpha \\
&\qquad +
C R^{-1} \int \Big(\int \varphi_R(\xi-\beta) \diff \beta\Big)(R/r)^p |\alpha|^p |\alpha| (1+|\alpha/R|^K)
\chi_r(\alpha)\diff \alpha \\
&\lsim C(r/R) R^p.
\end{align*}

For the second term~\eqref{eq:II} we use~\eqref{eq:R-displacement} again:
\begin{align*}
{\rm II} &\leq C R^{-1} \int D(\beta) \varphi_R(\xi-\beta) \diff \beta
\int
|\alpha| (1+|\alpha/R|^K) \chi_r(\alpha)
\diff \alpha \\
&\leq C (D\ast \varphi_R)(\xi) (r/R)\,.
\end{align*}

For the third term~\eqref{eq:III} we note that the integral over $\beta$ can be done first, and produces a factor of $R^{-1}|\alpha|$ so that
\[
{\rm III} \leq \int D(\xi+(R/r)\alpha) R^{-1} |\alpha| \chi_r(\alpha) \diff\alpha\,.
\]
Now relabeling $\beta =(R/r)\alpha$ this is bounded by
\begin{align}
{\rm III}
&\leq (rR^{-1}) \int D(\xi+\beta) (R^{-1} |\beta|) \chi_R(\beta) \diff\beta \\
&\lsim (rR^{-1}) (D\ast \varphi_R)(\xi)\,.
\end{align}

Combining our bounds for~\eqref{eq:I},~\eqref{eq:II}, and~\eqref{eq:III} we obtain
\[
\int |g_R(\alpha)-g_R(\xi)| |\Braket{\xi|W|\alpha}| \diff \alpha
\leq C (r/R) (\hbar^{-1/2}r)^{2D} \|W\|_{Z^M_r} (D\ast \varphi_R(\xi) + R^p)\,,
\]
and this implies~\eqref{eq:integral-goal} by the definition of $R=Q\ell$ with
$\ell = r \hbar^{-2D}r^{4D} \|W\|_{Z^{2K}_r}^2$.

\section{Proofs of main results}
\label{sec:proof}

In this Section we complete the proofs of the main results.

\subsection{Proof of Theorem~\ref{thm:Wp-egorov}}

Following the argument outlined in Section~\ref{sec:outline}, we use the triangle inequality to bound
\begin{align*}
d_{W_p} (\Husimi_{\mcal{U}_T\rho}, \Phi_T \Husimi_\rho)
\leq
d_{W_p} (\Husimi_{\mcal{U}_T\rho}, \Husimi_{\mcal{U}_T\circ\mcal{N}[\rho]})
+
d_{W_p} (\Husimi_{\mcal{U}_T\circ\mcal{N}[\rho]}, \Phi_T \Husimi_{\mcal{N}[\rho]})
+
d_{W_p} (\Phi_T \Husimi_{\mcal{N}[\rho]}, \Phi_T \Husimi_\rho)\,,
\end{align*}
where $\mcal{N}$ is the noising superoperator.  By Corollary~\ref{cor:from-mixture} (proved in Section~\ref{sec:pffrommixture}) we have
\[
d_{W_p} (\Husimi_{\mcal{U}_T\circ\mcal{N}[\rho]}, \Phi_T \Husimi_{\mcal{N}[\rho]})
\leq e^{C_pT} \hbar^{1/2}.
\]
Then because $\Phi_T$ is $e^{CT}$-Lipschitz we have
\[
d_{W_p} (\Phi_T \Husimi_{\mcal{N}[\rho]}, \Phi_T \Husimi_\rho)
\leq e^{CT} d_{W_p} (\Husimi_{\mcal{N}[\rho]}, \Husimi_\rho)
\leq C_p e^{CT}\hbar^{1/2},
\]
the second inequality following as a consequence of the identity
\[
\Husimi_{\mcal{N}[\rho]} =  \Husimi_\rho \ast \gamma_{2\hbar}\,.
\]
For the remaining term we use the identity
\[
\mcal{N}[\rho] = \int \mcal{T}_\alpha[\rho]\, \gamma_\hbar(\alpha) \diff \alpha\,.
\]
Thus by convexity of the $W_p$ distance and defining $\nu := \mcal{U}_T\rho$ we have
\begin{align*}
d_{W_p} (\Husimi_{\mcal{U}_T\rho}, \Husimi_{\mcal{U}_T\circ\mcal{N}[\rho]})^p
&\leq
\int \gamma_\hbar(\alpha) d_{W_p}(\Husimi_{\mcal{U}_T\rho}, \Husimi_{\mcal{U}_T\circ \mcal{T}_\alpha \rho})^p \diff \alpha \\
&=
\int \gamma_\hbar(\alpha) d_{W_p}(\Husimi_{\nu}, \Husimi_{W_{T,\alpha} \nu W_{T,\alpha}^\dagger})^p \diff \alpha \\
&\leq C e^{C_p T}
\int \gamma_\hbar(\alpha) ((\hbar^{1/2} + |\alpha|) (1 + |\alpha|^{4D}\hbar^{-2D}))^p \diff \alpha \\
&\lsim  C e^{C_p T} \hbar^{p/2}.
\end{align*}
In the second to last line we have used the bound~\eqref{eq:husimi-shift} from Theorem~\ref{thm:main-husimi}.

\subsection{Proof of Corollary~\ref{cor:LipEgo} from Theorem~\ref{thm:Wp-egorov}}
\label{sec:corpf}
Let $G$ satisfy $\partial_a G\in\mcal{S}_{\frac12}$ for all $a\in[2D]$.
It suffices to show that for any density matrix $\rho$,
\begin{equation}
\label{eq:G-duality}
|\trace[\rho (\mcal{U}_T(\Op_\hbar(G)) - \Op_\hbar(\Phi_TG))]|\leq e^{CT} \hbar^{1/2}.
\end{equation}

In the special case $p=1$, Theorem~\ref{thm:Wp-egorov} combined with Kantorovich-Rubinstein duality implies
\[
\Big|\int G (\Husimi_{\mcal{U}_T\rho} - \Phi_T \Husimi_\rho) \diff \alpha\Big| \leq e^{CT} \hbar^{1/2}.
\]
Using Wigner-Weyl duality and the fact that $\Husimi_\rho = \Wigner_\rho \ast \gamma_\hbar$, this inequality
implies
\[
|\trace[\rho (\mcal{U}_T(\Op_\hbar(G \ast \gamma_\hbar)) - \Op_\hbar((\Phi_T G) \ast \gamma_\hbar))]|
\leq e^{CT}\hbar^{1/2}.
\]
Thus to prove~\eqref{eq:G-duality} it suffices to show that
\[
\|\,\mcal{U}_T[\Op_\hbar (G\ast \gamma_\hbar -G)]\|_{\rm op}
+
\|\Op_\hbar ( \Phi_T G\ast \gamma_\hbar - \Phi_T G)\|_{\rm op} \leq e^{CT}\hbar^{1/2}.
\]
Observe that $\mcal{U}_T$ preserves the operator norm, so a bound on each of the terms on the left hand side
will then follow from the following estimate (which is a slightly stronger form of the Calder\'on-Vaillancourt theorem):
\[
\|\Op_\hbar(F)\|_{\rm op} \lsim \|F\|_{C^0} + \hbar^{(2D+1)/2}\|\partial^{2D+1} F\|_{C^0}\,.
\]
From the Lipschitz bound on $G$ we have
\[
\|G - G\ast \gamma_\hbar\|_{C^0} \lsim \hbar^{1/2} \|\partial G\|_{C^0}\,,
\]
and in general for higher order derivatives we have
\[
\|\partial^k G - \partial^k G\ast \gamma_\hbar\|_{C^0} \lsim \hbar^{1/2} \|\partial^{k+1} G\|_{C^0}
\lsim \hbar^{1/2} \hbar^{-k/2}.
\]
Thus, for example, it follows that
\[
\|\,\mcal{U}_T[\Op_\hbar (G\ast \gamma_\hbar -G)]\|_{\rm op}
=
\|\Op_\hbar (G\ast \gamma_\hbar -G)\|_{\rm op} \lsim \hbar^{1/2}.
\]
A similar estimate applies to $\Phi_T G - \Phi_T G\ast \gamma_\hbar$, up to a loss of $e^{(2D+1)C T}$.

\section*{Acknowledgements}

The authors would like to thank Zhenhao Li and Toan Nguyen for valuable discussions.  JC is supported by an Alfred P. Sloan Foundation Fellowship.  FH is supported by the National Science Foundation under Award No.~2303094.

\printbibliography

\end{document}